\documentclass[11pt,oneside]{article}

\usepackage{amsmath,amssymb,amsthm,bm}
\usepackage{geometry}
\usepackage{mathrsfs}
\usepackage[scr=boondox]{mathalpha}
\usepackage{booktabs}
\usepackage{comment}
\usepackage[english]{babel}
\usepackage{cite}
\usepackage[hidelinks]{hyperref}

\numberwithin{equation}{section}

\newtheorem{theorem}{Theorem}[section]
\newtheorem{proposition}[theorem]{Proposition}

\newtheorem*{remark}{Remark}

\title{Fock-Space Representation of the Lebowitz--Frisch--Helfand Kinetic Model}

\author{Ilya Karlin\thanks{ikarlin@ethz.ch}\\
Department of Mechanical and Process Engineering\\
ETH Zurich, CH-8092 Zurich, Switzerland}

\date{\today}

\begin{document}
\maketitle

\begin{abstract}
We develop an exact Fock-space representation of the 
Lebowitz--Frisch--Helfand kinetic equation.  The familiar square-root
Maxwellian transformation is used only as a convenient starting point: it
maps the local Ornstein--Uhlenbeck relaxation sector to the bosonic number
operator and thereby exposes an elementary grading.  The main issue is the
representation of the complete kinetic dynamics when the velocity
realization depends on local macroscopic parameters.  We formulate the
pull-back through an arbitrary admissible realization map, show that
external space--time derivatives acquire a differential connection, and
prove an intertwining theorem for first-order propagation operators.  The
physical velocity moments are represented by dual Fock-space functionals;
a transport--moment intertwining theorem then evaluates the complete
propagation contribution without requiring the explicit differential
connection.  Compatibility between realization parameters and kinetic
moments may be imposed algebraically or propagated dynamically by exact
balance laws.  In the hydrodynamic limit, the number grading selects the
second and third Fock levels responsible for viscous stress and heat flux,
leading directly to the Navier--Stokes--Fourier constitutive terms.  Finally,
we prove covariance under local changes of coordinate realization and
illustrate it explicitly for Hermite-polynomial and Hermite-function
coordinates.  Thus the Hermite realization is computationally privileged,
while the intertwined Fock dynamics, physical moments, and compatibility
structure are representation-independent within the admissible similarity
class.
\end{abstract}

\section{Introduction}
\label{sec:introduction}

The Lebowitz--Frisch--Helfand (LFH) model is a nonlinear kinetic equation in
which the velocity-space relaxation is of Ornstein--Uhlenbeck type, while its
local drift velocity and temperature are determined self-consistently by the
moments of the evolving distribution \cite{LebowitzFrischHelfand1960}.  It is
therefore unusually transparent: for prescribed local parameters the kinetic
generator is linear, whereas the nonlinearity resides in the adaptation of
those parameters to the represented state.  This separation makes the LFH
model a useful test case for asking a representation-theoretic question that
is largely independent of the complexity of the collision model itself.

The similarity between Fokker--Planck or Kramers generators and
imaginary-time Schr\"odinger operators, as well as their oscillator and
Hermite representations, is classical \cite{UhlenbeckOrnstein1930,Kramers1940,Risken1989}.
Likewise, Hermite expansions are standard tools of kinetic theory
\cite{GradHermite1949,GradKinetic1949,ChapmanCowling1970,Liboff2003}.  None of
these facts is a novelty claim of the present work.  We use the square root
of the local Maxwellian deliberately because, for the LFH relaxation sector,
it removes the first-order velocity derivative and converts the relaxation
contribution into the number operator.  This gives the local sector an
immediate grading and allows the remainder of the kinetic structure to be
examined without additional collision-operator complications.

The bosonic Fock construction and the associated creation and annihilation
operators are equally classical \cite{Fock1932,Berezin1966,Hall2013}; the
Weyl/oscillator representation used below may also be viewed in the standard
phase-space framework \cite{Folland1989}.  Fock-space and second-quantized
formalisms have long been used outside quantum mechanics, notably for
classical stochastic many-particle and reaction processes
\cite{Doi1976,Peliti1985}.  Our use is different in emphasis.  We do not
second-quantize a particle number.  Instead, the occupation-number grading is
used to represent the velocity-space hierarchy of a one-particle kinetic
state, while the main problem is to intertwine the \emph{complete} kinetic
dynamics, including external derivatives acting through locally
parameterized coordinate realizations.

That last point is essential.  Intertwining a polynomial in velocity and
velocity derivatives is an algebraic matter once a realization of the Weyl
algebra has been fixed.  The streaming operator is different.  The local
coordinate basis depends on the fields that translate and dilate the
Maxwellian, hence on space and time.  Pulling back an external derivative
therefore differentiates both the Fock coefficients and the realization map.
The resulting differential connection is not an optional correction; it is
the additional structure required for the realization to commute with
propagation.  We first derive this connection for an arbitrary admissible
realization and only afterwards evaluate it in Hermite-function coordinates.
This order is important: Hermite functions provide a convenient coordinate
calculation, not the definition of the abstract object.

The payoff of the construction is that, once the intertwining has been
established, the kinetic problem no longer has to be organized around a
particular special-function calculus.  Relaxation, propagation, physical
moments, compatibility, grading, and covariance are encoded in the abstract
operator structure.  Hermite functions are used to construct one convenient
coordinate realization and to evaluate its differential connectors, but the
subsequent relations among these structures do not require returning to
Hermite recurrences, Gaussian identities, or operations on the special
functions themselves.  In this sense the Fock formulation isolates the
representation-independent algebraic content of the LFH model from the
coordinate machinery used to calculate one representative.

Operationally, this also separates two tasks that are entangled in the usual
self-consistent formulation.  For prescribed local fields, the kinetic
dynamics is represented at the operator level; the identification of those
fields with moments of the represented physical distribution is then a
distinct compatibility problem.  Enforcing compatibility algebraically, or
propagating it dynamically, therefore does not require reconstructing the
velocity-coordinate representation.  This separation, rather than the
oscillator analogy itself, is the principal reason for introducing Fock space
here.

A second ingredient is the representation of observables.  Physical moments
are linear functionals of the distribution and transform contragrediently to
the coordinate state.  Once their Fock-space representatives are constructed,
the intertwining of the complete propagation operator yields a useful
transport--moment theorem: every velocity moment of transport can be written
in conservative form without ever inserting the explicit differential
connection.  This observation emerged from the explicit hydrodynamic
projections and is one of the structural simplifications emphasized below.

The local parameters and the represented kinetic moments are initially kept
independent.  Compatibility is the condition that identifies them.  For the
LFH model this condition can be implemented in two equivalent ways.  It may
be imposed algebraically at each time, in which case the exact moment
balances follow from the parameterized kinetic relation; or it may be imposed
initially and propagated dynamically by evolving the local parameters with
those exact balances.  The second formulation is particularly useful for the
Fock representation because it treats the moving local realization as part
of the dynamics and suggests a route that need not rely on an explicitly
solvable moment fixed point in more general applications.

The representation is finally required to be covariant.  We distinguish the
square-root Maxwellian transformation, which is a computationally privileged
entry point, from the subsequent freedom to choose coordinates for the same
abstract Fock state.  Two local realization maps related by an invertible
similarity transformation generally have different coordinate vacua,
differential operators, external-derivative connections, and moment weights.
When these objects are transformed together, however, they pull back to the
same Fock generator and the same physical moment functionals.  The explicit
comparison of Hermite-polynomial and Hermite-function coordinates provides a
nontrivial example of this covariance.

Related nonlinear Fokker--Planck kinetic models and stochastic-particle
realizations have been developed for rarefied-gas computations
\cite{JennyTorrilhonHeinz2010,GorjiJenny2014,GorjiTorrilhon2021}.  More
broadly, linear embeddings of nonlinear dynamics have also acquired renewed
interest in quantum algorithms 
\cite{LiuEtAl2021,bravyi2025quantumsimulationnoisyclassical,bravyi2026quantumalgorithmsstochasticnonlinear}.  The present paper does not propose a quantum algorithm;
the relevance of the Fock formulation here is structural: it provides an
exact operator representation, a natural grading, and a coordinate-covariant way to organize transport and moment projections.

The paper is organized as follows.  Section~\ref{sec:LFH_equation} formulates
the parameterized LFH operator, its macroscopic moments and compatibility,
and performs the square-root Maxwellian transformation.  Section~\ref{sec:fock_space} introduces
the abstract Fock space and two coordinate realizations.  Section~\ref{sec:intertwining} develops
the general intertwining construction, with special attention to external
derivatives and the induced differential connection, and then obtains the
Fock-space LFH generator.  Section~\ref{sec:fock_compatibility} constructs physical moment functionals,
proves the transport--moment intertwining theorem, and establishes exact
compatibility propagation.  Section~\ref{sec:Chapman_Enskog} uses the number
grading to obtain the hydrodynamic limit.  Section~\ref{sec:covariance}
proves covariance under changes of coordinate realization and gives the
explicit polynomial/function comparison. Results are discussed in Section~\ref{sec:conclusion}. A detailed independent derivation
of the Hermite-function differential connection is collected in the
Appendix~\ref{app:hermite_differential_connectors}.

\section{The Lebowitz--Frisch--Helfand Kinetic Equation}
\label{sec:LFH_equation}

We begin from a parameterized family of LFH operators rather than from the
self-consistent kinetic equation.  This distinction keeps separate the
operator structure, the physical moment functionals, and the compatibility
condition that ultimately couples them.  The same separation will later be
carried through the Fock representation.

\subsection{The LFH Operator}
Throughout this work we employ Einstein's summation convention over
repeated Cartesian indices. Let us denote by $\mathcal L$ the Lebowitz--Frisch--Helfand (LFH) operator in $d$ dimensions, acting on distribution functions $f(\bm v;\bm x,t)$,
\begin{align}
    \mathcal L&=\mathcal T-\gamma\mathcal{L}_{\rm OU}.\label{eq:LFH_op}
\end{align}
Here $\mathcal T$ is the propagation operator, partial derivatives in time and space are denoted as $\partial_\mu$, $\mu\in\{t,1,\dots,d\}$,
\begin{align}
    \mathcal T&=\partial_t+v_\alpha\partial_\alpha.\label{eq:Stream_op}
\end{align}
Furthermore, 
$\gamma$ is the inverse relaxation time, and $\mathcal L_{\rm OU}$ is the Ornstein--Uhlenbeck (OU) operator describing the drift and the diffusion terms in the velocity-space,
\begin{align}
    \mathcal{L}_{{\rm OU}}\Box&=\frac{\partial}{\partial v_\alpha}
\left[
(v_\alpha-u_\alpha)\Box
\right]
+\frac{\partial^2}
{\partial v_\alpha\partial v_\alpha}
\left(
\theta
\Box\right).\label{eq:LOU_def}
\end{align}
 We consider throughout the force-free LFH model in order to keep the focus on propagation and relaxation.
The OU operators \eqref{eq:LOU_def} constitute a family parameterized by the local velocity parameter $\bm{u}(\bm x,t)$ and $\theta(\bm x,t)$. 
This parameterization is tied to the reference local Maxwellian,
\begin{equation}
 M
=
\frac{n}
{(2\pi\theta)^{d/2}}
\exp
\left(
-
\frac{
(v_\alpha-u_\alpha)
(v_\alpha-u_\alpha)
}
{2\theta}
\right),
\label{eq:MB}
\end{equation}
which is annihilated by the Ornstein--Uhlenbeck operator, 
\begin{equation}\label{eq:zero_LOU}
\mathcal L_{{\rm OU}}M=0.
\end{equation}
%
Introducing the peculiar velocity,
\begin{equation}
c_\alpha
=
v_\alpha-u_\alpha,
\label{eq:peculiar}
\end{equation}
the local Maxwellian assumes the compact form, $M=n W$, with the normalized weight $W$,

\begin{equation}\label{eq:W_maxwell}
W=\frac{1}{(2\pi\theta)^{d/2}}\exp\left(-\frac{c_\alpha c_\alpha}{2\theta}\right).
\end{equation}

\subsection{Macroscopic Fields}
\label{sec:macro_fields}

For a distribution $f$, define the physical number, momentum, and total-energy
densities independently of the parameters entering the operator:
\begin{align}
N[f]
&:=\int_{\mathbb R^d} f\,d^dv,
\label{eq:physical_number_classical}
\\
P_\alpha[f]
&:=m\int_{\mathbb R^d}v_\alpha f\,d^dv,
\label{eq:physical_momentum_classical}
\\
E[f]
&:=\frac m2\int_{\mathbb R^d}v_\alpha v_\alpha f\,d^dv.
\label{eq:physical_energy_classical}
\end{align}
These moments determine the physical mean velocity and
kinetic temperature,
\begin{equation}
 u_\alpha^{[f]}=\frac{P_\alpha}{mN},
 \qquad
 \theta^{[f]}
 =\frac{2}{mdN}
 \left(
 E-\frac{P_\alpha P_\alpha}{2mN}
 \right).
\label{eq:moment_derived_parameters}
\end{equation}
Thus the parameter triple
$\lambda=(n,\bm u,\theta)$ entering the local Maxwellian and the physical
moment triple extracted from $f$ are, at this stage, distinct objects.
For later use we also introduce the number flux, momentum flux, and total
energy flux,
\begin{align}
J_\alpha[f]
&:=\int v_\alpha f\,d^dv=\frac{P_\alpha}{m},
\\
\Pi_{\alpha\beta}[f]
&:=m\int v_\alpha v_\beta f\,d^dv,
\\
Q_\alpha[f]
&:=\frac m2\int v_\beta v_\beta v_\alpha f\,d^dv.
\label{eq:classical_fluxes}
\end{align}

\subsection{Compatibility and the LFH Equation}
\label{sec:classical_compatibility}

For prescribed smooth fields $\lambda=(n,\bm u,\theta)$, the relation
\begin{equation}
\mathcal L[\lambda]f=0
\label{eq:parameterized_LFH_relation}
\end{equation}
is a linear kinetic equation with a locally adapted Ornstein--Uhlenbeck
operator.  The nonlinear LFH equation is obtained when the parameters are
made compatible with the physical moments of the same distribution.  The
algebraic compatibility condition is
\begin{equation}
N=n,
\qquad
P_\alpha=mnu_\alpha,
\qquad
E=E_\lambda
:=\frac m2nu_\alpha u_\alpha+\frac{md}{2}n\theta .
\label{eq:classical_compatibility_fixedpoint}
\end{equation}
Equivalently, $\lambda$ is the fixed point of the moment map
$\lambda\mapsto(N,u^{[f]},\theta^{[f]})$.  Only after
Eq.~\eqref{eq:classical_compatibility_fixedpoint} is imposed do we refer to
Eq.~\eqref{eq:parameterized_LFH_relation} as the self-consistent LFH kinetic
equation.

There is an equivalent dynamical formulation.  Taking the first three
physical moments of the parameterized relation gives
\begin{align}
\partial_tN+\partial_\alpha J_\alpha
&=0,
\label{eq:classical_offcompat_density}
\\
\partial_tP_\beta+\partial_\alpha\Pi_{\beta\alpha}
+\gamma\left(P_\beta-mu_\beta N\right)
&=0,
\label{eq:classical_offcompat_momentum}
\\
\partial_tE+\partial_\alpha Q_\alpha
+\gamma\left(2E-u_\alpha P_\alpha-md\theta N\right)
&=0.
\label{eq:classical_offcompat_energy}
\end{align}
On the compatibility set the relaxation production terms vanish.  Their
vanishing by itself fixes the drift velocity and temperature relative to the
physical momentum and energy, but it does not enforce $N=n$, since the density
parameter does not enter the Ornstein--Uhlenbeck operator.  Hence algebraic
compatibility implies the exact conservative balance laws.  Conversely, on
compatible initial data one may evolve the local parameters with those
balance laws; the corresponding compatibility residuals then remain zero.
Thus the fixed-point and balance-law routes give equivalent implementations
of the LFH self-consistency on the compatibility manifold.  We shall
use the same distinction later in Fock space, where the dynamical route is
particularly natural because the local realization itself depends on
$\lambda$.

\subsection{Motivation for an Operator Formulation}

The parameterized formulation isolates the feature that makes the LFH model
useful here.  For fixed $\lambda$ the velocity-space generator is linear and
its equilibrium is the local Maxwellian $M_\lambda$, whereas the nonlinearity
enters only when $\lambda$ is adapted to the moments of $f$.  We therefore
seek an exact representation of the parameterized kinetic operator first and
postpone compatibility until the physical moment functionals have been
represented as well.

The square-root Maxwellian is chosen for a standard computational reason.  It
places multiplication by peculiar velocity and differentiation with respect
to velocity into a symmetric canonical pair and eliminates the first-order
velocity derivative from the Ornstein--Uhlenbeck sector.  This familiar
oscillator reduction is not the objective of the construction; it is the
entry point that makes the local relaxation sector elementary and leaves the
representation of propagation as the substantive problem.

\subsection{The Square-Root Maxwellian Similarity Transformation}\label{sec:LFH_transform_similarity}

\subsubsection{Definition of the Similarity Transformation}

We begin with an
exact similarity transformation that will serve as the basis for the
subsequent operator formulation.
Let $M$ denote the local Maxwellian \eqref{eq:MB} 
associated with the
hydrodynamic fields $n(\bm{x},t)$, $\bm u(\bm{x},t)$ and $\theta(\bm{x},t)$.
We introduce the transformed kinetic state
\begin{equation}
\label{eq:similarity}
f
=
M^{1/2}\,
\Phi.
\end{equation}
Equation (\ref{eq:similarity}) defines an exact and invertible
transformation between the distribution function $f$ and the transformed
state $\Phi$. With \eqref{eq:W_maxwell}, the square root of the Maxwellian assumes the form $M^{1/2}=\sqrt n\,\sqrt W$, where
\begin{equation}
\label{eq:sqrtM}
W^{1/2}=\frac{1}{(2\pi\theta)^{d/4}}\exp\!\left(-\frac{c_\alpha c_\alpha}{4\theta}\right).
\end{equation}
The transformation (\ref{eq:similarity}) transfers the Gaussian weight
from the kinetic distribution to the definition of the state itself. As
a consequence, the action of the LFH operator on $f$ is replaced by the
action of transformed differential operators on $\Phi$.

\subsubsection{Similarity Transformation of the Velocity Derivative}

The similarity transformation of the distribution function induces the corresponding similarity transformation of the LFH operator.
 We begin with derivatives with respect to the velocity
coordinates. 
Starting from the similarity transformation \eqref{eq:similarity}, 
the velocity derivative of the distribution function becomes
\begin{align}
\frac{\partial f}{\partial v_\alpha}=
M^{1/2}
\left(
\frac{\partial}{\partial v_\alpha}
-
\frac{c_\alpha}{2\theta}
\right)\Phi.
\label{eq:velocity_similarity_action}
\end{align}
Multiplying by $M^{-1/2}$ from the left yields
\begin{equation}
M^{-1/2}
\frac{\partial \left(M^{1/2}\Phi\right)}{\partial v_\alpha}
=
\left(
\frac{\partial}{\partial v_\alpha}
-
\frac{c_\alpha}{2\theta}
\right)\Phi.
\label{eq:velocity_similarity_action2}
\end{equation}
Since this identity holds for an arbitrary state $\Phi$, the
corresponding operator identity is
\begin{equation}
M^{-1/2}
\frac{\partial}
{\partial v_\alpha}
M^{1/2}
=
\frac{\partial}
{\partial v_\alpha}
-
\frac{c_\alpha}{2\theta}.
\label{eq:similarity_velocity}
\end{equation}
Equation (\ref{eq:similarity_velocity}) shows that ordinary
velocity differentiation is transformed into a shifted differential
operator consisting of the derivative itself together with a linear
multiplication operator.

\subsubsection{Similarity Transformation of the Streaming Operator}\label{sec:LFH_transform_dxdt}

We next consider differentiation with respect to the spatial and temporal coordinates.
Unlike the velocity derivative, the spatial and temporal derivative acts both on the
transformed kinetic state and on the hydrodynamic fields entering the
local Maxwellian. Consequently, the similarity transformation generates
additional terms involving the spatial gradients or time derivatives of the macroscopic
fields that parameterize the Maxwellian. 
For notational convenience, let
$
\partial_\mu
\in
\left\{
{\partial_t},
{\partial_\alpha}
\right\}$
denote a generic differential operator acting on the
space--time variables only. Since $\partial_\mu$ commutes with differentiation in
velocity space, the following derivation applies equally to temporal and
spatial differentiation.
Starting from the similarity transform \eqref{eq:similarity}, 
one finds
\begin{align}
\partial_\mu f
=
M^{1/2}
\left(
\partial_\mu
+
{M^{-1/2}}
\partial_\mu \left(M^{1/2}\right)\right)\Phi.
\end{align}
Hence,
\begin{equation}
M^{-1/2}
\partial_\mu
M^{1/2}
=
{\partial_\mu}
+
\Gamma_\mu,\ \mu\in\{t,1,\dots,d\},
\label{eq:Gamma_xt}
\end{equation}
where
\begin{equation}
\Gamma_\mu
=
\frac12
\partial_\mu
\ln M.
\label{eq:def_Gamma_xt}
\end{equation}
Using the explicit form of the Maxwellian, one obtains
\begin{align}
\Gamma_\mu
=
\frac{\partial_\mu n}{2n}
-\frac{d}{4\theta}\partial_\mu\theta
+\frac{c_\beta}{2\theta}\partial_\mu u_\beta
+\frac{c_\beta c_\beta}{4\theta^{2}}\partial_\mu\theta,
\qquad \mu\in\{t,1,\dots,d\}.
\label{eq:Gamma_xt_explicit}
\end{align}
%
We now apply the elementary similarity identities established in the
preceding subsections to the streaming part $\mathcal T$ of the LFH operator \eqref{eq:Stream_op}.
Since multiplication by the particle velocity commutes with the
similarity transformation,
\begin{equation}
M^{-1/2}v_\alpha M^{1/2}=v_\alpha,
\label{eq:v_commutes}
\end{equation}
only the differential operators are modified.
Using similarity identities \eqref{eq:Gamma_xt} and \eqref{eq:Gamma_xt_explicit}, we obtain
\begin{equation}
M^{-1/2}
\mathcal{T}
M^{1/2}
=
\frac{\partial}{\partial t}
+
v_\alpha
\frac{\partial}{\partial x_\alpha}
+
\Gamma_t
+
v_\alpha\Gamma_\alpha.
\label{eq:streaming_similarity_final}
\end{equation}

Equation~(\ref{eq:streaming_similarity_final}) exhibits the structure of
the transformed streaming operator in its simplest form. The original
kinetic transport operator is preserved, while the similarity
transformation generates additional contributions through the
space--time dependence of the local Maxwellian.

\subsubsection{Similarity Transformation of the Ornstein--Uhlenbeck Operator}

We finally consider the similarity transformation of the
Ornstein--Uhlenbeck relaxation operator \eqref{eq:LOU_def} which we write in the divergence form
\begin{equation}
\mathcal{L}_{OU}
=
\frac{\partial}{\partial v_\alpha}
\left(
c_\alpha
+
\theta
\frac{\partial}{\partial v_\alpha}
\right).
\label{eq:OU_operator}
\end{equation}

The similarity transformation is obtained by transforming first the
operator inside the parentheses and subsequently the outer velocity
derivative.

Since multiplication by the peculiar velocity commutes with the
similarity transformation, and taking into account \eqref{eq:similarity_velocity},
the inner operator becomes
\begin{align}
M^{-1/2}
\left(
c_\alpha
+
\theta
\frac{\partial}{\partial v_\alpha}
\right)
M^{1/2}
&=
c_\alpha
+
\theta
\left(
\frac{\partial}{\partial v_\alpha}
-
\frac{c_\alpha}{2\theta}
\right)
\nonumber\\
&=
\theta
\frac{\partial}{\partial v_\alpha}
+
\frac12 c_\alpha .
\label{eq:OU_inner}
\end{align}

Applying next the transformed outer derivative yields

\begin{align}
M^{-1/2}
\mathcal{L}_{OU}
M^{1/2}
&=
\left(
\frac{\partial}{\partial v_\alpha}
-
\frac{c_\alpha}{2\theta}
\right)
\left(
\theta
\frac{\partial}{\partial v_\alpha}
+
\frac12 c_\alpha
\right).
\label{eq:OU_factorized}
\end{align}

Equation~(\ref{eq:OU_factorized}) is the exact similarity-transformed
Ornstein--Uhlenbeck operator in factorized form. Expanding the product
gives

\begin{align}
M^{-1/2}
\mathcal{L}_{OU}
M^{1/2}
&=
\theta
\frac{\partial^2}{\partial v_\alpha^2}
+
\frac12
\frac{\partial c_\alpha}{\partial v_\alpha}
+
\frac12
c_\alpha
\frac{\partial}{\partial v_\alpha}
-
\frac12
c_\alpha
\frac{\partial}{\partial v_\alpha}
-
\frac{c_\alpha c_\alpha}{4\theta}.
\end{align}
Since
\[
\frac{\partial c_\alpha}{\partial v_\beta}
=
\delta_{\alpha\beta},
\]
one has
\[
\frac{\partial c_\alpha}{\partial v_\alpha}
=
d,
\]
and the two mixed terms cancel identically, yielding
\begin{equation}
M^{-1/2}
\mathcal{L}_{OU}
M^{1/2}
=
\theta
\frac{\partial^2}{\partial v_\alpha\partial v_\alpha}
+
\frac{d}{2}
-
\frac{c_\alpha c_\alpha}{4\theta}.
\label{eq:OU_final}
\end{equation}
Thus, after the similarity transformation, the drift term disappears
completely. The transformed operator consists only of the velocity-space
Laplacian, a constant contribution, and a quadratic multiplication
operator. The square-root choice is computationally distinguished because it eliminates the first-order velocity derivative.  This classical simplification is used here only to expose the subsequent Fock grading.  

\subsection{The Transformed LFH Operator}

We now collect the results obtained in the preceding subsections. The
streaming operator \eqref{eq:streaming_similarity_final} and the  Ornstein--Uhlenbeck operator \eqref{eq:OU_final} have each been
transformed separately under the square-root Maxwellian similarity
transformation. Combining the results yields the complete transformed
LFH operator ${\mathcal K}$,
\begin{equation}\label{eq:LFH_transformed_op}
    {\mathcal K}=\partial_t + v_\alpha{\partial_\alpha} + (\Gamma_t + v_\alpha\Gamma_\alpha) -
\gamma \left( \theta \frac{\partial}{\partial{v_\alpha}}\frac{\partial}{\partial{v_\alpha}} + \frac{d}{2} - \frac{c_\alpha c_\alpha}{4\theta} \right).
\end{equation}


The transformed LFH operator \eqref{eq:LFH_transformed_op} features a remarkable structural property.
Let us introduce the following pair of operators,
\begin{align}
\mathcal A_\alpha&=
\sqrt{\theta}\, \frac{\partial}{\partial v_\alpha} + \frac{1}{2\sqrt{\theta}}\,c_\alpha,
\label{eq:A_annihilation}
\\
\mathcal A_\alpha^\dagger&=
-\sqrt{\theta}\,\frac{\partial}{\partial v_\alpha} + \frac{1}{2\sqrt{\theta}}\,c_\alpha.
\label{eq:A_creation}
\end{align}
Direct computation recasts the operator \eqref{eq:LFH_transformed_op} as
\begin{equation}\label{eq:LFH_transformed_op_repr}
    {\mathcal K}={\partial_t} + v_\alpha{\partial_\alpha}
    + (\Gamma_t + v_\alpha\Gamma_\alpha)  +
\gamma \mathcal A^\dagger_\alpha \mathcal A_\alpha
\end{equation}
%
We also note that operators \eqref{eq:A_annihilation} and \eqref{eq:A_creation} satisfy commutation relations of Weyl algebra,
\begin{equation}
\begin{aligned}
{}[\mathcal A_\alpha,\mathcal A_\beta]
&=0,
\\
[\mathcal A_\alpha^\dagger,\mathcal A_\beta^\dagger]
&=0,
\\
[\mathcal A_\alpha,\mathcal A_\beta^\dagger]
&=
\delta_{\alpha\beta}.
\label{eq:A_ladderCCR}
\end{aligned}
\end{equation}
Equation~\eqref{eq:LFH_transformed_op_repr} is a velocity-coordinate
realization of the Weyl algebra.  The purpose of the next sections is to
separate this coordinate realization from the underlying algebraic state and
to pull back the complete parameterized kinetic dynamics to an abstract Fock
space.  The local velocity operators are intertwined algebraically, whereas
the propagation sector requires additional care because the coordinate
realization itself depends on the local fields.  We therefore first introduce
the abstract Fock representation and admissible coordinate realizations, and
then derive the pull-back of both local operators and external derivatives.
In this formulation the Fock ket depends only on $(\bm x,t)$; velocity
dependence is carried by the chosen coordinate realization.

For later use, we split operator \eqref{eq:LFH_transformed_op_repr} as follows,
\begin{equation}
    \label{eq:LFH_split}
    \mathcal K=\mathcal K_{\rm tr}+\mathcal K_{\rm con}+\mathcal{K}_{\rm OU},
\end{equation}
where 
\begin{align}
    \mathcal K_{\rm tr}&={\partial_t} + v_\alpha{\partial_\alpha},\label{eq:Ktr}\\
    \mathcal K_{\rm con}&=\Gamma_t + v_\alpha\Gamma_\alpha,\label{eq:Kcon}\\
    \mathcal{K}_{\rm OU}&=
\gamma \mathcal A^\dagger_\alpha \mathcal A_\alpha.\label{eq:KOU}
\end{align}
We refer to $\mathcal K_{\rm tr}$ as the space--time propagation sector.  The operator $\mathcal K_{\rm con}$ contains the local-Maxwellian connection generated by the square-root transformation, while $\mathcal K_{\rm OU}$ is the transformed Ornstein--Uhlenbeck relaxation sector.

\section{The Fock Representation of the Kinetic Space}\label{eq:Fock_space}

\subsection{The Canonical Operator Algebra and Fock Space}\label{sec:fock_space}
This subsection records the small amount of Fock-space machinery needed below. 
The Fock space $\mathscr H$ is understood here as a complex Hilbert space, as is
standard for the canonical operator algebra and the associated adjoint
operation. 
We use Dirac notation only as compact Hilbert-space notation.  A vector of $\mathscr H$ is written as a \emph{ket} $|\chi\rangle$; its Hilbert-space dual is the \emph{bra} $\langle\chi|$, and $\langle\chi|\Psi\rangle$ is the Fock inner product.  A one-input linear map $\widehat{\mathcal O}:\mathscr H\to\mathscr H$ will be called an \emph{operator} and acts as $\widehat{\mathcal O}|\chi\rangle$.  No quantum-mechanical interpretation is assumed.
\begin{remark}
The LFH kinetic equation considered in this work has real-valued
coefficients and preserves the real subspace of the representation.
Accordingly, the physical kinetic states may be restricted to the real
sector of the Fock space whenever no complexification is required.
\end{remark}

We introduce the canonical operators 
$\hat a_\alpha$
and
$\hat a_\alpha^\dagger$ acting on a Fock space $\mathscr H$ and 
satisfying the commutation relations
\begin{equation}
\begin{aligned}
{}[\hat a_\alpha, \hat a_\beta]
&=0,
\\
[\hat a_\alpha^\dagger, \hat a_\beta^\dagger]
&=0,
\\
[\hat a_\alpha, \hat a_\beta^\dagger]
&=
\delta_{\alpha\beta}.
\label{eq:ladderCCR}
\end{aligned}
\end{equation}
The construction of the Fock space begins with a distinguished vacuum state $|0\rangle$ satisfying
\begin{equation}
\hat a_\alpha|0\rangle=0,
\qquad
\alpha=1,\ldots,d.
\label{eq:vacuum}
\end{equation}
The simultaneous kernel of all annihilation operators $\hat{a}_\alpha$ is one-dimensional.
Choosing its normalized vector defines the vacuum state. The Fock basis is
then generated by repeated application of the operators
$\hat a_\alpha^\dagger$.
For the multi-index
\[
\bm n=(n_1,\ldots,n_d), \qquad n_\alpha=0,1,2,\ldots,
\]
with
\begin{equation}
\bm n!:=\prod_{\alpha=1}^{d}n_\alpha!,
\qquad
|\bm n|:=\sum_{\alpha=1}^{d}n_\alpha.
\label{eq:multiindex_conventions}
\end{equation}
the normalized Fock basis vectors are defined by
\begin{equation}
|\bm n\rangle
=
\prod_{\alpha=1}^{d}
\frac{1}
{\sqrt{n_\alpha!}}\left(\hat a_\alpha^\dagger\right)^{n_\alpha}
|0\rangle .
\label{eq:number_states}
\end{equation}
The vectors
$|\bm n\rangle$
form a complete orthonormal basis,
\begin{align}
\langle\bm m|\bm n\rangle
&=
\delta_{\bm m,\bm n},
\label{eq:orthogonality}
\\
\sum_{\bm n}
|\bm n\rangle
\langle\bm n|
&=
\widehat{I} .
\label{eq:completeness}
\end{align}
The canonical generators act algebraically on this basis,
\begin{align}
\hat a_\alpha
|\bm n\rangle
&=
\sqrt{n_\alpha}\,
|\bm n-\bm e_\alpha\rangle ,
\label{eq:a_action}
\\
\hat a_\alpha^\dagger
|\bm n\rangle
&=
\sqrt{n_\alpha+1}\,
|\bm n+\bm e_\alpha\rangle ,
\label{eq:adag_action}
\end{align}
where
$\bm e_\alpha$
denotes the unit multi-index in the
$\alpha$-th direction.
Consequently, the number operator
\begin{equation}
\hat{\mathcal N}
=
\sum_{\alpha=1}^{d}\hat a_\alpha^\dagger \hat a_\alpha
\label{eq:number_operator}
\end{equation}
acts diagonally,
\begin{equation}
\hat{ \mathcal N}|\bm n\rangle
=|\bm n|
|\bm n\rangle,
\label{eq:number_operator_action}
\end{equation}
where $|\bm n|$ is defined in Eq.~\eqref{eq:multiindex_conventions}.
Any state vector can be uniquely represented in the Fock basis \eqref{eq:number_states}. 
Our objective is to represent pertinent kinetic states as 
\begin{equation}
|\Psi(\bm x,t)\rangle = \sum_{\bm n} \psi_{\bm n}(\bm x,t) |\bm n\rangle.
\label{eq:Fock_expansion}
\end{equation}
The coefficients $\psi_{\bm n}(\bm x,t)$ of the Fock-space representation \eqref{eq:Fock_expansion} depend only on space and time, 
while the vectors $|\bm n\rangle$ span the internal Fock-space degrees of freedom
generated by the canonical operator algebra.
%
No coordinate representation of the Fock space has yet been introduced.
In particular, the relation between the Fock-space state
$|\Psi(\bm x,t)\rangle$ and a velocity-dependent coordinate state, such as
the square-root-transformed field $\Phi(\bm v,\bm x,t)$ of
Eq.~\eqref{eq:similarity}, remains to be established.  We now introduce this
relation through a local realization map and subsequently intertwine the
kinetic operators with their Fock-space representatives.

\subsection{Admissible Coordinate Realizations of Fock States}
\label{sec:admissible_realizations}

An admissible coordinate realization is an invertible linear map
\begin{equation}
\mathcal R:\mathscr H\longrightarrow\mathscr F
\label{eq:realization_map}
\end{equation}
from the abstract Fock space to a chosen local function space.  Its coordinate
image of an abstract state is denoted $\mathcal R|\Psi\rangle$.  At this point this is only a
change of representation of states; the corresponding operator intertwining
will be formulated in Sec.~\ref{sec:intertwining_general}.

For a realization of the canonical algebra we keep track of the distinguished
vacuum and generators,
\begin{equation}
\mathcal V_{\mathcal R}:=\mathcal R|0\rangle,
\qquad
\mathscr a^{(\mathcal R)}_\alpha
:=\mathcal R\hat a_\alpha\mathcal R^{-1},
\qquad
\mathscr a^{(\mathcal R)\dagger}_\alpha
:=\mathcal R\hat a_\alpha^\dagger\mathcal R^{-1}.
\label{eq:maxwell_realization_vacuum_generators}
\end{equation}
The vacuum relation is inherited automatically,
\begin{equation}
\mathscr a^{(\mathcal R)}_\alpha\mathcal V_{\mathcal R}=0,
\end{equation}
and the coordinate basis is generated by
\begin{equation}
\mathcal R|\bm n\rangle
=
\prod_{\alpha=1}^{d}
\frac{\bigl(\mathscr a^{(\mathcal R)\dagger}_\alpha\bigr)^{n_\alpha}}
{\sqrt{n_\alpha!}}\,\mathcal V_{\mathcal R}.
\label{eq:maxwell_realization_basis_from_vacuum}
\end{equation}

The connection between an abstract Fock state and a velocity-coordinate
state is therefore
\begin{equation}
\Phi_{\mathcal R}(\bm v,\bm x,t)
=\mathcal R(\bm x,t)|\Psi(\bm x,t)\rangle
=\sum_{\bm n}\psi_{\bm n}(\bm x,t)\,\mathcal R|\bm n\rangle.
\label{eq:Fock_expansion_coordinate}
\end{equation}
The coordinate representative $\Phi_{\mathcal R}$ should not be
identified with the physical distribution $f$, nor with the particular
square-root state $\Phi$ of Eq.~\eqref{eq:similarity} for an arbitrary
choice of $\mathcal R$.  The Hermite-function realization introduced below
will be used as the reference realization for which the physical
reconstruction is already known.  Any admissible realization related to it
by an invertible local similarity transformation inherits its reconstruction
map by transforming back to that reference coordinate state.  The resulting
coordinate-independent map from the abstract ket to $f$ is made explicit in
Sec.~\ref{sec:coordinate_realizations_equivalence}; the broader covariance
statement is deferred to Sec.~\ref{sec:covariance}.

Different admissible maps may give different coordinate functions for the
same abstract ket.  Two examples needed below are the Hermite-polynomial and
Hermite-function realizations.

\subsubsection{Hermite Polynomial Realization}
\label{sec:Hermite_poly}

The coordinate images of the Fock basis can now be identified explicitly.
We first consider the coordinate realization associated with the vacuum
pairing
\begin{equation}
    \mathcal R_P|0\rangle = 1.\label{eq:P_vacuum}
\end{equation}
Introducing the dimensionless peculiar velocity
\begin{equation}
    \xi_\alpha
    =
    \frac{c_\alpha}{\sqrt{\theta}},
\end{equation}
the coordinate realization of the canonical operators becomes
\begin{align}
    \mathscr a^{(P)}_\alpha
    &=
    \frac{\partial}{\partial \xi_\alpha},
    \\
    \mathscr a_\alpha^{(P)\dagger}
    &=
    -\frac{\partial}{\partial \xi_\alpha}
    +
    \xi_\alpha.
\end{align}
Accordingly, the coordinate image of the Fock basis is
\begin{equation}
    \phi_{\bm n}^{(P)}(\bm\xi)
    :=
    \mathcal R_P|\bm n\rangle
    =
    \frac{1}{\sqrt{\bm n!}}
    \prod_{\alpha=1}^{d}
    \left(
    \xi_\alpha
    -
    \frac{\partial}{\partial \xi_\alpha}
    \right)^{n_\alpha}
    1.
    \label{eq:coordinate_fock_basis_hermite}
\end{equation}
We use throughout the physicists' Hermite polynomials,
\begin{equation}
    H_n(z)
    =
    (-1)^n e^{z^2}
    \frac{d^n}{dz^n}
    e^{-z^2}.
    \label{eq:hermite_rodrigues}
\end{equation}
For the ladder normalization used above, the corresponding creation
identity is
\begin{equation}
    \left(
    \xi-\frac{d}{d\xi}
    \right)^n1
    =
    2^{-n/2}
    H_n\left(\frac{\xi}{\sqrt2}\right).
    \label{eq:hermite_creation_identity}
\end{equation}
Consequently,
\begin{equation}
    \phi_{\bm n}^{(P)}(\bm\xi)
    =
    \frac{
    H_{\bm n}\!\left(\bm\xi/\sqrt2\right)
    }{
    \sqrt{2^{|\bm n|}\bm n!}
    } ,
    \label{eq:coordinate_fock_hermite}
\end{equation}
where
\begin{equation}
    H_{\bm n}(\bm z)
    =
    \prod_{\alpha=1}^{d}
    H_{n_\alpha}(z_\alpha).
\end{equation}

The natural weight is the normalized Maxwellian $W$ of
Eq.~\eqref{eq:W_maxwell}.  With the normalization in
Eq.~\eqref{eq:coordinate_fock_hermite},
\begin{equation}
    \int_{\mathbb R^d}
    W(\bm v;\bm u,\theta)\,
    \phi_{\bm m}^{(P)}(\bm\xi)
    \phi_{\bm n}^{(P)}(\bm\xi)
    \,d^dv
    =
    \delta_{\bm m,\bm n}.
    \label{eq:polynomial_basis_orthogonality}
\end{equation}
Thus $\mathcal R_P$ is naturally viewed as a realization in
$L^2(W\,d^dv)$.

\subsubsection{Hermite Function Realization}
\label{sec:Hermite_function}

The Hermite function realization is obtained from the Hermite polynomial
realization by multiplication with the square root of the normalized
Maxwellian weight. Define
\begin{equation}
    \mathcal S_W:\Phi(\bm v)
    \mapsto
    \sqrt{W(\bm v,\bm x,t)}\,\Phi(\bm v),
    \qquad
    \mathcal R_H
    =
    \mathcal S_W\mathcal R_P.
    \label{eq:Hermite_similarity_map}
\end{equation}
With Eq.~\eqref{eq:sqrtM}, one has
\begin{equation}
    \sqrt{W}
    =
    \frac{1}{(2\pi\theta)^{d/4}}
    \exp\left(
    -\frac{\xi_\alpha\xi_\alpha}{4}
    \right).
\end{equation}
The similarity-transformed ladder operators are
\begin{align}
    \mathscr a^{(H)}_\alpha
    &=
    \mathcal S_W
    \mathscr a^{(P)}_\alpha
    \mathcal S_W^{-1}
    =
    \frac{\partial}{\partial \xi_\alpha}
    +
    \frac{\xi_\alpha}{2},\label{eq:a_H_annihilation}
    \\
    \mathscr a_\alpha^{(H)\dagger}
    &=
    \mathcal S_W
    \mathscr a_\alpha^{(P)\dagger}
    \mathcal S_W^{-1}
    =
    -\frac{\partial}{\partial \xi_\alpha}
    +
    \frac{\xi_\alpha}{2}.\label{eq:a_H_creation}
\end{align}
We note that the ladder operators in the Hermite functions coordinate realization coincide with the previously introduced operators \eqref{eq:A_annihilation} and \eqref{eq:A_creation}, and that the  annihilation operator \eqref{eq:a_H_annihilation} is adapted to the square root Gaussian vacuum:
\begin{equation}
    \mathscr a^{(H)}_\alpha\sqrt W=0.
\end{equation}
The corresponding coordinate images of the Fock basis are
\begin{equation}
    \varphi_{\bm n}^{(H)}
    :=
    \mathcal R_H|\bm n\rangle
    =
    \frac{1}{\sqrt{\bm n!}}
    \prod_{\alpha=1}^{d}
    \left(
    \mathscr a_\alpha^{(H)\dagger}
    \right)^{n_\alpha}
    \sqrt W.
    \label{eq:hermite_function_basis}
\end{equation}
Using
\begin{equation}
    \left(
    -\frac{d}{d\xi}
    +
    \frac{\xi}{2}
    \right)^n
    e^{-\xi^2/4}
    =
    2^{-n/2}
    H_n\left(\frac{\xi}{\sqrt2}\right)
    e^{-\xi^2/4},
    \label{eq:scaled_hermite_identity}
\end{equation}
we obtain
\begin{equation}
    \varphi_{\bm n}^{(H)}(\bm\xi)
    =
    \sqrt W\,
    \frac{
    H_{\bm n}\!\left(\bm\xi/\sqrt2\right)
    }{
    \sqrt{2^{|\bm n|}\bm n!}
    }
    =
    \sqrt W\,
    \phi_{\bm n}^{(P)}(\bm\xi).
    \label{eq:hermite_function_explicit}
\end{equation}

The Hermite-function realization is orthonormal in the ordinary
Lebesgue space $L^2(d^dv)$:
\begin{equation}
    \int_{\mathbb R^d}
    \varphi_{\bm m}^{(H)}(\bm v)
    \varphi_{\bm n}^{(H)}(\bm v)
    \,d^dv
    =
    \delta_{\bm m,\bm n}.
    \label{eq:hermite_function_orthogonality}
\end{equation}

The two coordinate realizations of the same abstract Fock basis can
therefore be summarized as
\begin{equation}
\begin{aligned}
\mathcal R_P|\bm n\rangle
&=
\phi_{\bm n}^{(P)}(\bm\xi)
=
\frac{
H_{\bm n}\!\left(\bm\xi/\sqrt2\right)
}{
\sqrt{2^{|\bm n|}\bm n!}
},
&
\mathcal R_P|0\rangle
&=
1,
\\[2mm]
\mathcal R_H|\bm n\rangle
&=
\varphi_{\bm n}^{(H)}(\bm\xi)
=
\sqrt W\,
\frac{
H_{\bm n}\!\left(\bm\xi/\sqrt2\right)
}{
\sqrt{2^{|\bm n|}\bm n!}
},
&
\mathcal R_H|0\rangle
&=
\sqrt W .
\end{aligned}
\label{eq:two_fock_coordinate_realizations}
\end{equation}

\subsubsection{Coordinate Realizations and Representation Equivalence}
\label{sec:coordinate_realizations_equivalence}

The two bases above are coordinate images of the same abstract Fock basis and
are related by
\begin{equation}
\mathcal R_H=\mathcal S_W\mathcal R_P,
\qquad
\mathcal S_W\Phi=\sqrt W\,\Phi.
\label{eq:coordinate_realization_similarity}
\end{equation}
Consequently their coordinate ladder operators are related by conjugation and
both realize the same canonical commutation relations.  This elementary
example already separates the abstract Fock state from its coordinates: the
vacuum may be represented by $1$ or by $\sqrt W$ without changing the
underlying ket $|0\rangle$.

The same distinction is required when reconstructing the physical
distribution.  In the Hermite-function realization the coordinate state is
the square-root-transformed state of Eq.~\eqref{eq:similarity}, so that
\begin{equation}
f=\sqrt n\,\sqrt W\,\Phi_H,
\qquad
\Phi_H=\mathcal R_H|\Psi\rangle.
\label{eq:physical_reconstruction_H}
\end{equation}
For any admissible realization $\mathcal R$ in the same similarity class,
$\Phi_{\mathcal R}=\mathcal R|\Psi\rangle$ is first transformed back to
$\Phi_H$.  Thus
\begin{equation}
\mathcal B_{\mathcal R}
:=\sqrt n\,\sqrt W\,\mathcal R_H\mathcal R^{-1},
\qquad
f=\mathcal B_{\mathcal R}\Phi_{\mathcal R}.
\label{eq:physical_reconstruction_general}
\end{equation}
The composite map from the abstract Fock state to the physical distribution,
\begin{equation}
\mathcal Q
:=\mathcal B_{\mathcal R}\mathcal R
=\sqrt n\,\sqrt W\,\mathcal R_H,
\qquad
f=\mathcal Q|\Psi\rangle,
\label{eq:physical_realization_Q}
\end{equation}
is therefore independent of the coordinate representative within this
similarity class.  No separate reconstruction rule has to be guessed for each
realization.  The dependence of the factors on the chosen coordinates, and
the invariance of their product, will be revisited in
Sec.~\ref{sec:covariance}.

\section{Fock-Space Formulation of the LFH Operator}\label{sec:intertwining}
\subsection{General Setting for Operator Intertwining}
\label{sec:intertwining_general}

\subsubsection{Local operators}

Let $\mathcal R(\lambda)$ be an admissible invertible realization on a common
dense algebraic core, where $\lambda$ denotes the local realization
parameters.  A coordinate operator $\mathcal O$ and a Fock-space operator
$\widehat{\mathcal O}$ are \emph{intertwined} by $\mathcal R$ when
\begin{equation}
\mathcal R\widehat{\mathcal O}=\mathcal O\mathcal R,
\qquad
\widehat{\mathcal O}=\mathcal R^{-1}\mathcal O\mathcal R.
\label{eq:general_intertwining}
\end{equation}
For the canonical generators,
\begin{equation}
\mathcal R\hat a_\alpha
=\mathscr a_\alpha^{(\mathcal R)}\mathcal R,
\qquad
\mathcal R\hat a_\alpha^\dagger
=\mathscr a_\alpha^{(\mathcal R)\dagger}\mathcal R.
\label{eq:ladder_intertwining}
\end{equation}
Because the ladder operators generate the Weyl algebra, every polynomial
$P$ obeys
\begin{equation}
\mathcal R P(\hat a,\hat a^\dagger)
=
P(\mathscr a^{(\mathcal R)},\mathscr a^{(\mathcal R)\dagger})\mathcal R.
\label{eq:algebraic_intertwining}
\end{equation}
Coefficients of $P$ may depend on $\lambda$ and its external derivatives;
this does not affect the local pull-back because those coefficients act by
multiplication with respect to the velocity coordinate.

\subsubsection{External derivatives and the differential connection}
\label{sec:diff_connection}

The extension to propagation is less immediate.  Let
$\partial_\mu$, $\mu\in\{t,1,\ldots,d\}$, denote an external derivative and
assume that $\mathcal R=\mathcal R(\lambda(\bm x,t))$.  The Leibniz rule gives
\begin{align}
\partial_\mu\bigl(\mathcal R|\Psi\rangle\bigr)
&=(\partial_\mu\mathcal R)|\Psi\rangle
+\mathcal R\partial_\mu|\Psi\rangle.
\label{eq:derivative_realization_product}
\end{align}
Hence the pull-back of the external derivative is the composition operator
\begin{equation}
\widehat{\partial}_\mu
:=\mathcal R^{-1}\circ\partial_\mu\circ\mathcal R
=\partial_\mu+\widehat{\mathcal G}_\mu,
\qquad
\widehat{\mathcal G}_\mu
:=\mathcal R^{-1}(\partial_\mu\mathcal R).
\label{eq:external_diff_pullback}
\end{equation}
We keep the global parameter notation
$\lambda=(n,\bm u,\theta)$.  For the polynomial and Hermite-function
realizations used below, $\mathcal R$ is independent of $n$, so the chain
rule reduces to
\begin{equation}
\widehat{\mathcal G}_\mu
=
\mathcal R^{-1}
\left(
\frac{\partial\mathcal R}{\partial u_\beta}\partial_\mu u_\beta
+
\frac{\partial\mathcal R}{\partial\theta}\partial_\mu\theta
\right).
\label{eq:diff_connector}
\end{equation}
We call $\widehat{\mathcal G}_\mu$ the \emph{differential connection}.  It
should be distinguished from the local-Maxwellian connection $\Gamma_\mu$ of
Sec.~\ref{sec:LFH_transform_similarity}.  The latter belongs to the
square-root-transformed kinetic generator; the former is induced by the
space--time dependence of the chosen coordinate realization.  They have
different origins even though both contain derivatives of the local
parameters.

The construction is explicit for an arbitrary realization.  If
\begin{equation}
\varphi_{\bm n}(\bm v;\lambda)=\mathcal R(\lambda)|\bm n\rangle,
\label{eq:coordinate_basis_expanded_general}
\end{equation}
then differentiation and re-expansion in the same basis gives
\begin{equation}
\frac{\partial\varphi_{\bm n}}{\partial\lambda}
=
\sum_{\bm m}\varphi_{\bm m}
\left\langle\bm m\left|
\mathcal R^{-1}\frac{\partial\mathcal R}{\partial\lambda}
\right|\bm n\right\rangle.
\label{eq:diff_connection_general}
\end{equation}
Here $\partial/\partial\lambda$ is used compactly to denote
differentiation with respect to any one of the local realization parameters;
no summation over a parameter label is implied.  Thus the connection is not a formal symbol: its matrix elements are exactly
the coefficients obtained by differentiating the coordinate basis.

Finally, let $P$ be a local polynomial in the coordinate Weyl generators.
Preserving operator ordering, the pull-back of a first-order quasilinear term
is
\begin{equation}
\mathcal R^{-1}
\left[P(\mathscr a^{(\mathcal R)},\mathscr a^{(\mathcal R)\dagger})
\partial_\mu\right]\mathcal R
=
P(\hat a,\hat a^\dagger)
\left[\partial_\mu+\mathcal R^{-1}(\partial_\mu\mathcal R)\right].
\label{eq:D_intertwining}
\end{equation}
Equation~\eqref{eq:D_intertwining} is the extension of algebraic
intertwining needed for kinetic propagation.

\subsection{Fock-Space Representation of the Local Operator Sector}
\label{sec:local_operator_sectors}

We now begin the sector-by-sector construction of the Fock-space representation of the
transformed LFH operator. In the following, we choose to work with the Hermite-function coordinate realization of Sec.~\ref{sec:Hermite_function}. The local sector will be considered first.

\subsubsection{Local Maxwellian Connection Sector}
\label{sec:connector_intertwining}
For the pull-back of the local Maxwellian connection, it is sufficient to note that the multiplication by the peculiar velocity $\bm c$ is represented by the operators \eqref{eq:a_H_annihilation} and \eqref{eq:a_H_creation} as 
\begin{equation}
     c_\alpha = \sqrt\theta\left(\mathscr a^{(H)}_\alpha+\mathscr a_\alpha^{(H)\dagger}\right).
\end{equation}
Consequently, the corresponding intertwined Fock-space operator is
\begin{equation}\label{eq:c_alpha_Fock}
    \widehat c_\alpha = \sqrt \theta\left(\hat a_\alpha+\hat a_\alpha^\dagger\right).
\end{equation}
Furthermore, upon inspection of the functions $\Gamma_\mu$ \eqref{eq:Gamma_xt_explicit}, the local Maxwellian connection \eqref{eq:Kcon} is a polynomial in the peculiar velocity, with the coefficients depending on the parameters $n$, $\bm u$ and $\theta$, as well as on their time and space derivatives. Thus, for the purely
multiplicative polynomial connection terms, the
substitution of the coordinate velocity variables by their Fock-space
counterparts is unambiguous.
Hence, we can write
\begin{equation}\label{eq:GG_Fock}
\begin{aligned}
 &   \widehat\Gamma_\mu=\Gamma_\mu\left(\widehat{\bm c}; \partial_\mu n,\partial_\mu\bm u,\partial_\mu\theta,n,\theta\right),
\end{aligned}
\end{equation}
and the intertwining relation for the local Maxwellian connection becomes
\begin{align}
\widehat{\mathcal K}_{\rm con}
=
\left(
\widehat\Gamma_t
+
\widehat v_\alpha
\widehat\Gamma_\alpha
\right)
=
\mathcal R_H^{-1}\left(
\Gamma_t
+
v_\alpha\Gamma_\alpha
\right)
\mathcal R_H
=
\mathcal R_H^{-1}\mathcal K_{\rm con}\mathcal R_H,
\label{eq:Kcon_intertwining}
\end{align}
where 
\begin{equation}\label{eq:v_alpha_Fock}
    \widehat{v}_\alpha=u_\alpha I+\widehat c_\alpha,
\end{equation}
is the Fock-space representation of the velocity.
Thus, the connector sector is lifted to the Fock space by replacing the
complete velocity-polynomial structure of the coordinate operator by
the corresponding polynomial in the Fock-space velocity operators.

\subsubsection{Ornstein--Uhlenbeck Sector}
\label{sec:OU_intertwining}

For the Ornstein--Uhlenbeck contribution \eqref{eq:KOU}, the corresponding Fock-space operator is defined by
\begin{equation}
\widehat{\mathcal K}_{\rm OU}
=
\gamma
\hat a_\alpha^\dagger
\hat a_\alpha .
\label{eq:Khat_OU}
\end{equation}
Using the intertwining relations for the two canonical generators, we
find
\begin{align}
\mathcal R_H
\widehat{\mathcal K}_{\rm OU}
&=
\gamma
\mathcal R_H
\hat a_\alpha^\dagger \mathcal R_H^{-1}\mathcal R_H
\hat a_\alpha \mathcal R_H^{-1}\mathcal R_H
=
\gamma
\mathscr a_\alpha^{(H)\dagger}
\mathscr a^{(H)}_\alpha
\mathcal R_H
=
\mathcal K_{\rm OU}
\mathcal R_H .
\label{eq:KOU_intertwining}
\end{align}
Thus the Ornstein--Uhlenbeck operator is represented on the Fock space
by the number operator \eqref{eq:number_operator};
In the Fock basis, its action is diagonal \eqref{eq:number_operator}.
This completes the algebraic construction of the local sectors. The
remaining part of the transformed LFH operator is the propagation
sector. Unlike the local terms considered above, it contains
space-time derivatives acting on a coordinate realization that depends
on the local macroscopic fields. Its Fock-space representation
therefore requires an explicit treatment of the variation of the
coordinate basis.


\subsection{Fock-space Representation of the Transport Sector}
\subsubsection{Pull-back and differential connection}

The transport operator $\mathcal K_{\rm tr}$  \eqref{eq:Ktr} 
belongs to the class of quasilinear differential operators considered in Sec.~\ref{sec:intertwining_general}. Its Fock-space pull-back is therefore obtained directly from Eq.~\eqref{eq:D_intertwining}. Defining
\[
\widehat{\mathcal K}_{\rm tr}
:=\mathcal R^{-1}\mathcal K_{\rm tr}\mathcal R,
\]
Eq.~\eqref{eq:D_intertwining} gives
\begin{equation}
\widehat{\mathcal K}_{\rm tr}
=
\partial_t+\widehat{\mathcal G}_t
+
\widehat v_\alpha
\left(\partial_\alpha+\widehat{\mathcal G}_\alpha\right).
\label{eq:Ktr_general_pullback}
\end{equation}
This is the transport pull-back for an arbitrary admissible realization.
For the Hermite-function realization used in the explicit LFH calculation we
write
\begin{equation}
\widehat{\mathcal C}_\mu
:=\widehat{\mathcal G}^{(H)}_\mu
=\mathcal R_H^{-1}(\partial_\mu\mathcal R_H),
\label{eq:C_time}
\end{equation}
so that
\begin{equation}
\widehat{\mathcal K}_{\rm tr}^{(H)}
=
\partial_t+\widehat{\mathcal C}_t
+
\widehat v_\alpha
\left(\partial_\alpha+\widehat{\mathcal C}_\alpha\right).
\label{eq:Ktr_connector_form}
\end{equation}
It remains only to evaluate these Hermite-function connectors. 
According to the general construction of Sec.~\ref{sec:intertwining_general}, their matrix elements are determined by differentiating the coordinate basis functions with respect to the local parameters $\bm{u}$ and $\theta$, and re-expanding the resulting functions in the same Hermite basis.

\subsubsection{Evaluation of the Differential Connections}
\label{sec:transport_connector_evaluation}

The Hermite-function realization map $\mathcal R_H$ depends on the external variables $(\bm x,t)$ through the local parameters $u_\beta(\bm x,t)$ and $\theta(\bm x,t)$. Consequently, its external derivatives may be evaluated by the chain rule:
\begin{equation}
\partial_\mu\mathcal R_H
=
(\partial_\mu u_\beta)
\frac{\partial\mathcal R_H}{\partial u_\beta}
+
(\partial_\mu\theta)
\frac{\partial \mathcal R_H}{\partial \theta},
\qquad
\mu\in\{t,1,\ldots,d\}.
\label{eq:R_chain_rule}
\end{equation}
It is therefore sufficient to determine the two elementary parameter connectors
\begin{align}
\widehat{\mathcal C}_{u_\beta}
&:=
\mathcal R_H^{-1}
\left(
\frac{\partial\mathcal R_H}{\partial u_\beta}
\right),
\label{eq:C_u_elementary}\\
\widehat{\mathcal C}_{\theta}
&:=
\mathcal R_H^{-1}
\left(
\frac{\partial\mathcal R_H}{\partial \theta}
\right).
\label{eq:C_theta_elementary}
\end{align}
The differential connection admits a simple geometric interpretation. Since the Hermite-function basis depends parametrically on the local Maxwellian through the macroscopic fields $\bm u$ and $\theta$, changes of these fields modify not only the expansion coefficients but also the basis itself. The differential connection describes how the coordinate realization adapted to the local Maxwellian changes.  The elementary connectors quantify the infinitesimal change of the local Hermite-function realization associated with variations of the parameters defining the reference Maxwellian. The velocity connection corresponds to translations of the Maxwellian center in velocity space, while the temperature connection corresponds to dilation associated with changes of its width.

The external differential connectors introduced in
Eq.~\eqref{eq:C_time} are then given by
\begin{equation}
\widehat{\mathcal C}_\mu
=
(\partial_\mu u_\beta)
\widehat{\mathcal C}_{u_\beta}
+
(\partial_\mu\theta)
\widehat{\mathcal C}_{\theta}.
\label{eq:C_mu_chain_rule}
\end{equation}
To evaluate the elementary connectors, we specialize the coordinate basis \eqref{eq:coordinate_basis_expanded_general} to the 
multidimensional Hermite-function basis \eqref{eq:hermite_function_explicit} associated with the local parameters $\bm u$ and $\theta$. According to the general result of Sec.~\ref{sec:diff_connection}, the matrix elements of the elementary connectors are determined by the expansions (see Eq.~\eqref{eq:diff_connection_general}),
\begin{align}
\frac{\partial\varphi^{(H)}_{\bm n}}{\partial u_\beta}
&=
\sum_{\bm m}
\varphi^{(H)}_{\bm m}
\left\langle
\bm m
\left|
\widehat{\mathcal C}_{u_\beta}
\right|
\bm n
\right\rangle,
\label{eq:du_basis_expansion}\\
\frac{\partial \varphi^{(H)}_{\bm n}}{\partial \theta}
&=
\sum_{\bm m}
\varphi^{(H)}_{\bm m}
\left\langle
\bm m
\left|
\widehat{\mathcal C}_{\theta}
\right|
\bm n
\right\rangle.
\label{eq:dtheta_basis_expansion}
\end{align}
Thus, the connector operators are obtained by differentiating the coordinate basis functions and re-expanding the result in the same basis.
For the Hermite-function realization adopted here, differentiation with respect to the local velocity component $u_\beta$ gives (details of the computation are given in Appendix \ref{app:hermite_differential_connectors}),
\begin{equation}
\frac{\partial \varphi^{(H)}_{\bm n}}{\partial u_\beta}
=
\frac{1}{2\sqrt{\theta}}
\left[
\sqrt{n_\beta+1}
\varphi^{(H)}_{\bm n+\bm e_\beta}
-
\sqrt{n_\beta}
\varphi^{(H)}_{\bm n-\bm e_\beta}
\right],
\label{eq:Hermite_du_identity}
\end{equation}
where $\bm e_\beta$ denotes the unit multi-index in the $\beta$th velocity direction. Comparing Eq.~\eqref{eq:Hermite_du_identity} with Eq.~\eqref{eq:du_basis_expansion}, one obtains
\begin{align}
\left\langle
\bm m
\left|
\widehat{\mathcal C}_{u_\beta}
\right|
\bm n
\right\rangle
=
\frac{1}{2\sqrt{\theta}}
\Big[
\sqrt{n_\beta+1}
\delta_{\bm m,\bm n+\bm e_\beta}
-
\sqrt{n_\beta}
\delta_{\bm m,\bm n-\bm e_\beta}
\Big].
\label{eq:C_u_matrix}
\end{align}
Using the standard Fock-space actions
\begin{align}
\hat a_\beta|\bm n\rangle
=
\sqrt{n_\beta}
|\bm n-\bm e_\beta\rangle,
\quad
\hat a_\beta^\dagger|\bm n\rangle
=
\sqrt{n_\beta+1}
|\bm n+\bm e_\beta\rangle,
\end{align}
the matrix elements \eqref{eq:C_u_matrix} are represented by
\begin{equation}
\widehat{\mathcal C}_{u_\beta}
=
\frac{1}{2\sqrt{\theta}}
\left(
\hat a_\beta^\dagger-\hat a_\beta
\right)
.
\label{eq:C_u_ladder}
\end{equation}
Equation~\eqref{eq:C_u_ladder} is therefore not introduced as an independent definition; it is the ladder-operator representation of the matrix obtained from the differentiated Hermite-functions basis.

Differentiation with respect to the local temperature parameter produces transitions by two excitation levels in each velocity direction:
\begin{align}
\frac{\partial \varphi^{(H)}_{\bm n}}{\partial \theta}
=
\frac{1}{4\theta}
\sum_\beta
\Big[
&
\sqrt{(n_\beta+1)(n_\beta+2)}
\varphi^{(H)}_{\bm n+2\bm e_\beta}
-
\sqrt{n_\beta(n_\beta-1)}
\varphi^{(H)}_{\bm n-2\bm e_\beta}
\Big].
\label{eq:Hermite_dtheta_identity}
\end{align}
Consequently,
\begin{align}
\left\langle
\bm m
\left|
\widehat{\mathcal C}_{\theta}
\right|
\bm n
\right\rangle
=
\frac{1}{4\theta}
\sum_\beta
\Big[
&
\sqrt{(n_\beta+1)(n_\beta+2)}
\delta_{\bm m,\bm n+2\bm e_\beta}
-
\sqrt{n_\beta(n_\beta-1)}
\delta_{\bm m,\bm n-2\bm e_\beta}
\Big].
\label{eq:C_theta_matrix}
\end{align}
Since
\begin{align}
\hat a_\beta^{\dagger 2}|\bm n\rangle
=
\sqrt{(n_\beta+1)(n_\beta+2)}
|\bm n+2\bm e_\beta\rangle,
\quad
\hat a_\beta^2|\bm n\rangle
=
\sqrt{n_\beta(n_\beta-1)}
|\bm n-2\bm e_\beta\rangle,
\end{align}
the matrix \eqref{eq:C_theta_matrix} admits the compact operator representation
\begin{equation}
\widehat{\mathcal C}_{\theta}
=
\frac{1}{4\theta}
\left(
\hat a_\beta^\dagger\hat a_\beta^\dagger
-
\hat a_\beta\hat a_\beta
\right)
,
\label{eq:C_theta_ladder}
\end{equation}
where summation over the repeated velocity component $\beta$ is understood.

Substituting Eqs.~\eqref{eq:C_u_ladder} and
\eqref{eq:C_theta_ladder} into the chain-rule relation
\eqref{eq:C_mu_chain_rule} gives
\begin{align}
\widehat{\mathcal C}_\mu
=
\frac{\partial_\mu u_\beta}{2\sqrt{\theta}}
\left(
\hat a_\beta^\dagger-\hat a_\beta
\right)
+
\frac{\partial_\mu\theta}{4\theta}
\left(
\hat a_\beta^\dagger\hat a_\beta^\dagger
-
\hat a_\beta\hat a_\beta
\right).
\label{eq:C_mu_explicit}
\end{align}

The differential connection \eqref{eq:C_mu_explicit}, together with
Eq.~\eqref{eq:Ktr_connector_form}, provide the explicit Fock-space representation of the transport sector for the Hermite-function realization.
Thus, the Fock-space representation of the transport operator is explicitly written as follows,
\begin{align}
\widehat{\mathcal K}_{\rm tr}
=
{\partial_t}
+
\widehat v_\alpha
{\partial_\alpha}
+
\frac{
\partial_tu_\beta
+
\widehat v_\alpha\partial_{\alpha}u_\beta
}{
2\sqrt{\theta}
}
\left(
\hat a_\beta^\dagger-\hat a_\beta
\right)
+
\frac{
\partial_t\theta
+
\widehat v_\alpha\partial_{\alpha}\theta
}{
4\theta
}
\left(
\hat a_\beta^\dagger \hat a_\beta^\dagger
-
\hat a_\beta \hat a_\beta
\right)
\label{eq:Ktr_hat_Hermite_expanded}
\end{align}
Together with the results of Sec.~\ref{sec:local_operator_sectors},  this concludes the construction of Fock-space representation of the LFH kinetic operator.


\subsection{Representation Theorem}

The preceding construction can be summarized without fixing a particular
coordinate basis.

\begin{theorem}[Intertwining representation of the LFH generator]
\label{thm:representation}
Let $\mathcal R(\lambda)$ be an admissible invertible local realization on a
common algebraic core, intertwining the canonical Weyl generators as in
Eq.~\eqref{eq:ladder_intertwining}.  Let $\mathcal K[\lambda]$ be a coordinate
kinetic generator whose velocity dependence is polynomial in those generators
and whose external dependence is first order.  Then
\begin{equation}
\widehat{\mathcal K}[\lambda]
:=\mathcal R^{-1}\mathcal K[\lambda]\mathcal R
\label{eq:representation_theorem_pullback}
\end{equation}
is well defined on the common core and satisfies
\begin{equation}
\mathcal R\widehat{\mathcal K}[\lambda]
=\mathcal K[\lambda]\mathcal R.
\label{eq:representation_theorem_intertwining}
\end{equation}
Every external derivative is pulled back according to
Eq.~\eqref{eq:external_diff_pullback}; no additional rule is required.
In particular, the theorem applies to the complete square-root-transformed LFH
generator.
\end{theorem}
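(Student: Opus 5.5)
The plan is to decompose the generator into the two classes already treated in Sec.~\ref{sec:intertwining_general} and check that the pull-back of each is well defined on the core. By hypothesis, $\mathcal K[\lambda]$ is a finite sum
\[
\mathcal K[\lambda]=P_0(\mathscr a^{(\mathcal R)},\mathscr a^{(\mathcal R)\dagger})+\sum_\mu P_\mu(\mathscr a^{(\mathcal R)},\mathscr a^{(\mathcal R)\dagger})\,\partial_\mu .
\]
Here the coefficients of each polynomial $P$ may depend on $\lambda$ and on its external derivatives, and act multiplicatively in velocity. The proof then has three steps: the local part, the first-order part, and additivity.

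\emph{Local part.} Apply Eq.~\eqref{eq:algebraic_intertwining}, which follows by induction on the degree from Eq.~\eqref{eq:ladder_intertwining}. For a monomial, insert $\mathcal R^{-1}\mathcal R$ between adjacent factors, as in Eq.~\eqref{eq:KOU_intertwining}. The scalar $\lambda$-dependent coefficients commute with $\mathcal R$ pointwise in $(\bm x,t)$, since $\mathcal R(\lambda(\bm x,t))$ acts only on the velocity/Fock degree of freedom. This gives $\mathcal R^{-1}P_0(\mathscr a,\mathscr a^\dagger)\mathcal R=P_0(\hat a,\hat a^\dagger)$. This operator maps the algebraic core (finite linear combinations of $|\bm n\rangle$ with smooth coefficients) into itself, because each ladder operator shifts $|\bm n|$ by $\pm1$.

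\emph{First-order part.} Write $\mathcal R^{-1}(P_\mu\partial_\mu)\mathcal R=(\mathcal R^{-1}P_\mu\mathcal R)(\mathcal R^{-1}\partial_\mu\mathcal R)$ and use Eq.~\eqref{eq:external_diff_pullback} to replace the second factor by $\partial_\mu+\widehat{\mathcal G}_\mu$; this is exactly Eq.~\eqref{eq:D_intertwining}. This is the step I expect to be the main obstacle, because well-definedness on the core is the substantive content. We need $\widehat{\mathcal G}_\mu=\mathcal R^{-1}(\partial_\mu\mathcal R)$ to map the core into the core, so that the composition with $P_\mu(\hat a,\hat a^\dagger)$ makes sense. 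I would handle this by the chain rule, Eq.~\eqref{eq:diff_connector}, reducing to $\mathcal R^{-1}\partial_\lambda\mathcal R$, whose matrix elements are defined by Eq.~\eqref{eq:diff_connection_general}. I would add, as part of the admissibility assumption, that $\partial_\lambda\varphi_{\bm n}$ has a finite expansion in the basis $\{\varphi_{\bm m}\}$; equivalently, $\partial_\lambda\mathcal R$ maps the core into $\mathcal R(\text{core})$. Smoothness of $\lambda(\bm x,t)$ then gives a smooth coefficient row.

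As a consistency check, for $\mathcal R_H$ this finiteness holds by Eqs.~\eqref{eq:C_u_ladder} and \eqref{eq:C_theta_ladder}: the connectors shift the level by at most two. Any realization in the similarity class of $\mathcal R_H$ should inherit the property, provided the similarity preserves the core.

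\emph{Additivity and the LFH case.} Summing the two pieces shows that $\widehat{\mathcal K}[\lambda]$ is a finite combination of core-preserving operators and $\partial_\mu$. Hence it is defined on the core, and composing with $\mathcal R$ on the left yields Eq.~\eqref{eq:representation_theorem_intertwining} directly from $\mathcal R\mathcal R^{-1}=I$. The statement that no additional rule is needed is then immediate: the only nonalgebraic ingredient used was Eq.~\eqref{eq:external_diff_pullback}. Finally, I would verify the hypotheses for the square-root-transformed LFH generator, Eq.~\eqref{eq:LFH_split}:
\begin{itemize}
\item $\mathcal K_{\rm OU}$ is quadratic in the generators (Eq.~\eqref{eq:KOU_intertwining}).
\item $\mathcal K_{\rm con}$ is polynomial in $c_\alpha=\sqrt\theta(\mathscr a_\alpha+\mathscr a_\alpha^\dagger)$ with parameter-dependent coefficients (Eq.~\eqref{eq:Kcon_intertwining}).
\item $\mathcal K_{\rm tr}$ is first order, with $v_\alpha=u_\alpha+c_\alpha$ linear in the generators; this recovers Eq.~\eqref{eq:Ktr_general_pullback}.
\end{itemize}
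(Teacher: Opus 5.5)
Your proposal is correct and follows the paper's own argument. Local polynomial factors are pulled back by the Weyl-algebra intertwining Eq.~\eqref{eq:algebraic_intertwining}, first-order terms by the Leibniz pull-back Eq.~\eqref{eq:external_diff_pullback} with the ordering-preserving factorization Eq.~\eqref{eq:D_intertwining}, and the result follows because the LFH generator is a sum of such terms. Your extra requirement that $\partial_\lambda\varphi_{\bm n}$ have a finite expansion, so that $\widehat{\mathcal G}_\mu$ preserves the core, is not a departure from the paper. The paper's proof leaves this condition implicit in ``admissible'' without checking it, and it does hold for $\mathcal R_H$ by Eqs.~\eqref{eq:C_u_ladder} and \eqref{eq:C_theta_ladder}.
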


\begin{proof}
Local polynomial factors follow from the Weyl-algebra intertwining
Eq.~\eqref{eq:algebraic_intertwining}.  Every first-order external derivative
follows from the Leibniz identity Eq.~\eqref{eq:external_diff_pullback}, and
Eq.~\eqref{eq:D_intertwining} preserves the ordering of quasilinear products.
The LFH generator is a sum of terms of these two types, which proves the
claim.
\end{proof}

For the Hermite-function realization, completeness and
Eq.~\eqref{eq:hermite_function_orthogonality} identify $\mathcal R_H$ with a
unitary map from the completed abstract Fock space to $L^2(\mathbb R^d,d^dv)$;
thus the abstract theorem has a concrete Hilbert-space realization.  We have
nevertheless stated the theorem for general $\mathcal R$ because the
representation structure, rather than the Hermite coordinates used to
calculate one representative, will be the object whose covariance is studied
in Sec.~\ref{sec:covariance}.

At this stage the fields $\lambda=(n,\bm u,\theta)$ remain parameters of a
family of intertwined operators and local realizations.  The theorem neither
identifies them with the physical moments of the represented state nor
determines their evolution.  That is the separate compatibility problem to
which we now turn.

\section{Compatibility of the Fock Representation}
\label{sec:fock_compatibility}

The operator representation is only one side of the kinetic theory.  Physical
moments are functionals of the distribution $f$ and must be represented in a
way that is independent of the chosen coordinate realization.  We first
construct these dual functionals, then derive a general transport--moment
identity, and only afterwards impose compatibility between the local
parameters and the moments of the represented state.

\subsection{Moment Functionals in Fock Space}
\label{sec:fock_moment_functionals}

For a polynomial velocity observable $m(\bm v)$, the corresponding physical
moment of the distribution is
\begin{equation}
\mathcal M_m[f]
:=\int_{\mathbb R^d}m(\bm v)f(\bm v)\,d^dv .
\label{eq:general_moment}
\end{equation}
The reconstruction map introduced in
Eq.~\eqref{eq:physical_realization_Q} gives $f=\mathcal Q|\Psi\rangle$;
therefore, at the realization-independent level,
\begin{equation}
\mathcal M_m[f]
=\mathcal M_m\!\left[\mathcal Q|\Psi\rangle\right].
\label{eq:moment_functional_fock}
\end{equation}
No additional Fock-space operator is associated with this statement: a moment
is a scalar linear functional of the physical distribution.

In the Hermite-function realization the same functional has a particularly
simple evaluation.  From
\begin{equation}
f=\sqrt n\sqrt W\,\mathcal R_H|\Psi\rangle
\label{eq:definition_f_RPsi}
\end{equation}
and orthonormality one has, for every ket $|\chi\rangle$,
\begin{equation}
\langle0|\chi\rangle
=\int_{\mathbb R^d}\sqrt W\,
(\mathcal R_H|\chi\rangle)(\bm v)\,d^dv.
\label{eq:vacuum_paring_H_functions}
\end{equation}
Multiplication by particle velocity is intertwined with
\begin{equation}
\widehat v_\alpha
= u_\alpha I+\sqrt\theta(\hat a_\alpha+\hat a_\alpha^\dagger),
\qquad
v_\alpha\mathcal R_H=\mathcal R_H\widehat v_\alpha,
\label{eq:moment_velocity_intertwining}
\end{equation}
and hence
\begin{equation}
m(\bm v)\mathcal R_H
=\mathcal R_Hm(\widehat{\bm v}).
\label{eq:moment_intertwining}
\end{equation}
It follows that
\begin{equation}
\mathcal M_m[f]
=\sqrt n\,\langle0|m(\widehat{\bm v})|\Psi\rangle.
\label{eq:moment_vacuum_pairing}
\end{equation}
For later use we denote the corresponding Hermite-function moment bra by
\begin{equation}
\langle M_m|
:=\langle0|m(\widehat{\bm v}),
\qquad
\mathcal M_m[f]=\sqrt n\,\langle M_m|\Psi\rangle.
\label{eq:moment_bra_general}
\end{equation}
Thus the bra is a convenient coordinate representative of the physical
moment functional; the physical definition remains Eq.~\eqref{eq:general_moment}.

\subsection{Transport--Moment Intertwining}
\label{sec:transport_moment_intertwining}

Let $\mathcal T=\partial_t+v_\alpha\partial_\alpha$ be the physical free
transport operator.  Let $\widehat{\mathcal K}_{\rm p}$ denote the complete
Fock-space propagation operator, including all terms generated by the local
parameter dependence of the chosen realization.  Since $\mathcal Q$ maps the
abstract ket directly to the physical distribution, propagation is
characterized by the physical intertwining relation
\begin{equation}
\mathcal Q\widehat{\mathcal K}_{\rm p}
=\mathcal T\mathcal Q.
\label{eq:physical_transport_intertwining}
\end{equation}
In the square-root Hermite-function realization,
\begin{equation}
\widehat{\mathcal K}_{\rm p}
=\widehat{\mathcal K}_{\rm tr}+\widehat{\mathcal K}_{\rm con},
\label{eq:propagation_intertwining}
\end{equation}
where the first term contains the differential connection and the second the
local-Maxwellian connection.  Their separate forms are not required in the
following result.

\begin{theorem}[Transport--moment intertwining]
\label{thm:transport_moment_intertwining}
Let $m=m(\bm v)$ have no explicit dependence on $(\bm x,t)$.  Then
\begin{equation}
\mathcal M_m\!\left[
\mathcal Q\widehat{\mathcal K}_{\rm p}|\Psi\rangle
\right]
=
\partial_t\mathcal M_m[f]
+\partial_\alpha\mathcal M_{v_\alpha m}[f].
\label{eq:transport_moment_theorem}
\end{equation}
In the Hermite-function realization this becomes
\begin{equation}
\sqrt n\,\langle M_m|
\widehat{\mathcal K}_{\rm p}|\Psi\rangle
=
\partial_t\!\left(\sqrt n\,\langle M_m|\Psi\rangle\right)
+\partial_\alpha\!\left(
\sqrt n\,\langle M_{v_\alpha m}|\Psi\rangle\right).
\label{eq:transport_moment_theorem_H}
\end{equation}
\end{theorem}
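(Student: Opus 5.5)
The plan is to work entirely at the level of the physical distribution, using the intertwining relation \eqref{eq:physical_transport_intertwining}, and to touch the Fock operators only at the end.

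First, the general identity \eqref{eq:transport_moment_theorem}. Set $f=\mathcal Q|\Psi\rangle$. By \eqref{eq:physical_transport_intertwining} we have $\mathcal Q\widehat{\mathcal K}_{\rm p}|\Psi\rangle=\mathcal T f=\partial_t f+v_\alpha\partial_\alpha f$. Applying the moment functional \eqref{eq:general_moment} gives
\[
\mathcal M_m[\mathcal Tf]=\int m(\bm v)\,\partial_t f\,d^dv+\int m(\bm v)\,v_\alpha\,\partial_\alpha f\,d^dv .
\]
Because $m$ and $v_\alpha$ do not depend on $(\bm x,t)$, we may pull $\partial_t$ and $\partial_\alpha$ outside the velocity integrals. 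This yields $\partial_t\mathcal M_m[f]+\partial_\alpha\mathcal M_{v_\alpha m}[f]$.

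The key point is that the differential connection $\widehat{\mathcal G}_\mu$ and the Maxwellian connection $\Gamma_\mu$ never have to be written out. They are absorbed once we recall that \eqref{eq:physical_transport_intertwining} follows from Theorem~\ref{thm:representation} together with the square-root similarity \eqref{eq:streaming_similarity_final}. To make that explicit, I would check
\[
\mathcal Q\widehat{\mathcal K}_{\rm p}=\sqrt n\sqrt W\,\mathcal R_H\bigl(\widehat{\mathcal K}_{\rm tr}+\widehat{\mathcal K}_{\rm con}\bigr)=M^{1/2}\bigl(\mathcal K_{\rm tr}+\mathcal K_{\rm con}\bigr)\mathcal R_H=\mathcal T M^{1/2}\mathcal R_H=\mathcal T\mathcal Q ,
\]
using \eqref{eq:Ktr_connector_form}, \eqref{eq:Kcon_intertwining} and \eqref{eq:streaming_similarity_final}.

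Second, the Hermite-function form \eqref{eq:transport_moment_theorem_H}. Apply \eqref{eq:moment_bra_general} to both sides. The left side equals $\sqrt n\,\langle M_m|\widehat{\mathcal K}_{\rm p}|\Psi\rangle$, which we obtain by applying \eqref{eq:moment_vacuum_pairing} to the ket $\widehat{\mathcal K}_{\rm p}|\Psi\rangle$. This step needs two facts:
\begin{itemize}
\item $\mathcal Q\widehat{\mathcal K}_{\rm p}|\Psi\rangle$ is reconstructed with the same $\sqrt n\sqrt W\,\mathcal R_H$, which holds because $\mathcal Q$ is a fixed map at each $(\bm x,t)$;
\item the vacuum pairing \eqref{eq:vacuum_paring_H_functions} holds for every ket, so it applies to $\widehat{\mathcal K}_{\rm p}|\Psi\rangle$.
\end{itemize}
On the right side, $\mathcal M_m[f]=\sqrt n\langle M_m|\Psi\rangle$ and $\mathcal M_{v_\alpha m}[f]=\sqrt n\langle M_{v_\alpha m}|\Psi\rangle$. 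These are substituted before differentiating, so the $\partial_\mu$ act on the full products, including the $\lambda$-dependence of the bras through $\widehat{\bm v}$.

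The main obstacle is justifying the interchange of $\partial_\mu$ with the velocity integral. This requires smoothness in $(\bm x,t)$ and enough Gaussian decay of $f$ and its derivatives; I would impose this on the common algebraic core, meaning finite Fock expansions with smooth coefficients. On that core $f$ is a Gaussian times a polynomial, so differentiation under the integral is immediate. A second subtle point is that $\langle M_m|$ depends on $(\bm x,t)$ while $m(\bm v)$ does not. For this reason the derivative must act on the scalar $\mathcal M_m[f]$, never on the ket alone. The statement is written accordingly, so no extra term arises.
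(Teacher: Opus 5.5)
Your proposal is correct and follows the paper's proof. Both arguments apply the physical intertwining relation $\mathcal Q\widehat{\mathcal K}_{\rm p}=\mathcal T\mathcal Q$ and then move $\partial_t$ and $\partial_\alpha$ outside the velocity integrals, and the Hermite-function form is obtained by substituting $\mathcal M_m[\mathcal Q|\chi\rangle]=\sqrt n\,\langle M_m|\chi\rangle$; your derivation of that intertwining from Theorem~\ref{thm:representation} and Eq.~\eqref{eq:streaming_similarity_final}, and your regularity remarks on the finite-expansion core, fill in steps the paper leaves implicit.
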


\begin{proof}
Using Eq.~\eqref{eq:physical_transport_intertwining},
\begin{align}
\mathcal M_m\!\left[
\mathcal Q\widehat{\mathcal K}_{\rm p}|\Psi\rangle
\right]
&=\int m(\bm v)\,\mathcal T f\,d^dv
\nonumber\\
&=\partial_t\int m f\,d^dv
+\partial_\alpha\int v_\alpha m f\,d^dv,
\end{align}
which is Eq.~\eqref{eq:transport_moment_theorem}.
\end{proof}

The theorem explains a cancellation that is otherwise easy to discover only
through explicit projection.  The differential connection may be complicated
and realization dependent, but its explicit form is immaterial for moment
transport once the complete propagation operator is intertwined with the
physical reconstruction map.  No compatibility condition has entered the
argument; the local parameters are still arbitrary smooth fields specifying
the parameterized representation.

If the observable itself depends explicitly on those fields, the same
calculation gives
\begin{equation}
\mathcal M_m\!\left[
\mathcal Q\widehat{\mathcal K}_{\rm p}|\Psi\rangle
\right]
=
\partial_t\mathcal M_m[f]
+\partial_\alpha\mathcal M_{v_\alpha m}[f]
-\mathcal M_{\mathcal Tm}[f].
\label{eq:transport_moment_moving_observable}
\end{equation}

\subsection{Hydrodynamic Moments and Fluxes}
\label{sec:hydrodynamic_moments}

We now specialize the general moment formula
\eqref{eq:moment_vacuum_pairing} to the moments and fluxes entering the
hydrodynamic balance equations. At this stage, the fields $n(\bm x,t)$, $\bm u(\bm x,t)$ and $\theta(\bm x,t)$ entering the Fock-space representation of the LFH operator 
are still regarded as parameters of the local realization. No
identification with the physical moments of the represented state is
assumed.
For convenience, introduce
\begin{equation}\label{eq:hydro_velocity_operator}
\begin{aligned}
&\widehat X_\alpha
=
\hat a_\alpha+\hat a_\alpha^\dagger,\\
&\widehat v_\alpha
=
u_\alpha I+\sqrt{\theta}\widehat X_\alpha,\\
&\widehat{\bm X}^2=\sum_{\alpha=1}^d\widehat X_\alpha\widehat X_\alpha
\end{aligned}
\end{equation}
The left-vacuum identities needed below are
\begin{align}
\langle0|\widehat X_\alpha
&=
\langle\bm e_\alpha|,
\label{eq:left_vacuum_X}
\\
\langle0|\widehat X_\alpha\widehat X_\beta
&=
\delta_{\alpha\beta}\langle0|
+
\langle0|\hat a_\alpha\hat a_\beta .
\label{eq:left_vacuum_XX}\\
\langle0|
\widehat{\bm X}^2
\widehat X_\alpha
&=
(d+2)\langle\bm e_\alpha|
+
\sqrt6\langle3\bm e_\alpha|
+
\sqrt2
\sum_{\beta\neq\alpha}
\langle2\bm e_\beta+\bm e_\alpha|,
\label{eq:left_vacuum_xxx}
\end{align}
We also expand the Fock state as
\begin{equation}
|\Psi\rangle
=
\sum_{\bm n}
\psi_{\bm n}|\bm n\rangle .
\label{eq:hydro_fock_expansion}
\end{equation}
Finally, it is useful to introduce the second-level Fock tensor
\begin{equation}
\Xi_{\alpha\beta}
:=\langle0|\hat a_\alpha\hat a_\beta|\Psi\rangle
=
\begin{cases}
\psi_{\bm e_\alpha+\bm e_\beta},
&\alpha\neq\beta,\\
\sqrt2\psi_{2\bm e_\alpha},
&\alpha=\beta.
\end{cases}
\label{eq:second_level_matrix_elements}
\end{equation}


\subsubsection{Particle density and particle flux}

Choosing $m(\bm v)=1$ in
Eq.~\eqref{eq:moment_vacuum_pairing} gives the particle-number density
\begin{equation}
N
:=
\int_{\mathbb R^d} f\,d^dv
=
\sqrt n\langle0|\Psi\rangle .
\label{eq:number_density_fock}
\end{equation}
Hence
\begin{equation}
N
=
\sqrt n\psi_{\bm0}.
\label{eq:number_density_amplitude}
\end{equation}
The particle-number flux is
\begin{equation}
J_\alpha
:=
\int_{\mathbb R^d}
v_\alpha f\,d^dv
=
\sqrt n
\langle0|\widehat v_\alpha|\Psi\rangle .
\label{eq:number_flux_fock}
\end{equation}
Using Eq.~\eqref{eq:hydro_velocity_operator},
\begin{align}
J_\alpha
&=
\sqrt n
\left[
u_\alpha\langle0|
+
\sqrt\theta\langle\bm e_\alpha|
\right]
|\Psi\rangle
\nonumber\\
&=
u_\alpha N
+
\sqrt{n\theta}
\psi_{\bm e_\alpha}.
\label{eq:number_flux_amplitude}
\end{align}
Thus,
\begin{equation}
J_\alpha-u_\alpha N = \sqrt{n\theta}\psi_{\bm e_\alpha}.
\label{eq:number_flux_mismatch}
\end{equation}
The momentum density is obtained by including the particle mass:
\begin{equation}
P_\alpha
:=
mJ_\alpha
=
m\int_{\mathbb R^d}
v_\alpha f\,d^dv .
\label{eq:momentum_density_definition}
\end{equation}
Consequently,
\begin{equation}
P_\alpha
-
mNu_\alpha
=
m\sqrt{n\theta}
\psi_{\bm e_\alpha}.
\label{eq:momentum_density_amplitude}
\end{equation}
Equation~\eqref{eq:momentum_density_amplitude} shows that the first Fock
level $|{\bm e_\alpha}\rangle$ measures the difference between the physical momentum density and
the momentum associated with the velocity parameter of the local
realization.


\subsubsection{Momentum-flux and pressure tensors}

The physical momentum-flux tensor is
\begin{equation}
\Pi_{\alpha\beta}
:=
m
\int_{\mathbb R^d}
v_\alpha v_\beta f\,d^dv
=
m\sqrt n
\langle0|
\widehat v_\alpha\widehat v_\beta
|\Psi\rangle .
\label{eq:momentum_flux_definition}
\end{equation}
Expanding the velocity operators and using the definition \eqref{eq:second_level_matrix_elements}, we get
\begin{equation}
\begin{aligned}
\Pi_{\alpha\beta}
={}&
mNu_\alpha u_\beta
+
m\sqrt{n\theta}
\left(
u_\alpha\psi_{\bm e_\beta}
+
u_\beta\psi_{\bm e_\alpha}
\right)
+
mN\theta\delta_{\alpha\beta}
+
m\sqrt n\theta
\Xi_{\alpha\beta}.
\end{aligned}
\label{eq:momentum_flux_compact}
\end{equation}

The second peculiar-velocity moment relative to the realization velocity is defined by
\begin{equation}
p_{\alpha\beta}
:=
m
\int_{\mathbb R^d}
c_\alpha c_\beta fd^dv.
\label{eq:pressure_tensor_definition}
\end{equation}
Since
\[
\widehat c_\alpha
=
\sqrt\theta
\widehat X_\alpha,
\]
its Fock-space representation is
\begin{align}
p_{\alpha\beta}
&=
m\sqrt n\theta
\langle0|
\widehat X_\alpha\widehat X_\beta
|\Psi\rangle
\nonumber\\
&=
mN\theta\delta_{\alpha\beta}
+
m\sqrt n\theta
\Xi_{\alpha\beta}.
\label{eq:pressure_tensor_fock}
\end{align}
Thus,
\begin{equation}
p_{\alpha\beta}
=
mN\theta\delta_{\alpha\beta}
+
m\sqrt n\theta
\Xi_{\alpha\beta}.
\label{eq:pressure_tensor_boxed}
\end{equation}
The trace of the second-level tensor $\Xi_{\alpha\beta}$ measures the difference between
the peculiar kinetic energy of the represented state and the temperature
parameter of the realization, while its traceless part represents the
non-equilibrium deviatoric stress.
It is useful to decompose this second peculiar-velocity moment into its rotationally
invariant scalar and traceless parts. Define the scalar pressure by
\begin{equation}
p
:=
\frac{1}{d}
p_{\gamma\gamma},
\label{eq:scalar_pressure_definition}
\end{equation}
and the trace-free deviatoric stress tensor by
\begin{equation}
\pi_{\alpha\beta}
:=
p_{\alpha\beta}
-
p\delta_{\alpha\beta},
\qquad
\pi_{\alpha\alpha}=0.
\label{eq:deviatoric_stress_definition}
\end{equation}
Using the Fock-space expression \eqref{eq:pressure_tensor_boxed}, the scalar pressure becomes
\begin{equation}
{
p
=
mN\theta
+
\frac{m\sqrt n\theta}{d}
\Xi_{\gamma\gamma}.
}
\label{eq:scalar_pressure_fock}
\end{equation}
The traceless stress is therefore
\begin{equation}
\pi_{\alpha\beta}
=
m\sqrt n\theta
\left(
\Xi_{\alpha\beta}
-
\frac{1}{d}
\Xi_{\gamma\gamma}\delta_{\alpha\beta}
\right).
\label{eq:deviatoric_stress_fock}
\end{equation}
Finally, the trace of the second-level Fock tensor \eqref{eq:second_level_matrix_elements} is
\begin{equation}
\Xi_{\gamma\gamma}
=
\sqrt2
\sum_{\gamma=1}^{d}
\psi_{2\bm e_\gamma}.
\label{eq:second_level_trace_pressure}
\end{equation}
Consequently,
\begin{equation}
p-mN\theta
=
\frac{m\sqrt{2n}\theta}{d}
\sum_{\gamma=1}^{d}
\psi_{2\bm e_\gamma}.
\label{eq:scalar_pressure_difference}
\end{equation}
Thus, the scalar trace of the second Fock level measures the mismatch
between this scalar peculiar-velocity moment and the value determined by the
temperature parameter of the local realization.  Once momentum compatibility
is imposed, $\bm u$ is the physical mean velocity and $p_{\alpha\beta}$
becomes the physical pressure tensor.  Its traceless part is then the
non-equilibrium deviatoric stress and is not eliminated by the temperature
compatibility condition introduced below.


\subsubsection{Total and internal kinetic energy}

The total kinetic-energy density is
\begin{equation}
E
:=
\frac{m}{2}
\int_{\mathbb R^d}
v_\alpha v_\alpha f\,d^dv
=
\frac{m}{2}\sqrt n
\langle0|\widehat v_\alpha\widehat v_\alpha|\Psi\rangle .
\label{eq:total_energy_definition}
\end{equation}
Using the preceding identities, we obtain
\begin{equation}
\begin{aligned}
E-
\frac{m}{2}
\left(
u_\alpha u_\alpha+d\theta
\right)N
=
m\sqrt{n\theta}
u_\alpha\psi_{\bm e_\alpha}
+
\frac{m}{2}
\sqrt{2n}\theta
\sum_{\alpha=1}^{d}
\psi_{2\bm e_\alpha}.
\end{aligned}
\label{eq:total_energy_explicit}
\end{equation}
The peculiar or internal kinetic-energy density relative to the velocity
parameter $\bm u$ is
\begin{equation}
U
:=
\frac{m}{2}
\int_{\mathbb R^d}
c_\alpha c_\alpha f\,d^dv
=
\frac12 p_{\alpha\alpha}.
\label{eq:internal_energy_definition}
\end{equation}
Therefore, using \eqref{eq:scalar_pressure_difference}, we have
\begin{equation}
U-\frac{md}{2}N\theta
=
\frac{m}{2}
\sqrt{2n}\theta
\sum_{\alpha=1}^{d}
\psi_{2\bm e_\alpha}.
\label{eq:internal_energy_mismatch}
\end{equation}
Thus, the scalar trace of the second Fock level measures the mismatch
between the peculiar kinetic energy relative to the realization velocity
and the value determined by the temperature parameter.  It becomes the
physical internal-energy mismatch once momentum compatibility is imposed.


\subsubsection{Total energy flux and heat flux}

The total kinetic-energy flux is
\begin{equation}
Q_\alpha
:=
\frac{m}{2}
\int_{\mathbb R^d}
v_\beta v_\beta v_\alpha f\,d^dv
=
\frac{m}{2}\sqrt n
\langle0|
\widehat v_\beta\widehat v_\beta
\widehat v_\alpha
|\Psi\rangle .
\label{eq:total_energy_flux_definition}
\end{equation}
Rather than expanding the operator expression directly in ladder
operators, it is useful first to separate the convective and peculiar
contributions. Introduce the first peculiar moment
\begin{equation}
J_\alpha^{(c)}
:=
\int_{\mathbb R^d}
c_\alpha f\,d^dv,
\qquad
c_\alpha:=v_\alpha-u_\alpha.
\label{eq:first_peculiar_moment}
\end{equation}
Using the definitions of $J_\alpha$ and $N$, this moment satisfies
\begin{equation}
J_\alpha^{(c)}
=J_\alpha-Nu_\alpha
=
\sqrt{n\theta}
\psi_{\bm e_\alpha}.
\label{eq:first_peculiar_moment_fock}
\end{equation}

The peculiar heat flux is defined by
\begin{equation}
q_\alpha
:=
\frac{m}{2}
\int_{\mathbb R^d}
c_\beta c_\beta c_\alpha f\,d^dv.
\label{eq:heat_flux_definition}
\end{equation}
Expanding $v_\alpha=u_\alpha+c_\alpha$ in the total energy flux gives
\begin{align}
Q_\alpha
=&
\frac{m}{2}Nu^2u_\alpha
+
\frac{m}{2}u^2J_\alpha^{(c)}
+
mu_\alpha u_\beta J_\beta^{(c)}
+
u_\beta p_{\alpha\beta}
+
Uu_\alpha
+
q_\alpha.
\label{eq:energy_flux_expansion}
\end{align}
The total energy density may be written as
\begin{equation}
E
=
\frac{m}{2}Nu^2
+
mu_\beta J_\beta^{(c)}
+
U.
\label{eq:energy_density_peculiar_decomposition}
\end{equation}
Consequently, the exact energy-flux decomposition is
\begin{equation}
Q_\alpha
=
Eu_\alpha
+
u_\beta p_{\alpha\beta}
+
q_\alpha
+
\frac{m}{2}u^2J_\alpha^{(c)}.
\label{eq:energy_flux_general_decomposition}
\end{equation}
Equation~\eqref{eq:energy_flux_general_decomposition} is valid without
identifying the velocity parameter $\bm u$ with the physical mean
velocity of the represented state. The additional term is determined
entirely by the first peculiar moment, or equivalently by the
first-level Fock amplitude through
Eq.~\eqref{eq:first_peculiar_moment_fock}. The condition under which this
term vanishes will be formulated in the following subsection.

Finally, the heat flux possesses the Fock representation
\begin{equation}
q_\alpha
=
\frac{m}{2}
\sqrt n\theta^{3/2}
\langle0|
\widehat X_\beta\widehat X_\beta
\widehat X_\alpha
|\Psi\rangle .
\label{eq:heat_flux_fock_operator}
\end{equation}
Using Eq.~\eqref{eq:second_level_matrix_elements}, we obtain
\begin{equation}
\begin{aligned}
q_\alpha
=
\frac{m}{2}
\sqrt n\theta^{3/2}
\bigg[
&(d+2)\psi_{\bm e_\alpha}
+
\sqrt6\psi_{3\bm e_\alpha}
+
\sqrt2
\sum_{\beta\neq\alpha}
\psi_{2\bm e_\beta+\bm e_\alpha}
\bigg].
\end{aligned}
\label{eq:heat_flux_explicit}
\end{equation}
The heat flux therefore contains a first-level contribution whenever the
local realization is not centered on the physical mean velocity. Once
momentum compatibility is imposed, the first-level contribution
vanishes, and the heat flux is carried entirely by the third Fock level.


\subsubsection{Fock hierarchy of hydrodynamic moments}

The preceding formulas establish a direct correspondence between the
hydrodynamic moment hierarchy and the first few levels of Fock space:
\begin{equation}
\begin{array}{ccl}
|\bm k|=0
&\longleftrightarrow &
\text{particle density},\\
|\bm k|=1
&\longleftrightarrow&
\text{peculiar flux, momentum mismatch, and off-compatible heat-flux term},
\\
|\bm k|=2
&\longleftrightarrow&
\text{peculiar-energy trace and second-moment tensor},
\\
|\bm k|=3
&\longleftrightarrow&
\text{third-order heat-flux contribution}.
\end{array}
\label{eq:hydrodynamic_fock_hierarchy}
\end{equation}
At this stage these relations are purely kinematic consequences of the
moment intertwining formula and the chosen Hermite-function realization.
No conservation law or compatibility condition has yet been imposed.
The relation between the parameters $n,\bm u,\theta$ defining the local
realization and the physical moments $N,\bm P,E$ of the represented
state is addressed in the following subsection.

\subsubsection{Summary of Hydrodynamic Moment Bras}

The general Hermite-function realization of a polynomial moment functional is
\[
\langle M_m|
:=
\langle0|\,m(\widehat{\bm v}),
\]
with
\[
\widehat v_\alpha
=
u_\alpha I
+
\sqrt{\theta}
\left(
\hat a_\alpha+\hat a_\alpha^\dagger
\right).
\]

The hydrodynamic moment and flux bras used below are
\begin{align}
\langle N|
&:=\langle0|,
\\
\langle P_\alpha|
&:=m\langle0|\widehat v_\alpha,
\\
\langle E|
&:=\frac{m}{2}\langle0|\widehat{\bm v}^{\,2},
\\
\langle U|
&:=\frac{m}{2}\langle0|(\widehat{\bm v}-\bm u)^2
=\frac{m}{2}\langle0|\widehat{\bm c}^{\,2},
\\
\langle \Pi_{\alpha\beta}|
&:=m\langle0|\widehat v_\alpha\widehat v_\beta,
\\
\langle Q_\alpha|
&:=\frac{m}{2}\langle0|\widehat{\bm v}^{\,2}\widehat v_\alpha,
\\
\langle q_\alpha|
&:=\frac{m}{2}\langle0|\widehat{\bm c}^{\,2}\widehat c_\alpha,
\end{align}
where
\[
\widehat c_\alpha
:=
\widehat v_\alpha-u_\alpha
=
\sqrt{\theta}
\left(
\hat a_\alpha+\hat a_\alpha^\dagger
\right).
\]
These bras are the Hermite-function representatives of the corresponding
physical observables.  In the next subsection they are applied to the
parameterized Fock relation $\widehat{\mathcal K}[\lambda]|\Psi\rangle=0$
before any compatibility condition is imposed.  The resulting exact moment
relations are then restricted to the compatibility set and, conversely, used
to propagate compatibility dynamically.

\subsection{Compatibility and Its Exact Propagation}
\label{sec:compatibility_exact}

The local realization is parameterized by $\lambda=(n,\bm u,\theta)$,
whereas the represented state has the independent physical moments
$N$, $\bm P$, and $E$.  The parameterized Fock relation
\begin{equation}
\widehat{\mathcal K}[\lambda]|\Psi\rangle=0
\label{eq:parameterized_fock_relation}
\end{equation}
is meaningful before these quantities are identified.  Compatibility selects
the locally adapted representation.

\subsubsection{Algebraic compatibility}

We impose
\begin{equation}
N=n,
\qquad
P_\alpha=mnu_\alpha,
\qquad
E=E_\lambda
:=\frac m2nu_\alpha u_\alpha+\frac{md}{2}n\theta .
\label{eq:fock_compatibility_physical}
\end{equation}
Equivalently, using the Hermite-function moment formulas,
\begin{equation}
\psi_{\bm0}=\sqrt n,
\qquad
\psi_{\bm e_\alpha}=0,
\qquad
\sum_{\alpha=1}^{d}\psi_{2\bm e_\alpha}=0.
\label{eq:fock_compatibility_coefficients}
\end{equation}
The first relation fixes the vacuum amplitude, the second removes the first
Fock level, and the third removes only the scalar trace component of level
$2$.  The traceless level-$2$ sector and all higher levels remain available to
carry nonequilibrium stress, heat flux, and higher moments.

On the compatibility manifold the exact flux decompositions simplify to
\begin{equation}
Q_\alpha
=Eu_\alpha+u_\beta p_{\alpha\beta}+q_\alpha,
\label{eq:energy_flux_filtered}
\end{equation}
and the heat flux contains only level-$3$ amplitudes,
\begin{equation}
q_\alpha
=\frac m2\sqrt n\,\theta^{3/2}
\left[
\sqrt6\,\psi_{3\bm e_\alpha}
+\sqrt2\sum_{\beta\ne\alpha}
\psi_{2\bm e_\beta+\bm e_\alpha}
\right].
\label{eq:heat_flux_filtered}
\end{equation}

\subsubsection{Moment equations off compatibility}

The transport contributions follow immediately from
Theorem~\ref{thm:transport_moment_intertwining}.  The remaining
Ornstein--Uhlenbeck moments are local and may be evaluated algebraically.  For
the parameterized Fock relation one obtains the exact identities
\begin{align}
\partial_tN+\partial_\alpha J_\alpha
&=0,
\label{eq:fock_offcompat_density}
\\
\partial_tP_\beta+\partial_\alpha\Pi_{\beta\alpha}
+\gamma\left(P_\beta-mu_\beta N\right)
&=0,
\label{eq:fock_offcompat_momentum}
\\
\partial_tE+\partial_\alpha Q_\alpha
+\gamma\left(2E-u_\alpha P_\alpha-md\theta N\right)
&=0.
\label{eq:fock_offcompat_energy}
\end{align}
These equations are valid before compatibility.  They also provide a useful
check on the sign convention: with
$\widehat{\mathcal K}_{\rm OU}=+\gamma\widehat{\mathcal N}$, the production
terms relax the momentum and energy mismatches toward the local parameters.
On the compatibility manifold the production terms vanish and
Eqs.~\eqref{eq:fock_offcompat_density}--\eqref{eq:fock_offcompat_energy}
reduce to the exact conservative LFH balances.

This gives one implementation of self-consistency: impose
Eq.~\eqref{eq:fock_compatibility_physical} algebraically at every time and the
balance laws are consequences.  The converse, dynamical implementation is
more useful for the present representation.

\subsubsection{Dynamic compatibility propagation}

Define the compatibility residuals
\begin{equation}
C_0:=N-n,
\qquad
C_\beta:=P_\beta-mnu_\beta,
\qquad
C_E:=E-E_\lambda.
\label{eq:compatibility_residuals}
\end{equation}
Suppose the realization parameters are evolved with the conservative balances
obtained on the compatibility manifold,
\begin{align}
\partial_tn+\partial_\alpha(nu_\alpha)&=0,
\label{eq:parameter_density_balance}
\\
\partial_t(mnu_\beta)+\partial_\alpha\Pi_{\beta\alpha}&=0,
\label{eq:parameter_momentum_balance}
\\
\partial_tE_\lambda+\partial_\alpha Q_\alpha&=0.
\label{eq:parameter_energy_balance}
\end{align}
Subtracting these equations from the off-compatible moment equations gives a
closed homogeneous system for the residuals,
\begin{align}
\partial_tC_0+\frac1m\partial_\alpha C_\alpha&=0,
\label{eq:compat_residual_density}
\\
\partial_tC_\beta
+\gamma\left(C_\beta-mu_\beta C_0\right)&=0,
\label{eq:compat_residual_momentum}
\\
\partial_tC_E
+\gamma\left(2C_E-u_\alpha C_\alpha-md\theta C_0\right)&=0.
\label{eq:compat_residual_energy}
\end{align}

\begin{theorem}[Exact compatibility propagation]
\label{thm:compatibility_propagation}
Let $|\Psi\rangle$ satisfy the parameterized Fock relation
Eq.~\eqref{eq:parameterized_fock_relation}, and let
$\lambda=(n,\bm u,\theta)$ evolve according to
Eqs.~\eqref{eq:parameter_density_balance}--\eqref{eq:parameter_energy_balance}.
If $C_0=C_\alpha=C_E=0$ initially, then the compatibility residuals remain
zero for as long as the coupled solution exists.
\end{theorem}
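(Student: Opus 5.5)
The plan is to reduce the statement to a uniqueness property of the linear residual system \eqref{eq:compat_residual_density}--\eqref{eq:compat_residual_energy}. First, I would verify that this system really is closed and homogeneous, by subtracting \eqref{eq:parameter_density_balance}--\eqref{eq:parameter_energy_balance} from the exact off-compatibility identities \eqref{eq:fock_offcompat_density}--\eqref{eq:fock_offcompat_energy}. Those identities hold for any $|\Psi\rangle$ satisfying \eqref{eq:parameterized_fock_relation}, by Theorem~\ref{thm:transport_moment_intertwining} together with the local evaluation of the OU moments.

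The key checks in the subtraction are as follows.
\begin{itemize}
\item The fluxes $\Pi_{\beta\alpha}$ and $Q_\alpha$ occur identically in the physical and parameter equations, so they cancel.
\item In the density equation, $J_\alpha-nu_\alpha=C_\alpha/m$.
\item The productions rewrite as $P_\beta-mu_\beta N=C_\beta-mu_\beta C_0$ and
\[
2E-u_\alpha P_\alpha-md\theta N=2C_E-u_\alpha C_\alpha-md\theta C_0 ,
\]
where the leftover terms $mnu^2+md n\theta-mnu^2-md\theta n$ vanish by the definition of $E_\lambda$.
\end{itemize}
Since the coefficients $u_\beta,\theta$ are those of the given coupled solution, this yields a linear system $\partial_t\bm C=\mathsf A\,\bm C$ for $\bm C=(C_0,C_\beta,C_E)$, with zero initial data.

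Next I would exploit the triangular structure of this system.
\begin{itemize}
\item The $C_E$ equation is a pointwise linear ODE driven by $C_\alpha$ and $C_0$. Hence $C_E\equiv0$ follows by Duhamel and Gr\"onwall once $C_0=C_\alpha=0$, because $C_E(0)=0$.
\item The pair $(C_0,C_\beta)$ decouples from $C_E$. Solving the momentum residual equation by Duhamel gives
\[
C_\beta(t)=\gamma m\int_0^t e^{-\gamma(t-s)}u_\beta(s)C_0(s)\,ds .
\]
\item Equivalently, applying $\partial_t+\gamma$ to the density residual equation gives the scalar equation
\[
\partial_t^2C_0+\gamma\partial_tC_0+\gamma\partial_\alpha(u_\alpha C_0)=0 ,
\]
with $C_0(0)=0$ and $\partial_tC_0(0)=-m^{-1}\partial_\alpha C_\alpha(0)=0$.
\end{itemize}
The theorem is thus reduced to showing that this Cauchy problem has only the trivial solution.

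The main obstacle is that the system loses one spatial derivative: $\partial_tC_0$ is given by $\partial_\alpha C_\alpha$, and $C_\alpha$ is only as regular as $C_0$. A naive $L^2$ Gr\"onwall estimate therefore does not close. My first attempt would be an energy estimate for the Duhamel form
\[
\partial_tC_0=-\gamma\,\partial_\alpha\!\int_0^te^{-\gamma(t-s)}u_\alpha(s)C_0(s)\,ds ,
\]
looking for a cancellation after integrating by parts in $x$ and time.

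If that fails, I would fall back on an abstract Cauchy--Kovalevskaya (Ovsyannikov/Nirenberg--Nishida) argument in a scale of analytic spaces, where a first-order operator is admissible and the linear problem has uniqueness. For merely smooth coefficients, I would instead phrase the theorem within the solution class in which the coupled system is posed, so that uniqueness of that class gives $C_0\equiv0$. Then $C_\beta\equiv0$ follows from the Duhamel formula and $C_E\equiv0$ from the first bullet above.
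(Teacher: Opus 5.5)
Your reduction is exactly the paper's proof. The paper subtracts \eqref{eq:parameter_density_balance}--\eqref{eq:parameter_energy_balance} from the off-compatibility identities and then simply asserts that a homogeneous linear system propagates zero data. Your bookkeeping is correct: the fluxes cancel, $J_\alpha-nu_\alpha=C_\alpha/m$, and the productions cancel via the definition of $E_\lambda$. So are the decoupling of $C_E$, the Duhamel formula for $C_\beta$, and the scalar equation $\partial_t^2C_0+\gamma\partial_tC_0+\gamma\partial_\alpha(u_\alpha C_0)=0$ with vanishing Cauchy data. At the formal level at which the paper works, your argument is complete and coincides with it.

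The derivative loss you point out is a real issue that the paper's one-line proof passes over, but your handling of it needs adjusting.

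\emph{Energy estimate.} The spatial part $A(\bm\xi)$ of the residual system's principal symbol is nilpotent, so $\det(\tau I+A(\bm\xi))=\tau^{d+2}$. For constant $\bm u$, one Fourier mode of $C_0$ grows like $\exp\bigl(\sqrt{\gamma|\bm u\cdot\bm\xi|/2}\,t\bigr)$ for large $|\bm u\cdot\bm\xi|$. Hence no $L^2$ or Sobolev energy estimate of Gr\"onwall type, even with a finite loss of derivatives, can close, and your first attempt cannot succeed.

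\emph{Analytic route.} The Ovsyannikov/Nirenberg--Nishida argument gives uniqueness only among solutions lying in the analytic scale, so it would require the residuals themselves to be analytic in $\bm x$. The sharper tool is the global Holmgren theorem applied to the first-order residual system. Every surface $t=\phi(\bm x)$ is noncharacteristic, so any $C^1$ residual with zero initial data vanishes identically as soon as $\bm u$ is real-analytic. For constant $\bm u$, a Fourier transform in $\bm x$ already suffices for tempered residuals.

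\emph{Solution-class route.} This fallback is circular as stated. The coupled system inherits the same loss, because $u_\beta$ is driven by $\partial_\alpha\Pi_{\beta\alpha}$ in the parameter momentum balance. A uniqueness theorem for its solution class is therefore precisely what is missing, not something available off the shelf.

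For merely smooth fields, neither your proposal nor the paper's proof supplies the required uniqueness statement. To make the theorem rigorous you should add an analyticity hypothesis on $\bm u$, or restrict to a class in which uniqueness for the residual system is actually proved.
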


\begin{proof}
Equations~\eqref{eq:compat_residual_density}--\eqref{eq:compat_residual_energy}
form a homogeneous linear system in the residuals along the prescribed
solution.  The zero residual is therefore propagated identically.
\end{proof}

No closure approximation has been used.  The theorem shows that the local
realization need not be re-fitted algebraically at every time: its parameters
may instead be evolved dynamically so that the compatibility manifold is
invariant.  For the LFH model the algebraic and dynamical routes are
equivalent; the latter formulation is especially suggestive for more general
parameterized representations in which compatibility may be implicit rather
than explicitly solvable.

The higher Fock amplitudes are not constrained by compatibility.  They enter
the exact stress and heat fluxes in the parameter balances.  Their
asymptotic evaluation is the subject of the hydrodynamic construction in the
next section.

\section{Compatibility Propagation in the Hydrodynamic Limit}\label{sec:Chapman_Enskog}

The previous section established the exact propagation of the compatibility conditions. We now investigate their perturbative realization in the hydrodynamic limit. The resulting hierarchy is organized according to the grading of the Fock space induced by the number operator.

\subsection{Vacuum Grading and Selection Rules}
\label{sec:vacuum_grading_selection}

We now turn from the exact compatibility formulation to its perturbative
realization in the hydrodynamic limit.  The calculation will be carried
out entirely in Fock space.  The Hermite-function realization enters
only through the already established differential connectors,
\begin{equation}
\widehat{\mathcal C}_{u_\beta}
=
\frac{1}{2\sqrt{\theta}}
\left(
\hat a_\beta^\dagger-\hat a_\beta
\right),
\qquad
\widehat{\mathcal C}_{\theta}
=
\frac{1}{4\theta}
\left[
\left(\hat a_\beta^\dagger\right)^2-\hat a_\beta^2
\right].
\label{eq:connectors_sec6_again}
\end{equation}
No explicit Hermite function will be required below.

The occupation-number basis carries the natural multigrading by
$\bm n=(n_1,\ldots,n_d)$.  For the present purpose we use the coarser
grading by the total occupation number,
\begin{equation}
|\bm n|
=
\sum_{\alpha=1}^{d}n_\alpha,
\qquad
\mathscr H
=
\bigoplus_{k=0}^{\infty}\mathscr H_k,
\qquad
\mathscr H_k
=
{\rm span}\{
|\bm n\rangle:|\bm n|=k
\}.
\label{eq:total_fock_grading}
\end{equation}
This is the grading naturally adapted to the Ornstein--Uhlenbeck
relaxation, since
\begin{equation}
\widehat{\mathcal N}
=
\hat a_\alpha^\dagger\hat a_\alpha,
\qquad
\widehat{\mathcal N}|_{\mathscr H_k}=kI.
\label{eq:number_grading_sec6}
\end{equation}

An operator $A$ is called homogeneous of grading $r$ if
\begin{equation}
A:\mathscr H_k\longrightarrow\mathscr H_{k+r}
\label{eq:grading_definition_sec6}
\end{equation}
for every $k$ for which the action is defined.  Equivalently,
\begin{equation}
[\widehat{\mathcal N},A]=rA.
\label{eq:grading_commutator_sec6}
\end{equation}
Hence
\begin{equation}
{\rm gr}(\hat a_\alpha)=-1,
\qquad
{\rm gr}(\hat a_\alpha^\dagger)=+1.
\label{eq:ladder_grading_sec6}
\end{equation}

The grading is additive under products.  Indeed, if
$[\widehat{\mathcal N},A]=rA$ and
$[\widehat{\mathcal N},B]=sB$, then
\begin{align}
[\widehat{\mathcal N},AB]
&=
[\widehat{\mathcal N},A]B
+
A[\widehat{\mathcal N},B]
\nonumber\\
&=
(r+s)AB,
\end{align}
and therefore
\begin{equation}
{\rm gr}(AB)={\rm gr}(A)+{\rm gr}(B).
\label{eq:grading_product_rule}
\end{equation}
A sum of monomials need not itself be homogeneous; it is understood as
the sum of its homogeneous components.

For the first hydrodynamic correction it is unnecessary to construct
all homogeneous components of the full propagation operator on an
arbitrary Fock state.  The zeroth-order state will be the vacuum state,
and the relevant object is therefore the \emph{vacuum action} of the
elementary operator blocks.  With
\[
\widehat X_\alpha
=
\hat a_\alpha+\hat a_\alpha^\dagger,
\qquad
\hat a_\alpha|0\rangle=0,
\]
the canonical commutation relations give
\begin{align}
\widehat X_\alpha|0\rangle
&=
\hat a_\alpha^\dagger|0\rangle,
\label{eq:vacuum_rule_X}
\\
\widehat{\mathcal C}_{u_\beta}|0\rangle
&=
\frac{1}{2\sqrt{\theta}}\,
\hat a_\beta^\dagger|0\rangle,
\label{eq:vacuum_rule_Cu}
\\
\widehat{\mathcal C}_{\theta}|0\rangle
&=
\frac{1}{4\theta}
\left(\hat a_\beta^\dagger\right)^2|0\rangle,
\label{eq:vacuum_rule_Ctheta}
\\
\widehat X_\alpha
\widehat{\mathcal C}_{u_\beta}|0\rangle
&=
\frac{1}{2\sqrt{\theta}}
\left(
\delta_{\alpha\beta}
+
\hat a_\alpha^\dagger\hat a_\beta^\dagger
\right)|0\rangle,
\label{eq:vacuum_rule_XCu}
\\
\widehat X_\alpha
\widehat{\mathcal C}_{\theta}|0\rangle
&=
\frac{1}{4\theta}
\left[
2\hat a_\alpha^\dagger
+
\hat a_\alpha^\dagger
\left(\hat a_\beta^\dagger\right)^2
\right]|0\rangle,
\label{eq:vacuum_rule_XCtheta}
\\
\widehat X_\alpha\widehat X_\beta|0\rangle
&=
\left(
\delta_{\alpha\beta}
+
\hat a_\alpha^\dagger\hat a_\beta^\dagger
\right)|0\rangle,
\label{eq:vacuum_rule_XX}
\\
\widehat{\bm X}^{\,2}|0\rangle
&=
\left[
d+
\left(\hat a_\beta^\dagger\right)^2
\right]|0\rangle,
\label{eq:vacuum_rule_X2}
\\
\widehat{\bm X}^{\,2}\widehat X_\alpha|0\rangle
&=
\left[
(d+2)\hat a_\alpha^\dagger
+
\hat a_\alpha^\dagger
\left(\hat a_\beta^\dagger\right)^2
\right]|0\rangle.
\label{eq:vacuum_rule_X2X}
\end{align}

Equations~\eqref{eq:vacuum_rule_X}--\eqref{eq:vacuum_rule_X2X}
constitute the vacuum-selection table used below.  They provide an
economical alternative to expanding the complete noncommuting
propagation operator before the moment projection is specified.  Once compatibility has removed the hydrodynamic components, the
nonequilibrium moment bras of interest select specific irreducible tensor
sectors, and only operator blocks whose vacuum action reaches those sectors
need be evaluated.  For later reference, the two nonequilibrium observables
of interest select
\begin{equation}
\begin{aligned}
\text{deviatoric stress:}\qquad
&\mathscr H_{2,\mathrm{tl}},
\\
\text{heat flux:}\qquad
&\mathscr H_{3,\mathrm v}\subset\mathscr H_3,
\end{aligned}
\label{eq:constitutive_level_selection}
\end{equation}
where $\mathscr H_{3,\mathrm v}$ denotes the vector (trace) component of
the third Fock level, spanned by the states
$\hat a_\alpha^\dagger(\hat a_\beta^\dagger)^2|0\rangle$ with the
repeated $\beta$ summed.  The level-$1$ part of the heat-flux moment is
absent on the compatibility manifold.

\subsection{Hydrodynamic Scaling and the First-Order Compatibility Source}
\label{sec:hydrodynamic_scaling_fock}

With the complete propagation part,
\begin{equation}
\widehat{\mathcal K}_{\rm p}
:=
\widehat{\mathcal K}_{\rm tr}
+
\widehat{\mathcal K}_{\rm con}.
\label{eq:Kp_definition}
\end{equation}
together with the Ornstein--Uhlenbeck relaxation operator
\begin{equation}
\widehat{\mathcal K}_{\rm OU}
:=
\gamma\widehat{\mathcal N},
\label{eq:positive_OU_relaxation}
\end{equation}
the hydrodynamic scaling is then written as
\begin{equation}
\varepsilon\,
\widehat{\mathcal K}_{\rm p}|\Psi\rangle
+
\gamma\widehat{\mathcal N}|\Psi\rangle
=
0,
\label{eq:scaled_fock_equation}
\end{equation}
where $\varepsilon$ is the small Knudsen-type parameter.

We seek
\begin{equation}
|\Psi\rangle
=
|\Psi^{(0)}\rangle
+
\varepsilon|\Psi^{(1)}\rangle
+
O(\varepsilon^2).
\label{eq:fock_CE_expansion}
\end{equation}
The macroscopic fields are the compatible fields of the representation;
equivalently, the higher-order corrections are required not to modify
their defining moments.  This is equivalent, at the present order, to
the usual Chapman--Enskog convention in which the material derivatives
appearing in the first kinetic source are evaluated with the
Euler-order evolution obtained from the solvability/compatibility
conditions below.  At zeroth order,
\begin{equation}
|\Psi^{(0)}\rangle
=
\sqrt n\,|0\rangle.
\label{eq:zeroth_order_vacuum}
\end{equation}
The leading equation is identically satisfied because
$\widehat{\mathcal N}|0\rangle=0$.  At first order,
\begin{equation}
\gamma\widehat{\mathcal N}|\Psi^{(1)}\rangle
=
-
|\Sigma^{(0)}\rangle,
\qquad
|\Sigma^{(0)}\rangle
:=
\widehat{\mathcal K}_{\rm p}
\sqrt n\,|0\rangle.
\label{eq:first_order_fock_equation}
\end{equation}
The source $|\Sigma^{(0)}\rangle$ is the Fock-space analogue of the
classical Chapman--Enskog propagation of the local Maxwellian.  Its
direct evaluation as a full operator expression is unnecessarily long.
The vacuum-selection rules above allow it to be assembled level by
level.

We shall use
\begin{equation}
D_u
=
\partial_t+u_\alpha\partial_\alpha.
\label{eq:Du_sec6}
\end{equation}

Using Eqs.~\eqref{eq:vacuum_rule_X}--\eqref{eq:vacuum_rule_X2X}, the
transport sector gives
\begin{equation}
\begin{aligned}
\widehat{\mathcal K}_{\rm tr}
\sqrt n\,|0\rangle
=
\frac{\sqrt n}{2}
\Bigg\{
&
A_0|0\rangle
+
A_\alpha
\hat a_\alpha^\dagger|0\rangle
+
B_{\alpha\beta}
\hat a_\alpha^\dagger
\hat a_\beta^\dagger|0\rangle
+
\Theta_\alpha
\hat a_\alpha^\dagger
\left(\hat a_\beta^\dagger\right)^2
|0\rangle
\Bigg\},
\end{aligned}
\label{eq:Ktr_vacuum_compact}
\end{equation}
where
\begin{align}
A_0
&:=
\frac{D_u n}{n}
+
\partial_\alpha u_\alpha,
\label{eq:A0_definition}
\\
A_\alpha
&:=
\frac{1}{\sqrt{\theta}}
\left[
D_u u_\alpha
+
\frac{1}{n}\partial_\alpha(n\theta)
\right],
\label{eq:A1_definition}
\\
B_{\alpha\beta}
&:=
\frac12
\left(
\partial_\alpha u_\beta
+
\partial_\beta u_\alpha
\right)
+
\frac{D_u\theta}{2\theta}
\delta_{\alpha\beta},
\label{eq:B2_definition}
\\
\Theta_\alpha
&:=
\frac{1}{2\sqrt{\theta}}
\partial_\alpha\theta.
\label{eq:Theta3_definition}
\end{align}

The same selection rules may be applied directly to
the local-Maxwellian connector.  Before collecting Fock levels, its
vacuum action is
\begin{align}
\widehat{\mathcal K}_{\rm con}\sqrt n\,|0\rangle
=
\sqrt n\Bigg\{&
\left[
\frac{D_u n}{2n}
-
\frac{d}{4\theta}D_u\theta
\right]
+
\widehat X_\alpha
\left[
\frac{\sqrt{\theta}}{2n}\partial_\alpha n
-
\frac{d}{4\sqrt{\theta}}\partial_\alpha\theta
+
\frac{D_u u_\alpha}{2\sqrt{\theta}}
\right]
\nonumber\\
&+
\frac12\widehat X_\alpha\widehat X_\beta
\partial_\alpha u_\beta
+
\frac{D_u\theta}{4\theta}
\widehat{\bm X}^{\,2}
+
\frac{\partial_\alpha\theta}{4\sqrt{\theta}}
\widehat{\bm X}^{\,2}\widehat X_\alpha
\Bigg\}|0\rangle .
\label{eq:Kcon_vacuum_before_selection}
\end{align}
The scalar pieces proportional to $dD_u\theta/(4\theta)$ cancel;
the level-$1$ temperature terms combine according to
\[
-\frac{d}{4\sqrt{\theta}}
+
\frac{d+2}{4\sqrt{\theta}}
=
\frac{1}{2\sqrt{\theta}},
\]
while the level-$2$ and level-$3$ components are read off directly from
Eqs.~\eqref{eq:vacuum_rule_XX}--\eqref{eq:vacuum_rule_X2X}.  The result is
\begin{equation}
\widehat{\mathcal K}_{\rm con}
\sqrt n\,|0\rangle
=
\widehat{\mathcal K}_{\rm tr}
\sqrt n\,|0\rangle.
\label{eq:equal_half_sources}
\end{equation}

The equality is not accidental.  At local equilibrium the physical
Maxwellian is split into two identical square-root factors, one carried
by the similarity transformation and one by the coordinate vacuum.
Differentiating the full Maxwellian differentiates both factors, and the
two equal half-contributions recombine.  In particular, the two cubic
thermal-gradient terms are born to add rather than cancel.

Consequently, the complete first-order propagation source is
\begin{equation}
\begin{aligned}
|\Sigma^{(0)}\rangle
=
\sqrt n
\Bigg[
&
A_0
+
A_\alpha\hat a_\alpha^\dagger
+
B_{\alpha\beta}
\hat a_\alpha^\dagger\hat a_\beta^\dagger
+
\Theta_\alpha
\hat a_\alpha^\dagger
\left(\hat a_\beta^\dagger\right)^2
\Bigg]|0\rangle.
\end{aligned}
\label{eq:first_order_source_full}
\end{equation}
Thus the complete source generated from the compatible vacuum occupies
only Fock levels $0$ through $3$.  No explicit Hermite polynomial or
Gaussian integral has entered the calculation.

\subsection{Hydrodynamic Null Projections and Euler Compatibility}
\label{sec:hydro_null_projection}

Equation~\eqref{eq:first_order_fock_equation} can preserve the
compatibility constraints only if the source has no component along the
Fock sectors defining the hydrodynamic fields.  This gives an immediate
and stringent check on all connector coefficients.

The density projection is obtained with the already defined moment bra $\langle N|=\langle0|$:
Using Eq.~\eqref{eq:first_order_source_full},
\begin{equation}
\sqrt n\,\langle N|\Sigma^{(0)}\rangle
=
nA_0
=
D_u n+n\partial_\alpha u_\alpha.
\label{eq:density_null_check}
\end{equation}
Hence the density projection vanishes precisely when the Euler
continuity equation is satisfied,
\begin{equation}
D_u n+n\partial_\alpha u_\alpha=0.
\label{eq:Euler_continuity_sec6}
\end{equation}

The momentum projection uses $\langle P_\mu|=m\langle0|\widehat v_\mu$:
Its projection is
\begin{equation}
\sqrt n\,\langle P_\mu|\Sigma^{(0)}\rangle
=
mn
\left(
u_\mu A_0
+
\sqrt{\theta}\,A_\mu
\right).
\label{eq:momentum_projection_before_continuity}
\end{equation}
Using the Euler continuity relation from the density projection,
\begin{equation}
\sqrt n\,\langle P_\mu|\Sigma^{(0)}\rangle
=
mn
\left[
D_u u_\mu
+
\frac{1}{n}\partial_\mu(n\theta)
\right].
\label{eq:momentum_null_check}
\end{equation}
Thus the first-level source vanishes precisely for
\begin{equation}
D_u u_\mu
+
\frac{1}{n}\partial_\mu(n\theta)
=
0,
\label{eq:Euler_momentum_sec6}
\end{equation}
which is the unforced Euler momentum equation.

For the energy projection we use $\langle E|=(m/2)\langle0|\widehat{\bm v}^{\,2}$:
A direct evaluation gives
\begin{equation}
\begin{aligned}
\sqrt n\,\langle E|\Sigma^{(0)}\rangle
=
\frac{mn}{2}
\Big[
&
(u^2+d\theta)A_0
+
2\sqrt{\theta}\,u_\alpha A_\alpha
+
2\theta B_{\alpha\alpha}
\Big].
\end{aligned}
\label{eq:energy_projection_general_sec6}
\end{equation}
After the density and momentum projections have vanished, only the
scalar level-$2$ component remains:
\begin{equation}
\sqrt n\,\langle E|\Sigma^{(0)}\rangle
=
mn\theta
\left[
\partial_\alpha u_\alpha
+
\frac{d}{2\theta}D_u\theta
\right].
\label{eq:energy_null_check}
\end{equation}
Hence the energy projection vanishes precisely when
\begin{equation}
D_u\theta
+
\frac{2}{d}\theta
\partial_\alpha u_\alpha
=
0.
\label{eq:Euler_temperature_sec6}
\end{equation}

Equations~\eqref{eq:density_null_check},
\eqref{eq:momentum_null_check}, and
\eqref{eq:energy_null_check} provide the desired hydrodynamic null
check:
\begin{equation}
\sqrt n\,\langle N|\Sigma^{(0)}\rangle
=
\sqrt n\,\langle P_\alpha|\Sigma^{(0)}\rangle
=
\sqrt n\,\langle E|\Sigma^{(0)}\rangle
=0
\label{eq:hydrodynamic_null_check_box}
\end{equation}
when the Euler-order compatibility equations hold.

This result should not be interpreted as an independent closure
condition imposed from outside the kinetic equation.  It is the
zeroth-order realization of the exact compatibility propagation
established in the preceding section.  In particular, the Euler
equations remove respectively the level-$0$, level-$1$, and scalar
level-$2$ components of the first-order Fock source.

Using Eq.~\eqref{eq:Euler_temperature_sec6}, the tensor
$B_{\alpha\beta}$ becomes traceless:
\begin{equation}
B_{\alpha\beta}
=
\mathsf S_{\alpha\beta},
\qquad
\mathsf S_{\alpha\beta}
:=
\frac12
\left(
\partial_\alpha u_\beta
+
\partial_\beta u_\alpha
\right)
-
\frac{1}{d}
\delta_{\alpha\beta}
\partial_\gamma u_\gamma.
\label{eq:strain_tensor_sec6}
\end{equation}
Therefore the source restricted to the kinetic complement reduces to
\begin{equation}
|\Sigma^{(0)}_{\rm kin}\rangle
=
\sqrt n
\left[
\mathsf S_{\alpha\beta}
\hat a_\alpha^\dagger\hat a_\beta^\dagger
+
\frac{\partial_\alpha\theta}{2\sqrt{\theta}}
\hat a_\alpha^\dagger
\left(\hat a_\beta^\dagger\right)^2
\right]
|0\rangle.
\label{eq:kinetic_source_after_Euler}
\end{equation}

\subsection{First Kinetic Correction and Navier--Stokes--Fourier Pairings}
\label{sec:first_kinetic_correction}

The decisive simplification now comes from the spectrum of the number
operator.  On the two sectors present in
Eq.~\eqref{eq:kinetic_source_after_Euler},
\begin{equation}
\widehat{\mathcal N}|_{\mathscr H_2}=2I,
\qquad
\widehat{\mathcal N}|_{\mathscr H_3}=3I.
\label{eq:N_inverse_levels_23}
\end{equation}
Equation~\eqref{eq:first_order_fock_equation} can therefore be inverted
level by level without constructing any infinite operator matrix.  One
obtains
\begin{equation}
\begin{aligned}
|\Psi^{(1)}\rangle
=
-\sqrt n
\Bigg[
&
\frac{1}{2\gamma}
\mathsf S_{\alpha\beta}
\hat a_\alpha^\dagger\hat a_\beta^\dagger
+
\frac{1}{6\gamma\sqrt{\theta}}
(\partial_\alpha\theta)
\hat a_\alpha^\dagger
\left(\hat a_\beta^\dagger\right)^2
\Bigg]|0\rangle .
\end{aligned}
\label{eq:Psi1_complete_NSF}
\end{equation}

Equation~\eqref{eq:Psi1_complete_NSF} makes the compatibility
propagation explicit.  The first correction contains neither a vacuum
amplitude nor a level-$1$ amplitude, and its level-$2$ component is
traceless.  Hence
\begin{equation}
\langle0|\Psi^{(1)}\rangle=0,
\qquad
\langle\bm e_\alpha|\Psi^{(1)}\rangle=0,
\qquad
\sum_{\alpha}
\langle2\bm e_\alpha|\Psi^{(1)}\rangle=0.
\label{eq:Psi1_compatibility_check}
\end{equation}

\subsubsection{Deviatoric stress}

The irreducible stress pairing established previously is
\begin{equation}
\pi_{\mu\nu}^{(1)}
=
m\sqrt n\,\theta
\left\langle0\left|
\hat a_\mu\hat a_\nu
-
\frac{1}{d}
\delta_{\mu\nu}
\hat a_\gamma\hat a_\gamma
\right|\Psi^{(1)}\right\rangle.
\label{eq:stress_pairing_first_order}
\end{equation}
Only the level-$2$ term in Eq.~\eqref{eq:Psi1_complete_NSF} can
contribute.  The required vacuum contraction is
\begin{equation}
\langle0|
\hat a_\mu\hat a_\nu
\hat a_\alpha^\dagger\hat a_\beta^\dagger
|0\rangle
=
\delta_{\mu\alpha}\delta_{\nu\beta}
+
\delta_{\mu\beta}\delta_{\nu\alpha}.
\label{eq:stress_vacuum_contraction}
\end{equation}
Since $\mathsf S_{\alpha\beta}$ is symmetric and traceless,
Eq.~\eqref{eq:stress_pairing_first_order} becomes
\begin{equation}
\pi_{\mu\nu}^{(1)}
=
-
\frac{mn\theta}{\gamma}
\mathsf S_{\mu\nu}.
\label{eq:NS_stress_final}
\end{equation}
In the conventional notation
\begin{equation}
\pi_{\mu\nu}^{(1)}
=
-2\mu\,\mathsf S_{\mu\nu},
\end{equation}
the dynamic viscosity is therefore
\begin{equation}
\mu
=
\frac{mn\theta}{2\gamma}.
\label{eq:viscosity_OU_Fock}
\end{equation}

The selection mechanism is worth emphasizing.  Before evaluating the
matrix element, the stress bra eliminates every vacuum-action block
except those carrying a traceless level-$2$ component.  In the original
operator these are precisely the tensorial pieces generated by
$\widehat X_\alpha\widehat{\mathcal C}_{u_\beta}$ and
$\widehat X_\alpha\widehat X_\beta$.  Scalar level-$2$ contributions
from $\widehat{\mathcal C}_\theta$ and
$\widehat{\bm X}^{\,2}$ are removed by the traceless projection.

\subsubsection{Heat flux}

Under momentum compatibility, the heat-flux pairing is supported on
level $3$:
\begin{equation}
q_\mu^{(1)}
=
\frac{m}{2}
\sqrt n\,\theta^{3/2}
\left\langle0\left|
\widehat{\bm X}^{\,2}\widehat X_\mu
\right|\Psi^{(1)}\right\rangle .
\label{eq:heat_pairing_first_order}
\end{equation}
Only the cubic term of Eq.~\eqref{eq:Psi1_complete_NSF} contributes.
The required contraction is
\begin{equation}
\left\langle0\left|
\hat a_\gamma\hat a_\gamma\hat a_\mu
\,
\hat a_\alpha^\dagger
\left(\hat a_\beta^\dagger\right)^2
\right|0\right\rangle
=
2(d+2)\delta_{\mu\alpha},
\label{eq:heat_vacuum_contraction}
\end{equation}
where summation over $\beta$ and $\gamma$ is understood.  It follows
that
\begin{equation}
q_\mu^{(1)}
=
-
\frac{d+2}{6\gamma}
mn\theta\,
\partial_\mu\theta.
\label{eq:Fourier_flux_final}
\end{equation}
Thus
\begin{equation}
q_\mu^{(1)}
=
-\kappa\,\partial_\mu\theta,
\qquad
\kappa
=
\frac{d+2}{6\gamma}
mn\theta.
\label{eq:thermal_conductivity_OU_Fock}
\end{equation}

Here again the vacuum-selection rule avoids the full operator matrix
element.  The only two propagation blocks capable of reaching level $3$
from the vacuum are
$\widehat X_\alpha\widehat{\mathcal C}_{\theta}$ from the differential
transport connector and
$\widehat{\bm X}^{\,2}\widehat X_\alpha$ from the local-Maxwellian
connector.   Their cubic vacuum contributions
are equal and add.  This doubling is the operator counterpart of
differentiating the two square-root Gaussian factors whose product is
the local Maxwellian.

As an independent consistency check, the resulting transport coefficients yield the Prandtl number ${\rm Pr}=3/2$, independent of the dimension $d$, in agreement with the conventional Chapman--Enskog analysis of the original LFH kinetic equation. In the Fock-space realization, this simply says that the heat flux lives on the Fock level $3$ and the deviatoric stress on the Fock level $2$.

\subsubsection{Navier--Stokes--Fourier balance equations}

Substitution of Eqs.~\eqref{eq:NS_stress_final} and
\eqref{eq:Fourier_flux_final} into the exact compatible balance
equations gives the first hydrodynamic correction.  The momentum
equation becomes
\begin{equation}
\partial_t(mnu_\alpha)
+
\partial_\beta
\left[
mnu_\alpha u_\beta
+
mn\theta\,\delta_{\alpha\beta}
-
\frac{mn\theta}{\gamma}
\mathsf S_{\alpha\beta}
\right]
=
0.
\label{eq:NS_momentum_Fock}
\end{equation}
The internal-energy form is
\begin{equation}
\frac{md}{2}nD_u\theta
+
mn\theta\,\partial_\alpha u_\alpha
-
\frac{mn\theta}{\gamma}
\mathsf S_{\alpha\beta}
\partial_\alpha u_\beta
-
\partial_\alpha
\left[
\frac{d+2}{6\gamma}
mn\theta\,\partial_\alpha\theta
\right]
=
0.
\label{eq:NSF_energy_Fock}
\end{equation}
These equations are understood to the order retained in the
hydrodynamic expansion.

\subsection{Interpretation: Compatibility Propagation in the Hydrodynamic Limit}
\label{sec:compatibility_hydro_interpretation}

The calculation above is the perturbative realization of the exact
compatibility propagation established in the preceding section.  The
logical sequence is
\begin{equation}
\begin{array}{c}
\text{compatible vacuum}
\\
\downarrow
\\
\text{propagation source}
\\
\downarrow
\\
\text{hydrodynamic projections}
\;\Rightarrow\;
\text{Euler compatibility}
\\
\downarrow
\\
\text{first-order kinetic source}
\subset
\mathscr H_{2,\mathrm{tl}}\oplus\mathscr H_{3,\mathrm v}
\\
\downarrow
\\
\text{level-wise OU inversion}
\\
\downarrow
\\
\text{Navier--Stokes--Fourier pairings}.
\end{array}
\label{eq:compatibility_hydro_flow}
\end{equation}

In the classical Chapman--Enskog calculation, the first-order
Maxwellian propagation is simplified through an extensive set of
velocity-polynomial identities and cancellations.  The corresponding
Fock-space propagation operator is algebraically longer because the
velocity products are replaced by noncommuting ladder operators.
However, the Fock representation supplies a different economy.  The
vacuum, total Fock grading, and irreducible moment bras provide an
a priori selection rule: only operator blocks whose vacuum action
reaches the sector detected by the prescribed observable need be
opened.

Thus the computational sequence is not
\[
\text{expand the complete operator}
\longrightarrow
\text{simplify the resulting expression},
\]
but rather
\begin{equation}
\begin{aligned}
&\text{choose perturbative order}
\longrightarrow\text{choose moment pairing}
\longrightarrow\text{select Fock sector},\\
&\hspace{34mm}\longrightarrow\text{evaluate only the contributing operator blocks}.
\end{aligned}
\label{eq:Fock_selection_strategy}
\end{equation}

The role of the Hermite realization is correspondingly different from
that in a Grad-type expansion.  The Hermite functions were used once,
at the construction stage, to establish the coordinate realization and
derive the differential connectors.  Their complete effect is then
encoded algebraically in
$\widehat{\mathcal C}_{u_\alpha}$ and
$\widehat{\mathcal C}_{\theta}$.  No explicit Hermite function,
Hermite recurrence relation, or Gaussian quadrature enters
Eqs.~\eqref{eq:first_order_source_full}--\eqref{eq:NSF_energy_Fock}.
The hydrodynamic-limit calculation is therefore performed entirely
within Fock space.

This also clarifies the relation to the classical Chapman--Enskog
procedure.  The latter is not introduced here as an external closure
prescription for otherwise unclosed balance equations.  Rather, it is
the perturbative evaluation of the same compatibility propagation that
holds exactly in the full Fock-space kinetic equation.  At zeroth order
compatibility yields the Euler equations; at first order the kinetic
complement generates the viscous stress and heat flux.  In this sense,
the hydrodynamic expansion is a perturbative realization of
compatibility propagation, no more and no less.

\section{Covariance under Changes of Coordinate Realization}
\label{sec:covariance}

The preceding sections have established three ingredients of the
Fock-space formulation. First, the transformed kinetic generator admits
an exact pull-back to the abstract Fock space. Second, the macroscopic
moments are represented by linear functionals on that space, and
compatibility identifies the parameters of the local realization with
the corresponding moments of the represented state. Third, the exact
propagation of compatibility and its hydrodynamic-limit realization can
be formulated entirely in terms of the abstract Fock dynamics.

These results were derived using a particular local coordinate
realization. The purpose of the present section is to show that the
resulting Fock-space formulation is not tied to that choice. We consider
two admissible coordinate realizations related by a local invertible
similarity transformation and establish covariance at four levels:
the kinetic generator, the pull-back of external derivatives, the moment
functionals, and the propagation of compatibility. The abstract
Fock-space objects remain unchanged, while their coordinate
representatives transform covariantly.


\subsection{Equivalent Coordinate Realizations}
\label{sec:covariance_representation_equivalence}

Let $\mathscr H$ denote the abstract Fock space and let
$\mathscr F_1$ and $\mathscr F_2$ be two admissible coordinate
realization spaces. For fixed local macroscopic parameters
\[
\lambda(\bm x,t)
:=
\bigl(n(\bm x,t),\bm u(\bm x,t),\theta(\bm x,t)\bigr),
\]
let
\begin{equation}
\mathcal R_i(\bm x,t):
\mathscr H\longrightarrow\mathscr F_i,
\qquad i=1,2,
\label{eq:cov_Ri}
\end{equation}
be invertible realization maps on the domains under consideration.
The coordinate representatives of an abstract state
$|\Psi(\bm x,t)\rangle\in\mathscr H$ are
\begin{equation}
\Phi_i
=
\mathcal R_i|\Psi\rangle.
\label{eq:cov_state_realizations}
\end{equation}

Since both realization maps are invertible, they determine a unique
local isomorphism
\begin{equation}
\mathcal S
:=
\mathcal R_2\mathcal R_1^{-1}
:
\mathscr F_1\longrightarrow\mathscr F_2,
\label{eq:cov_S_definition}
\end{equation}
so that
\begin{equation}
\mathcal R_2
=
\mathcal S\mathcal R_1,
\qquad
\Phi_2
=
\mathcal S\Phi_1.
\label{eq:cov_R2_SR1}
\end{equation}
In general $\mathcal S=\mathcal S(\bm x,t)$ because the coordinate
realizations depend locally on the macroscopic fields.

\begin{proposition}[Representation equivalence]
\label{prop:representation_equivalence}
Let $\mathcal O_1$ and $\mathcal O_2$ be coordinate operators acting
on $\mathscr F_1$ and $\mathscr F_2$, respectively. They represent the
same abstract Fock-space operator $\widehat{\mathcal O}$ if and only if
\begin{equation}
\mathcal O_2
=
\mathcal S\mathcal O_1\mathcal S^{-1},
\label{eq:cov_operator_similarity}
\end{equation}
where the composition in
Eq.~\eqref{eq:cov_operator_similarity} includes the action of
differential operators on the local map $\mathcal S$.
In that case
\begin{equation}
\widehat{\mathcal O}
=
\mathcal R_1^{-1}\mathcal O_1\mathcal R_1
=
\mathcal R_2^{-1}\mathcal O_2\mathcal R_2.
\label{eq:cov_operator_invariance}
\end{equation}
\end{proposition}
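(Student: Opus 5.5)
The plan is to unwind the definitions. "Represent" means $\widehat{\mathcal O}=\mathcal R_i^{-1}\mathcal O_i\mathcal R_i$ in the sense of Eq.~\eqref{eq:general_intertwining}, with every composition read as a composition of operators. So when $\mathcal O_i$ contains external derivatives $\partial_\mu$, products such as $\mathcal O_i\mathcal R_i$ are evaluated through the Leibniz rule~\eqref{eq:derivative_realization_product}. Everything then follows from $\mathcal R_2=\mathcal S\mathcal R_1$ (Eq.~\eqref{eq:cov_R2_SR1}), associativity of composition, and invertibility of $\mathcal R_1,\mathcal R_2,\mathcal S$ on the common core.

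For the direction ``if'', I would assume $\mathcal O_2=\mathcal S\mathcal O_1\mathcal S^{-1}$ and substitute $\mathcal R_2=\mathcal S\mathcal R_1$ and $\mathcal R_2^{-1}=\mathcal R_1^{-1}\mathcal S^{-1}$:
\[
\mathcal R_2^{-1}\mathcal O_2\mathcal R_2=\mathcal R_1^{-1}\mathcal S^{-1}\mathcal S\mathcal O_1\mathcal S^{-1}\mathcal S\mathcal R_1=\mathcal R_1^{-1}\mathcal O_1\mathcal R_1 .
\]
This gives Eq.~\eqref{eq:cov_operator_invariance}. For the direction ``only if'', I would assume $\mathcal R_1^{-1}\mathcal O_1\mathcal R_1=\mathcal R_2^{-1}\mathcal O_2\mathcal R_2$. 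Multiplying on the left by $\mathcal R_2$ and on the right by $\mathcal R_2^{-1}$ gives $\mathcal O_2=\mathcal R_2\mathcal R_1^{-1}\mathcal O_1\mathcal R_1\mathcal R_2^{-1}=\mathcal S\mathcal O_1\mathcal S^{-1}$.

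The main obstacle is justifying associativity and the cancellations $\mathcal S^{-1}\mathcal S=I$ when $\mathcal O_1$ contains $\partial_\mu$ and $\mathcal S$ depends on $(\bm x,t)$. Here $\mathcal S\,\partial_\mu\,\mathcal S^{-1}$ is not $\partial_\mu$; it equals $\partial_\mu-(\partial_\mu\mathcal S)\mathcal S^{-1}$. This is exactly why the statement insists that the composition ``includes the action of differential operators on the local map $\mathcal S$''. To handle it, I would treat all objects as operators on fields over $(\bm x,t)$, acting pointwise in the internal variables. In that algebra $\partial_\mu$ is itself an operator, and composition with the multiplication-type operators $\mathcal R_i(\bm x,t)$ and $\mathcal S(\bm x,t)$ is genuinely associative. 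The Leibniz rule is then merely the evaluation of such compositions, so the identities above hold with no extra terms.

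As a consistency check, I would verify the derivative case explicitly. From Eq.~\eqref{eq:external_diff_pullback}, $\widehat{\mathcal G}^{(2)}_\mu=\mathcal R_2^{-1}\partial_\mu(\mathcal S\mathcal R_1)=\widehat{\mathcal G}^{(1)}_\mu+\mathcal R_1^{-1}\mathcal S^{-1}(\partial_\mu\mathcal S)\mathcal R_1$. Meanwhile $\mathcal S\partial_\mu\mathcal S^{-1}$ pulls back under $\mathcal R_2$ to $\partial_\mu+\widehat{\mathcal G}^{(2)}_\mu-\mathcal R_1^{-1}\mathcal S^{-1}(\partial_\mu\mathcal S)\mathcal R_1=\partial_\mu+\widehat{\mathcal G}^{(1)}_\mu$, as required. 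Polynomial sectors are covered directly by Eq.~\eqref{eq:algebraic_intertwining} applied to both realizations, with $\mathscr a^{(2)}=\mathcal S\mathscr a^{(1)}\mathcal S^{-1}$. Finally, by linearity and multiplicativity of conjugation, together with Eq.~\eqref{eq:D_intertwining}, the statement extends to the quasilinear first-order generators of Theorem~\ref{thm:representation}.
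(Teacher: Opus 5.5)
Your proposal is correct and follows the paper's proof. For the ``if'' direction you substitute $\mathcal R_2=\mathcal S\mathcal R_1$ and cancel $\mathcal S^{-1}\mathcal S$, and for the converse you conjugate the equality of pull-backs by $\mathcal R_2$; your extra remarks on associativity of compositions involving external derivatives and the explicit derivative check anticipate what the paper carries out in Sec.~\ref{sec:covariance_external_derivatives}.
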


\begin{proof}
Using $\mathcal R_2=\mathcal S\mathcal R_1$ and
Eq.~\eqref{eq:cov_operator_similarity},
\begin{align}
\mathcal R_2^{-1}\mathcal O_2\mathcal R_2
&=
\mathcal R_1^{-1}\mathcal S^{-1}
\left(
\mathcal S\mathcal O_1\mathcal S^{-1}
\right)
\mathcal S\mathcal R_1
\nonumber\\
&=
\mathcal R_1^{-1}\mathcal O_1\mathcal R_1.
\end{align}
Conversely, equality of the two pull-backs implies
Eq.~\eqref{eq:cov_operator_similarity}.
\end{proof}

The statement is elementary for operators acting only on the local
velocity coordinate. Its nontrivial content for the present kinetic
problem concerns operators containing the external derivatives
$\partial_t$ and $\partial_\alpha$. Because $\mathcal S$ is local,
those derivatives do not commute with the similarity map. Their
transformation is considered next.

\begin{remark}[Covariance versus invariance]
The coordinate representatives $\mathcal O_1$ and $\mathcal O_2$ are
\emph{covariant}: they change according to
Eq.~\eqref{eq:cov_operator_similarity}. The pulled-back Fock operator
$\widehat{\mathcal O}$ is \emph{invariant}: it is the same abstract
operator in both realizations. We use covariance for the complete
representation-theoretic statement containing both properties.
\end{remark}

\subsection{Covariance of External Derivatives and Differential Connectors}
\label{sec:covariance_external_derivatives}

The treatment of external derivatives is the essential extension of
the purely algebraic similarity argument. Let $\partial_\mu$ denote
either $\partial_t$ or one of the spatial derivatives
$\partial_\alpha$. In realization $i$ define the differential connector
\begin{equation}
\widehat{\mathcal G}^{(i)}_\mu
:=
\mathcal R_i^{-1}
\left(
\partial_\mu\mathcal R_i
\right).
\label{eq:cov_Gi_definition}
\end{equation}
As established in the general pull-back construction,
\begin{equation}
\mathcal R_i^{-1}
\circ\partial_\mu\circ
\mathcal R_i
=
\partial_\mu
+
\widehat{\mathcal G}^{(i)}_\mu .
\label{eq:cov_derivative_pullback_i}
\end{equation}


The local similarity map introduces the coordinate-space connection
\begin{equation}
\Omega_\mu
:=
\left(
\partial_\mu\mathcal S
\right)\mathcal S^{-1},
\label{eq:cov_Omega_definition}
\end{equation}
which acts on $\mathscr F_2$. Conjugation of the bare derivative gives
\begin{align}
\mathcal S\partial_\mu\mathcal S^{-1}
&=
\partial_\mu
+
\mathcal S
\left(
\partial_\mu\mathcal S^{-1}
\right)
\nonumber\\
&=
\partial_\mu
-
\left(
\partial_\mu\mathcal S
\right)\mathcal S^{-1}.
\end{align}
Hence
\begin{equation}
\mathcal D_\mu^{(2)}
:=
\mathcal S\partial_\mu\mathcal S^{-1}
=
\partial_\mu-\Omega_\mu.
\label{eq:cov_D2}
\end{equation}
Thus the bare derivative is not itself invariant under a local
similarity transformation. The transformed coordinate derivative
contains an induced connection term.

\begin{proposition}[Transformation law of the differential connector]
\label{prop:cov_connector_transformation}
If $\mathcal R_2=\mathcal S\mathcal R_1$, then
\begin{equation}
\widehat{\mathcal G}^{(2)}_\mu
=
\widehat{\mathcal G}^{(1)}_\mu
+
\widehat{\Lambda}_\mu,
\label{eq:cov_G_transformation}
\end{equation}
where
\begin{equation}
\widehat{\Lambda}_\mu
:=
\mathcal R_1^{-1}
\mathcal S^{-1}
\left(
\partial_\mu\mathcal S
\right)
\mathcal R_1
=
\mathcal R_2^{-1}
\Omega_\mu
\mathcal R_2.
\label{eq:cov_Lambda_definition}
\end{equation}
\end{proposition}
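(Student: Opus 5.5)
The plan is to compute $\widehat{\mathcal G}^{(2)}_\mu$ directly from its definition in Eq.~\eqref{eq:cov_Gi_definition}, substituting $\mathcal R_2=\mathcal S\mathcal R_1$ from Eq.~\eqref{eq:cov_R2_SR1} and differentiating the composite map by the Leibniz rule. The only input needed is that $\mathcal S(\bm x,t)$ and $\mathcal R_1(\bm x,t)$ are both local families depending on the external variables, so that $\partial_\mu$ applied to the map $\mathcal S\mathcal R_1$ obeys the product rule
\begin{equation*}
\partial_\mu(\mathcal S\mathcal R_1)=(\partial_\mu\mathcal S)\mathcal R_1+\mathcal S(\partial_\mu\mathcal R_1).
\end{equation*}
This is the same Leibniz identity that produced Eq.~\eqref{eq:derivative_realization_product}, now applied to a product of two parameter-dependent maps rather than to a map acting on a ket.

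Next, I will left-multiply by $\mathcal R_2^{-1}=\mathcal R_1^{-1}\mathcal S^{-1}$. The second term gives $\mathcal R_1^{-1}\mathcal S^{-1}\mathcal S(\partial_\mu\mathcal R_1)=\mathcal R_1^{-1}(\partial_\mu\mathcal R_1)=\widehat{\mathcal G}^{(1)}_\mu$. The first term gives $\mathcal R_1^{-1}\mathcal S^{-1}(\partial_\mu\mathcal S)\mathcal R_1$, which is the first form of $\widehat\Lambda_\mu$ in Eq.~\eqref{eq:cov_Lambda_definition}. Together these establish Eq.~\eqref{eq:cov_G_transformation}.

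For the second equality in Eq.~\eqref{eq:cov_Lambda_definition}, I will insert $\mathcal S\mathcal S^{-1}$ to write
\begin{equation*}
\mathcal R_1^{-1}\mathcal S^{-1}(\partial_\mu\mathcal S)\mathcal R_1=\mathcal R_1^{-1}\mathcal S^{-1}\bigl[(\partial_\mu\mathcal S)\mathcal S^{-1}\bigr]\mathcal S\mathcal R_1=\mathcal R_2^{-1}\Omega_\mu\mathcal R_2,
\end{equation*}
using the definition of $\Omega_\mu$ in Eq.~\eqref{eq:cov_Omega_definition}.

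As a consistency check, I will also verify the result through Eq.~\eqref{eq:cov_D2}. Pulling back $\mathcal D^{(2)}_\mu=\partial_\mu-\Omega_\mu$ through $\mathcal R_2$ gives $\partial_\mu+\widehat{\mathcal G}^{(2)}_\mu-\widehat\Lambda_\mu$. This must equal the pull-back of $\partial_\mu$ through $\mathcal R_1$, namely $\partial_\mu+\widehat{\mathcal G}^{(1)}_\mu$, in agreement with Proposition~\ref{prop:representation_equivalence}.

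The main point requiring care is not algebraic. It is justifying that $\partial_\mu\mathcal S$, $\mathcal S^{-1}$, and the derivative of the composite are well defined on the common algebraic core, and that the derivative acts on the maps rather than on a state. I would handle this exactly as in Sec.~\ref{sec:diff_connection}: differentiate the images $\mathcal R_2|\bm n\rangle=\mathcal S\,\mathcal R_1|\bm n\rangle$ basis vector by basis vector and re-expand as in Eq.~\eqref{eq:diff_connection_general}. On that core the product rule is an ordinary pointwise identity.
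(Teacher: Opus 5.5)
Your proposal is correct and matches the paper's proof: you differentiate $\mathcal R_2=\mathcal S\mathcal R_1$ by the Leibniz rule and then left-multiply by $\mathcal R_2^{-1}=\mathcal R_1^{-1}\mathcal S^{-1}$. Inserting $\mathcal S\mathcal S^{-1}$ to obtain the form $\mathcal R_2^{-1}\Omega_\mu\mathcal R_2$ makes explicit a step that the paper leaves implicit.
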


\begin{proof}
Differentiating $\mathcal R_2=\mathcal S\mathcal R_1$ gives
\[
\partial_\mu\mathcal R_2
=
(\partial_\mu\mathcal S)\mathcal R_1
+
\mathcal S(\partial_\mu\mathcal R_1).
\]
Multiplication from the left by
$\mathcal R_2^{-1}=\mathcal R_1^{-1}\mathcal S^{-1}$ yields
Eq.~\eqref{eq:cov_G_transformation}.
\end{proof}

Equations~\eqref{eq:cov_D2} and
\eqref{eq:cov_G_transformation} combine to give
\begin{align}
\mathcal R_2^{-1}
\mathcal D_\mu^{(2)}
\mathcal R_2
&=
\mathcal R_2^{-1}
\left(
\partial_\mu-\Omega_\mu
\right)
\mathcal R_2
\nonumber\\
&=
\partial_\mu
+
\widehat{\mathcal G}^{(2)}_\mu
-
\widehat{\Lambda}_\mu
\nonumber\\
&=
\partial_\mu
+
\widehat{\mathcal G}^{(1)}_\mu
\nonumber\\
&=
\mathcal R_1^{-1}
\partial_\mu
\mathcal R_1.
\end{align}
Therefore
\begin{equation}
\mathcal R_2^{-1}
\left(
\mathcal S\partial_\mu\mathcal S^{-1}
\right)
\mathcal R_2
=
\mathcal R_1^{-1}
\partial_\mu
\mathcal R_1.
\label{eq:cov_derivative_invariance}
\end{equation}

The inhomogeneous change of the realization connector is therefore
exactly compensated by the connection generated through conjugation of
the coordinate derivative.

\begin{remark}[Bare derivative versus covariant coordinate derivative]
Pulling back the \emph{bare} derivative through two different
realization maps gives
\[
\mathcal R_i^{-1}\partial_\mu\mathcal R_i
=
\partial_\mu+\widehat{\mathcal G}^{(i)}_\mu,
\]
and these expressions generally differ. By contrast, the coordinate
operator representing the same abstract derivative in realization $2$
is
$\mathcal D_\mu^{(2)}=
\mathcal S\partial_\mu\mathcal S^{-1}$.
Its pull-back is identical to that of the derivative in realization
$1$.
\end{remark}

The same mechanism applies to the quasi-linear differential terms of
the kinetic generator. Let
\begin{equation}
\mathcal O^{(1)}_\mu
=
P_1\partial_\mu,
\qquad
P_2
=
\mathcal S P_1\mathcal S^{-1}.
\end{equation}
Then
\begin{equation}
\mathcal O^{(2)}_\mu
=
\mathcal S
\mathcal O^{(1)}_\mu
\mathcal S^{-1}
=
P_2
\left(
\partial_\mu-\Omega_\mu
\right).
\label{eq:cov_quasilinear_transformed}
\end{equation}
The ordering is fixed: $P_2$ acts to the left of the transformed
derivative. If
\[
\widehat P
=
\mathcal R_1^{-1}P_1\mathcal R_1
=
\mathcal R_2^{-1}P_2\mathcal R_2,
\]
then
\begin{align}
\mathcal R_2^{-1}
\mathcal O^{(2)}_\mu
\mathcal R_2
&=
\widehat P
\left(
\partial_\mu
+
\widehat{\mathcal G}^{(2)}_\mu
-
\widehat{\Lambda}_\mu
\right)
\nonumber\\
&=
\widehat P
\left(
\partial_\mu
+
\widehat{\mathcal G}^{(1)}_\mu
\right)
\nonumber\\
&=
\mathcal R_1^{-1}
\mathcal O^{(1)}_\mu
\mathcal R_1.
\end{align}
Thus the covariance theorem extends to the transport sector and to the
quasi-linear differential operators used in the LFH generator.

\begin{remark}[Connection-like transformation law]
Equation~\eqref{eq:cov_G_transformation} is an inhomogeneous
transformation law characteristic of a connection. The differential
connector should therefore not be regarded as an additional physical
interaction: its form depends on the chosen local coordinate
realization, while the complete pulled-back derivative is invariant.
\end{remark}

\subsection{Covariance of the Complete Kinetic Generator}
\label{sec:covariance_complete_generator}

Let $\mathcal K_1$ be one admissible coordinate representation of the
kinetic dynamics and define
\begin{equation}
\mathcal K_2
=
\mathcal S\mathcal K_1\mathcal S^{-1},
\label{eq:cov_K2_definition}
\end{equation}
with all external derivatives understood as composition operators.
Local velocity-space sectors transform algebraically, whereas every
external derivative transforms according to
Eq.~\eqref{eq:cov_D2}. The local and propagation sectors therefore
transform by different intermediate rules but recombine into the same
abstract operator.

\begin{theorem}[Covariance of the Fock-space kinetic representation]
\label{thm:cov_fock_generator}
Let $\mathcal R_1$ and $\mathcal R_2$ be two admissible local
realization maps related by
$\mathcal R_2=\mathcal S\mathcal R_1$. Let their coordinate kinetic
generators be related by
$\mathcal K_2=\mathcal S\mathcal K_1\mathcal S^{-1}$, with the induced
transformation of all external derivatives included. Then
\begin{equation}
\widehat{\mathcal K}
=
\mathcal R_1^{-1}
\mathcal K_1
\mathcal R_1
=
\mathcal R_2^{-1}
\mathcal K_2
\mathcal R_2.
\label{eq:cov_complete_K_invariance}
\end{equation}
Moreover, if
\[
\mathcal K_1\Phi_1=0,
\qquad
\Phi_2=\mathcal S\Phi_1,
\]
then
\begin{equation}
\mathcal K_2\Phi_2
=
\mathcal S
\mathcal K_1\Phi_1
=
0.
\label{eq:cov_coordinate_equations}
\end{equation}
Thus the two coordinate equations represent the same abstract
Fock-space dynamics.
\end{theorem}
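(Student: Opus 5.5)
The plan is to reduce Theorem~\ref{thm:cov_fock_generator} to Proposition~\ref{prop:representation_equivalence}, applied sector by sector, together with the derivative-invariance identity \eqref{eq:cov_derivative_invariance}. By Theorem~\ref{thm:representation}, the coordinate generator $\mathcal K_1$ is a finite sum of two kinds of terms. The first kind are local terms, polynomial in the Weyl generators with coefficients depending on $\lambda$ and its derivatives. The second kind are first-order quasilinear terms $P_1\partial_\mu$, again with $P_1$ local. Conjugation by $\mathcal S$ is linear, so $\mathcal K_2=\mathcal S\mathcal K_1\mathcal S^{-1}$ splits into the conjugates of these individual terms. Pull-back through $\mathcal R_2$ is also linear. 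It therefore suffices to prove \eqref{eq:cov_complete_K_invariance} separately for each type of term.

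\emph{Local terms.} Here $\mathcal S$ acts as a map for fixed $(\bm x,t)$, and no external derivative acts on it. Hence $\mathcal R_2^{-1}(\mathcal S P_1\mathcal S^{-1})\mathcal R_2=\mathcal R_1^{-1}\mathcal S^{-1}\mathcal S P_1\mathcal S^{-1}\mathcal S\mathcal R_1=\mathcal R_1^{-1}P_1\mathcal R_1$. This is the forward direction of Proposition~\ref{prop:representation_equivalence}. Multiplicative coefficients depending on $\lambda$ commute with $\mathcal S$ and $\mathcal R_i$, because those maps act only on the velocity coordinate. The coefficients therefore pass through unchanged.

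\emph{Propagation terms.} I will use \eqref{eq:cov_quasilinear_transformed}: $\mathcal S(P_1\partial_\mu)\mathcal S^{-1}=P_2(\partial_\mu-\Omega_\mu)$ with $P_2=\mathcal S P_1\mathcal S^{-1}$. The pull-back of $P_2$ is $\widehat P$ by the local step. The pull-back of $\partial_\mu-\Omega_\mu$ equals $\partial_\mu+\widehat{\mathcal G}^{(2)}_\mu-\widehat\Lambda_\mu$. By Proposition~\ref{prop:cov_connector_transformation}, this reduces to $\partial_\mu+\widehat{\mathcal G}^{(1)}_\mu=\mathcal R_1^{-1}\partial_\mu\mathcal R_1$. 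Ordering is preserved because $\mathcal R_2^{-1}(AB)\mathcal R_2=(\mathcal R_2^{-1}A\mathcal R_2)(\mathcal R_2^{-1}B\mathcal R_2)$. The same holds for terms of the form $P_1\partial_\mu$ with $P_1=v_\alpha$ and for the bare $\partial_t$ (take $P_1=I$). Summing over all sectors then gives \eqref{eq:cov_complete_K_invariance}.

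\emph{Main obstacle.} The delicate step is to make sure that ``$\mathcal S\partial_\mu\mathcal S^{-1}$'' is read as a composition of operators acting on functions of $(\bm v,\bm x,t)$, with the Leibniz rule applied to the $(\bm x,t)$-dependence of $\mathcal S$. Only under this reading is $\mathcal S\mathcal S^{-1}=I$ compatible with the derivative identity \eqref{eq:cov_D2}. I will state explicitly that all compositions are taken on the common dense core on which $\mathcal R_i$, $\mathcal S$ and $\mathcal S^{-1}$ are differentiable in $(\bm x,t)$. Under that convention the cancellation is purely associative.

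For the second claim, $\Phi_2=\mathcal S\Phi_1$, so $\mathcal K_2\Phi_2=\mathcal S\mathcal K_1\mathcal S^{-1}\mathcal S\Phi_1=\mathcal S\mathcal K_1\Phi_1=0$, again using composition semantics. Equivalently, writing $\Phi_i=\mathcal R_i|\Psi\rangle$, both coordinate equations are the images of the single Fock equation $\widehat{\mathcal K}|\Psi\rangle=0$ under $\mathcal R_1$ and $\mathcal R_2$.
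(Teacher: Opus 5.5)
Your proposal is correct and follows the paper's own route. The operator identity comes from Proposition~\ref{prop:representation_equivalence}, the external derivatives are handled by the connector transformation law of Proposition~\ref{prop:cov_connector_transformation} together with \eqref{eq:cov_quasilinear_transformed} from Sec.~\ref{sec:covariance_external_derivatives}, and the state equation follows by direct substitution. As you note yourself, once compositions are read with the Leibniz rule, the cancellation $\mathcal R_2^{-1}\mathcal S\mathcal K_1\mathcal S^{-1}\mathcal S\mathcal R_1=\mathcal R_1^{-1}\mathcal K_1\mathcal R_1$ is purely associative, so the sector-by-sector split is an explicit check rather than a logical necessity; likewise Theorem~\ref{thm:representation} assumes the local-plus-first-order structure rather than proving it, so it should be cited as such.
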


\begin{proof}
The operator identity follows from
Proposition~\ref{prop:representation_equivalence}, with
Sec.~\ref{sec:covariance_external_derivatives} providing the required
treatment of the external derivatives. The state equation follows by
direct substitution.
\end{proof}

\begin{remark}
The theorem is stronger than the statement that two local
velocity-space algebras are isomorphic. It includes the space--time
propagation sector, whose covariance depends essentially on the
inhomogeneous transformation of the differential connector.
\end{remark}

\subsection{Covariance of Physical Reconstruction and Moments}
\label{sec:covariance_moments}

The coordinate state by itself is not the physical kinetic state.  Let
$\mathcal B_i$ denote the reconstruction map associated with
$\mathcal R_i$, so that
\begin{equation}
f=\mathcal B_i\Phi_i,
\qquad
\Phi_i=\mathcal R_i|\Psi\rangle,
\qquad i=1,2.
\label{eq:cov_reconstruction_i}
\end{equation}
Since $\Phi_2=\mathcal S\Phi_1$, representation of the same physical
distribution requires
\begin{equation}
\mathcal B_2=\mathcal B_1\mathcal S^{-1}.
\label{eq:cov_B_transformation}
\end{equation}
Consequently the composite physical realization is invariant,
\begin{equation}
\mathcal Q
=\mathcal B_1\mathcal R_1
=\mathcal B_2\mathcal R_2,
\qquad
f=\mathcal Q|\Psi\rangle.
\label{eq:cov_Q_invariance}
\end{equation}
This is the physical counterpart of the operator identity
Eq.~\eqref{eq:cov_operator_invariance}: the coordinate factors change, while
the map from the abstract ket to the physical distribution does not.

For any velocity observable $m(\bm v)$,
\begin{equation}
\mathcal M_m[f]
=\int m(\bm v)f(\bm v)\,d^dv
=\mathcal M_m\!\left[\mathcal Q|\Psi\rangle\right]
\label{eq:cov_moment_invariance}
\end{equation}
is therefore independent of the chosen coordinate representative.  In a
particular realization the same scalar may be evaluated with a
realization-specific weight or bra, but those are coordinate devices rather
than additional invariant objects.  In the Hermite-function realization,
for example,
\begin{equation}
\mathcal M_m[f]
=\sqrt n\,\langle M_m|\Psi\rangle,
\qquad
\langle M_m|=\langle0|m(\widehat{\bm v}).
\label{eq:cov_connection_to_moment_bras}
\end{equation}

\subsection{Covariance of Compatibility and Its Propagation}
\label{sec:covariance_compatibility}

Compatibility is formulated entirely in terms of physical moments and the
common parameter fields.  If $A$ labels one of the hydrodynamic observables,
let $\mathcal H_A(\lambda)$ denote its prescribed value; for the LFH fields,
\begin{equation}
\mathcal H_N(\lambda)=n,
\qquad
\mathcal H_{P_\alpha}(\lambda)=mnu_\alpha,
\qquad
\mathcal H_E(\lambda)=E_\lambda
=\frac m2nu_\alpha u_\alpha+\frac{md}{2}n\theta.
\label{eq:cov_hydro_targets_LFH}
\end{equation}
Define
\begin{equation}
\mathfrak C_A[|\Psi\rangle;\lambda]
:=\mathcal M_A\!\left[\mathcal Q|\Psi\rangle\right]
-\mathcal H_A(\lambda).
\label{eq:cov_compatibility_residual}
\end{equation}
Because both $\mathcal Q$ and $\lambda$ are unchanged by a change of
coordinate realization, the compatibility residual is representation
invariant.  In particular,
\begin{equation}
\Phi_1\text{ is compatible with }\lambda
\quad\Longleftrightarrow\quad
\Phi_2=\mathcal S\Phi_1\text{ is compatible with }\lambda.
\label{eq:cov_compatibility_equivalence}
\end{equation}
The coefficient conditions
\begin{equation}
\psi_{\bm0}=\sqrt n,
\qquad
\psi_{\bm e_\alpha}=0,
\qquad
\sum_{\alpha=1}^{d}\psi_{2\bm e_\alpha}=0
\label{eq:cov_fock_compatibility_conditions}
\end{equation}
therefore refer to the common abstract ket and are not peculiar to the
Hermite-function coordinate state.

The same argument extends to compatibility propagation.  Let
$\mathcal E_i:=\mathcal K_i\Phi_i$ be the coordinate kinetic residual.
From $\mathcal K_2=\mathcal S\mathcal K_1\mathcal S^{-1}$ and
$\Phi_2=\mathcal S\Phi_1$,
\begin{equation}
\mathcal E_2=\mathcal S\mathcal E_1.
\label{eq:cov_residual_transformation}
\end{equation}
Using Eq.~\eqref{eq:cov_B_transformation},
\begin{equation}
\mathcal B_2\mathcal E_2
=\mathcal B_1\mathcal E_1.
\label{eq:cov_physical_residual_invariance}
\end{equation}
Hence every physical moment of the kinetic residual is the same in the two
realizations.  The exact balance equations obtained by hydrodynamic
projection, and therefore the homogeneous evolution equations for the
compatibility residuals, are identical.  Exact compatibility propagation is
thus a property of the represented kinetic dynamics, not of a particular
coordinate realization.

\subsection{Relation to the Hydrodynamic-Limit Construction}
\label{sec:covariance_hydrodynamic_limit}

The hydrodynamic calculation of Sec.~\ref{sec:Chapman_Enskog} was carried out
in Hermite-function coordinates because this realization is spectrally
adapted to the Ornstein--Uhlenbeck sector: its vacuum is explicit and the
number operator diagonalizes relaxation.  Covariance shows that this is a
computational privilege, not a physical restriction.  Stress and heat flux
are physical moments, and the Euler and Navier--Stokes--Fourier equations
obtained from compatibility propagation are independent of the coordinate
representative.

This statement should be distinguished from the occupation-number grading
itself.  The decomposition
\[
\mathscr H=\bigoplus_{k\ge0}\mathscr H_k
\]
is especially useful for the LFH relaxation operator because
$\widehat{\mathcal N}$ is diagonal on these sectors.  A different collision
operator may call for a different spectral or block organization; the
covariance statements concern the abstract dynamics and its physical
reconstruction, not the choice of grading used to calculate them.

The covariance structure may be summarized as
\begin{equation}
\begin{gathered}
\mathcal R_2=\mathcal S\mathcal R_1,
\qquad
\Phi_2=\mathcal S\Phi_1,
\qquad
\mathcal B_2=\mathcal B_1\mathcal S^{-1},
\\
\mathcal K_2=\mathcal S\mathcal K_1\mathcal S^{-1},
\\[1mm]
\mathcal Q=\mathcal B_i\mathcal R_i,
\qquad
\widehat{\mathcal K}
=\mathcal R_i^{-1}\mathcal K_i\mathcal R_i,
\end{gathered}
\label{eq:covariance_summary}
\end{equation}
where the last line is independent of $i$.  Physical moments,
compatibility, and compatibility propagation then follow from the common
physical state $f=\mathcal Q|\Psi\rangle$.

\subsection{Explicit Polynomial--Function Covariance}
\label{sec:covariance_poly_function_example}

The Hermite-polynomial and Hermite-function realizations provide a concrete
check of the preceding statements without introducing new structure.  From
Sec.~\ref{sec:coordinate_realizations_equivalence},
\begin{equation}
\mathcal R_H=\mathcal S_W\mathcal R_P,
\qquad
\mathcal S_W\Phi=\sqrt W\,\Phi,
\qquad
\Phi_H=\sqrt W\,\Phi_P.
\label{eq:cov_basis_relation_sec7}
\end{equation}
Their physical reconstructions are correspondingly
\begin{equation}
f=\sqrt n\,\sqrt W\,\Phi_H
=\sqrt n\,W\,\Phi_P.
\label{eq:cov_reconstruction_HP}
\end{equation}
Thus $\mathcal B_H=\sqrt n\sqrt W$ and
$\mathcal B_P=\sqrt n W$, while
\begin{equation}
\mathcal B_H\mathcal R_H
=\mathcal B_P\mathcal R_P
=\mathcal Q.
\label{eq:cov_Q_HP}
\end{equation}
Equation~\eqref{eq:cov_reconstruction_HP} is also the direct explanation for
why the moment weights differ in the two coordinate systems although the
physical moments do not.

For the Weyl generators,
\begin{align}
\mathscr a_\alpha^{(P)}
&=\partial_{\xi_\alpha},
&
\mathscr a_\alpha^{(P)\dagger}
&=-\partial_{\xi_\alpha}+\xi_\alpha,
\label{eq:cov_P_ladders}
\\
\mathscr a_\alpha^{(H)}
&=\partial_{\xi_\alpha}+\frac{\xi_\alpha}{2},
&
\mathscr a_\alpha^{(H)\dagger}
&=-\partial_{\xi_\alpha}+\frac{\xi_\alpha}{2},
\label{eq:cov_H_ladders}
\end{align}
and Eq.~\eqref{eq:cov_basis_relation_sec7} gives the conjugation relations
$\mathscr a^{(H)}=\mathcal S_W\mathscr a^{(P)}\mathcal S_W^{-1}$ and
likewise for creation.  The coordinate number operators are therefore
\begin{align}
\mathcal N_P
&=-\partial_{\xi_\alpha}\partial_{\xi_\alpha}
+\xi_\alpha\partial_{\xi_\alpha},
\label{eq:cov_NP}
\\
\mathcal N_H
&=-\partial_{\xi_\alpha}\partial_{\xi_\alpha}
+\frac{\xi_\alpha\xi_\alpha}{4}-\frac d2,
\label{eq:cov_NH}
\end{align}
so that
\begin{equation}
\mathcal R_P^{-1}(\gamma\mathcal N_P)\mathcal R_P
=\mathcal R_H^{-1}(\gamma\mathcal N_H)\mathcal R_H
=\gamma\widehat{\mathcal N}.
\label{eq:cov_OU_explicit}
\end{equation}
The two differential representatives are different, but they encode the same
graded Fock relaxation operator.

The local similarity also acts nontrivially on external derivatives.  Since
$\mathcal S_W$ is multiplication by $\sqrt W$,
\begin{equation}
\Omega_\mu^{(W)}
:=(\partial_\mu\mathcal S_W)\mathcal S_W^{-1}
=\frac{\partial_\mu u_\beta}{2\sqrt\theta}\,\xi_\beta
+\frac{\partial_\mu\theta}{4\theta}
\left(\xi_\beta\xi_\beta-d\right),
\label{eq:cov_Omega_W}
\end{equation}
and hence
\begin{equation}
\mathcal S_W\partial_\mu\mathcal S_W^{-1}
=\partial_\mu-\Omega_\mu^{(W)}.
\label{eq:cov_derivative_W}
\end{equation}
At the Fock level the corresponding realization connectors satisfy the
general inhomogeneous transformation law of
Eq.~\eqref{eq:cov_G_transformation}; the extra term generated by
Eq.~\eqref{eq:cov_derivative_W} compensates it in the complete propagation
operator.  Consequently,
\begin{equation}
\mathcal K_H
=\mathcal S_W\mathcal K_P\mathcal S_W^{-1},
\qquad
\mathcal R_H^{-1}\mathcal K_H\mathcal R_H
=\mathcal R_P^{-1}\mathcal K_P\mathcal R_P
=\widehat{\mathcal K}.
\label{eq:cov_complete_generator_HP}
\end{equation}

The example therefore tests all parts of the construction with very different
coordinate representatives: the vacuum, Weyl generators, differential
connection, coordinate relaxation operator, and physical reconstruction all
change, whereas the abstract Fock generator and the reconstructed physical
distribution do not.  The square-root/Hermite-function realization is the
reference used in this paper because it makes the LFH relaxation sector
particularly simple.  Equivalent realizations inherit their reconstruction
from this reference through the known similarity map.  No claim is needed
here about coordinate representations that are not connected to this
similarity class.

\section{Conclusion}\label{sec:conclusion}

We have constructed an exact Fock-space representation of the force-free LFH
kinetic equation while keeping separate three logically distinct ingredients:
the parameterized kinetic generator, its coordinate realization, and the
compatibility between local parameters and physical moments.  The familiar
square-root Maxwellian transformation is used because it turns the LFH
relaxation into the number operator and supplies an elementary grading; the
representation-theoretic content begins with the subsequent intertwining of
the complete kinetic dynamics.

The main technical extension beyond the local oscillator algebra is the
pull-back of external derivatives.  A space--time dependent realization map
necessarily produces the differential connector
$\mathcal R^{-1}(\partial_\mu\mathcal R)$.  Its matrix elements are obtained
constructively by differentiating the local coordinate basis.  Once the full
propagation operator is intertwined, however, its moments no longer require
an explicit connector calculation: the transport--moment theorem follows
directly from the intertwining of the composite physical realization.  This
provides a substantial economy in the hydrodynamic projections and makes
clear that the cancellation between the differential and local-Maxwellian
connection sectors is structural rather than a peculiarity of Hermite
identities.

Compatibility was formulated both algebraically and dynamically.  Off the
compatibility manifold, the parameterized Fock relation yields exact moment
equations with relaxation production terms measuring the mismatch between the
represented state and the local parameters.  On compatibility these terms
vanish.  Conversely, if the local parameters are evolved with the compatible
balance laws, the compatibility residuals satisfy a homogeneous system and
remain zero exactly.  The dynamical formulation treats the motion of the local
realization as part of the kinetic dynamics and appears particularly suitable
for extensions in which an explicit moment fixed point may be unavailable.

In the hydrodynamic limit, the number operator provides more than a spectral
analogy with the harmonic oscillator: it grades the kinetic correction and
allows the relaxation inverse to be taken level by level.  Euler
compatibility removes the vacuum, level-$1$, and scalar level-$2$ components
of the first source.  The remaining traceless level-$2$ sector and the vector component of
level $3$ produce the viscous stress and heat flux, respectively, and recover the
Navier--Stokes--Fourier coefficients of the LFH model.  No explicit Hermite
functions are needed at this stage; their role has already been encoded in
the Fock operators and differential connection.

Finally, covariance under local changes of coordinate realization shows which
parts of the construction are intrinsic.  Hermite-polynomial and
Hermite-function coordinates have different vacua, differential generators,
connections, and moment weights, yet all pull back to the same abstract Fock
generator and the same physical moment functionals.  The square-root/Hermite-
function realization is therefore a convenient construction, not a privileged
physical description.

The LFH model was chosen precisely because its local relaxation sector is so
simple.  The structures that remain after this simplification---intertwining
of propagation, moving local realizations, dual moment functionals,
compatibility, and covariance---do not depend conceptually on the oscillator
interpretation of that sector.  This suggests studying the same architecture
for kinetic models with different local relaxation operators, where the
number grading may be replaced by another useful spectral or block
decomposition.

A natural next test is the nonlinear Boltzmann collision operator.  In
particular, Maxwell molecules offer a setting in which the propagation sector
is unchanged while the local collision sector is genuinely nonlinear and has
a nontrivial algebraic structure.  A Fock-space treatment of that case would
test how far the present intertwining, reconstruction, compatibility, and
grading framework extends beyond the diagonal LFH relaxation operator.  We
plan to address this problem in subsequent work.

\section*{Acknowledgement of AI assistance}
During the development and preparation of this manuscript, the author used
ChatGPT (OpenAI, GPT-5.6 Sol) as an interactive research and writing assistant.
Its use included discussion and critical examination of mathematical
formulations, cross-checking of derivations, development of the presentation
and organization of the manuscript, identification of notational and
expository inconsistencies, and assistance with editorial revision and
bibliographic verification.  All mathematical results, arguments,
interpretations, references, and final text were reviewed and accepted by the
author, who assumes full responsibility for the content of the manuscript.

\appendix




\section{Differential Connection in the Hermite-Function Realization}
\label{app:hermite_differential_connectors}

This appendix gives a detailed derivation and a set of independent consistency
checks for the differential connectors associated with the parameter-dependent
Hermite-function realization. Particular attention is paid to all numerical
prefactors, signs, normalization factors, and summation conventions. To simplify notation within the appendix, we write $\mathcal R=\mathcal R_H$ and drop the superscript $(H)$ on the Hermite-function coordinate basis and its differential connectors.

\subsection{Definitions and conventions}

Let
\begin{equation}
    c_\alpha = v_\alpha-u_\alpha,
    \qquad
    \xi_\alpha=\frac{c_\alpha}{\sqrt\theta},
    \label{eq:app_xi_definition}
\end{equation}
where the derivatives with respect to the macroscopic fields $u_\alpha$ and
$\theta$ are always taken at fixed physical velocity $\bm v$. Consequently,
\begin{equation}
    \frac{\partial \xi_\alpha}{\partial u_\beta}
    =-\frac{\delta_{\alpha\beta}}{\sqrt\theta},
    \qquad
    \frac{\partial \xi_\alpha}{\partial\theta}
    =-\frac{\xi_\alpha}{2\theta}.
    \label{eq:app_xi_parameter_derivatives}
\end{equation}

The local Maxwellian weight and its square root are
\begin{align}
    W
    &=
    \frac{1}{(2\pi\theta)^{d/2}}
    \exp\!\left(-\frac{c_\alpha c_\alpha}{2\theta}\right),
    \
    \sqrt W
    =
    \frac{1}{(2\pi\theta)^{d/4}}
    \exp\!\left(-\frac{\xi_\alpha\xi_\alpha}{4}\right).
    \label{eq:app_sqrtW}
\end{align}
Repeated Cartesian indices are summed from $1$ to $d$ unless stated otherwise.

The coordinate ladder operators are
\begin{equation}
    \mathcal A_\alpha
    =
    \frac{\partial}{\partial\xi_\alpha}
    +\frac{\xi_\alpha}{2},
    \qquad
    \mathcal A_\alpha^\dagger
    =
    -\frac{\partial}{\partial\xi_\alpha}
    +\frac{\xi_\alpha}{2},
    \label{eq:app_coordinate_ladders}
\end{equation}
so that
\begin{equation}
    [\mathcal A_\alpha,\mathcal A_\beta^\dagger]
    =\delta_{\alpha\beta},
    \qquad
    \mathcal A_\alpha\sqrt W=0.
\end{equation}
The inverse relations are
\begin{equation}
    \frac{\partial}{\partial\xi_\alpha}
    =\frac12\left(\mathcal A_\alpha-\mathcal A_\alpha^\dagger\right),
    \qquad
    \xi_\alpha
    =\mathcal A_\alpha+\mathcal A_\alpha^\dagger.
    \label{eq:app_inverse_ladder_relations}
\end{equation}

The normalized coordinate images of the Fock basis are
\begin{equation}
    \varphi_{\bm n}
    =
    \frac{1}{\sqrt{\bm n!}}
    \prod_{\alpha=1}^{d}
    \left(\mathcal A_\alpha^\dagger\right)^{n_\alpha}
    \sqrt W,
    \label{eq:app_basis_operator_form}
\end{equation}
where
\begin{equation}
    |\bm n|=\sum_{\alpha=1}^{d}n_\alpha,
    \qquad
    \bm n!=\prod_{\alpha=1}^{d}n_\alpha!.
\end{equation}
Equivalently,
\begin{equation}
    \varphi_{\bm n}(\bm\xi)
    =
    \sqrt W\,
    \frac{1}{\sqrt{2^{|\bm n|}\bm n!}}
    H_{\bm n}\!\left(\frac{\bm\xi}{\sqrt2}\right).
    \label{eq:app_basis_explicit_form}
\end{equation}
The normalization is chosen such that
\begin{equation}
   \langle \varphi_{\bm m},
    \varphi_{\bm n}\rangle=\int_{\mathbb R^d}
    \varphi_{\bm m}(\bm v)
    \varphi_{\bm n}(\bm v)
    \,d^dv
    =\delta_{\bm m,\bm n}.
    \label{eq:app_basis_orthonormality}
\end{equation}
Indeed, since $d^dv=\theta^{d/2}d^d\xi$,
\begin{equation}
    W\,d^dv
    =
    \frac{1}{(2\pi)^{d/2}}
    \exp\!\left(-\frac{\bm\xi^2}{2}\right)d^d\xi,
\end{equation}
and the change of variables $\bm y=\bm\xi/\sqrt2$ reduces
\eqref{eq:app_basis_orthonormality} to the standard orthogonality relation for
the physicists' Hermite polynomials.

The ladder action on the normalized basis is
\begin{align}
    \mathcal A_\alpha\varphi_{\bm n}
    &=
    \sqrt{n_\alpha}\,
    \varphi_{\bm n-\bm e_\alpha},
    \
    \mathcal A_\alpha^\dagger\varphi_{\bm n}
    =
    \sqrt{n_\alpha+1}\,
    \varphi_{\bm n+\bm e_\alpha},
    \label{eq:app_ladder_action}
\end{align}
where $\bm e_\alpha$ denotes the unit multi-index in direction $\alpha$.

Let $\mathcal R$ denote the realization map
\begin{equation}
    \mathcal R|\bm n\rangle=\varphi_{\bm n}.
\end{equation}
For any realization parameter $\lambda$, the corresponding differential
connector is
\begin{equation}
    \widehat{\mathcal C}_\lambda
    =
    \mathcal R^{-1}(\partial_\lambda\mathcal R).
    \label{eq:app_connector_definition}
\end{equation}
It is characterized by
\begin{equation}
    \partial_\lambda\varphi_{\bm n}
    =
    \mathcal R\widehat{\mathcal C}_\lambda|\bm n\rangle.
    \label{eq:app_connector_basis_characterization}
\end{equation}

\subsection{Velocity connector}

At fixed $\bm v$, dependence on $u_\beta$ enters only through $\bm\xi$.
Equation \eqref{eq:app_xi_parameter_derivatives} therefore gives
\begin{equation}
    \partial_{u_\beta}
    =
    -\frac{1}{\sqrt\theta}
    \frac{\partial}{\partial\xi_\beta}.
    \label{eq:app_u_chain_rule}
\end{equation}
Using \eqref{eq:app_inverse_ladder_relations},
\begin{align}
    \partial_{u_\beta}\varphi_{\bm n}
    &=-\frac{1}{\sqrt\theta}
    \frac{\partial\varphi_{\bm n}}{\partial\xi_\beta}
    \
    =
    \frac{1}{2\sqrt\theta}
    \left(\mathcal A_\beta^\dagger-\mathcal A_\beta\right)
    \varphi_{\bm n}.
    \label{eq:app_u_derivative_operator}
\end{align}
By \eqref{eq:app_ladder_action},
\begin{equation}
    \partial_{u_\beta}\varphi_{\bm n}
    =
    \frac{1}{2\sqrt\theta}
    \left[
        \sqrt{n_\beta+1}\,
        \varphi_{\bm n+\bm e_\beta}
        -
        \sqrt{n_\beta}\,
        \varphi_{\bm n-\bm e_\beta}
    \right].
    \label{eq:app_u_derivative_basis}
\end{equation}
It follows that
\begin{equation}
    \widehat{\mathcal C}_{u_\beta}
    =
    \frac{1}{2\sqrt\theta}
    \left(
        \hat a_\beta^\dagger-\hat a_\beta
    \right)
    .
    \label{eq:app_velocity_connector}
\end{equation}

\subsubsection{Direct vacuum check}

For the vacuum state,
\begin{align}
    \partial_{u_\beta}\sqrt W
    &=
    \sqrt W\,
    \partial_{u_\beta}
    \left(-\frac{c_\alpha c_\alpha}{4\theta}\right)
    =
    \frac{c_\beta}{2\theta}\sqrt W
    =
    \frac{\xi_\beta}{2\sqrt\theta}\sqrt W.
    \label{eq:app_u_vacuum_direct}
\end{align}
Since
\begin{equation}
    \mathcal A_\beta^\dagger\sqrt W
    =\xi_\beta\sqrt W,
\end{equation}
we recover
\begin{equation}
    \partial_{u_\beta}\varphi_{\bm0}
    =
    \frac{1}{2\sqrt\theta}
    \varphi_{\bm e_\beta},
    \label{eq:app_u_vacuum_check}
\end{equation}
which independently fixes both the sign and the factor $1/2$ in
\eqref{eq:app_velocity_connector}.

\subsection{Temperature connector}

The temperature dependence of $\varphi_{\bm n}$ enters both through the
normalization factor $\theta^{-d/4}$ and through the scaled variable
$\bm\xi$. At fixed $\bm v$,
\begin{equation}
    \partial_\theta\varphi_{\bm n}
    =
    -\frac{d}{4\theta}\varphi_{\bm n}
    -\frac{1}{2\theta}
    \xi_\alpha
    \frac{\partial\varphi_{\bm n}}{\partial\xi_\alpha}.
    \label{eq:app_theta_chain_rule}
\end{equation}
From \eqref{eq:app_inverse_ladder_relations},
\begin{align}
    \xi_\alpha\frac{\partial}{\partial\xi_\alpha}
    =
    \frac12
    (\mathcal A_\alpha+\mathcal A_\alpha^\dagger)
    (\mathcal A_\alpha-\mathcal A_\alpha^\dagger)
    =
    \frac12
    \left(
        \mathcal A_\alpha^2
        -\mathcal A_\alpha\mathcal A_\alpha^\dagger
        +\mathcal A_\alpha^\dagger\mathcal A_\alpha
        -(\mathcal A_\alpha^\dagger)^2
    \right)
=
    \frac12
    \left(
        \mathcal A_\alpha^2
        -(\mathcal A_\alpha^\dagger)^2
        -I
    \right),
    \label{eq:app_xi_partial_identity}
\end{align}
where $\mathcal A_\alpha\mathcal A_\alpha^\dagger
=\mathcal A_\alpha^\dagger\mathcal A_\alpha+I$ was used in the last step.
Substitution into \eqref{eq:app_theta_chain_rule} gives
\begin{align}
    \partial_\theta\varphi_{\bm n}
    &=
    -\frac{d}{4\theta}\varphi_{\bm n}
    +
    \frac{1}{4\theta}
    \sum_{\alpha=1}^{d}
    \left[
        (\mathcal A_\alpha^\dagger)^2
        -\mathcal A_\alpha^2
        +I
    \right]
    \varphi_{\bm n}
    =
    \frac{1}{4\theta}
    \sum_{\alpha=1}^{d}
    \left[
        (\mathcal A_\alpha^\dagger)^2
        -\mathcal A_\alpha^2
    \right]
    \varphi_{\bm n}.
    \label{eq:app_theta_derivative_operator}
\end{align}
The scalar term cancels because
\begin{equation}
    -\frac{d}{4\theta}I
    +\frac{1}{4\theta}\sum_{\alpha=1}^{d}I=0.
    \label{eq:app_scalar_cancellation}
\end{equation}
This cancellation depends essentially on the factor $\theta^{-d/4}$ in the
normalization of the Hermite functions.

Using \eqref{eq:app_ladder_action},
\begin{equation}
    \partial_\theta\varphi_{\bm n}
    =
    \frac{1}{4\theta}
    \sum_{\alpha=1}^{d}
    \Bigl[
        \sqrt{(n_\alpha+1)(n_\alpha+2)}\,
        \varphi_{\bm n+2\bm e_\alpha}
        -
        \sqrt{n_\alpha(n_\alpha-1)}\,
        \varphi_{\bm n-2\bm e_\alpha}
    \Bigr].
    \label{eq:app_theta_derivative_basis}
\end{equation}
Therefore,
\begin{equation}
    \widehat{\mathcal C}_\theta
    =
    \frac{1}{4\theta}
    \sum_{\alpha=1}^{d}
    \left[
        (\hat a_\alpha^\dagger)^2
        -\hat a_\alpha^2
    \right]
    .
    \label{eq:app_temperature_connector}
\end{equation}

\subsubsection{Direct vacuum check}

Direct differentiation of \eqref{eq:app_sqrtW} gives
\begin{align}
    \partial_\theta\sqrt W
    &=
    \left(
        -\frac{d}{4\theta}
        +\frac{c_\alpha c_\alpha}{4\theta^2}
    \right)
    \sqrt W
    =
    \frac{1}{4\theta}
    (\bm\xi^2-d)\sqrt W.
    \label{eq:app_theta_vacuum_direct}
\end{align}
On the other hand,
\begin{align}
    \sum_{\alpha=1}^{d}
    (\mathcal A_\alpha^\dagger)^2\sqrt W
    &=
    \sum_{\alpha=1}^{d}
    (\xi_\alpha^2-1)\sqrt W
    =
    (\bm\xi^2-d)\sqrt W,
    \label{eq:app_theta_vacuum_ladder}
\end{align}
and $\mathcal A_\alpha^2\sqrt W=0$. Hence
\begin{equation}
    \partial_\theta\varphi_{\bm0}
    =
    \frac{1}{4\theta}
    \sum_{\alpha=1}^{d}
    \sqrt2\,
    \varphi_{2\bm e_\alpha},
    \label{eq:app_theta_vacuum_check}
\end{equation}
which confirms \eqref{eq:app_temperature_connector}.

\subsection{Spacetime differential connector}

Let $x^\mu$ denote either time or a spatial coordinate. Since the realization
depends on $x^\mu$ only through the fields $u_\beta(x)$ and $\theta(x)$, the
chain rule gives
\begin{equation}
    \widehat{\mathcal C}_\mu
    =
    (\partial_\mu u_\beta)
    \widehat{\mathcal C}_{u_\beta}
    +
    (\partial_\mu\theta)
    \widehat{\mathcal C}_\theta.
    \label{eq:app_spacetime_connector_chain_rule}
\end{equation}
Combining \eqref{eq:app_velocity_connector} and
\eqref{eq:app_temperature_connector},
\begin{equation}
    \widehat{\mathcal C}_\mu
    =
    \frac{\partial_\mu u_\beta}{2\sqrt\theta}
    \left(
        \hat a_\beta^\dagger-\hat a_\beta
    \right)
    +
    \frac{\partial_\mu\theta}{4\theta}
    \sum_{\alpha=1}^{d}
    \left[
        (\hat a_\alpha^\dagger)^2
        -\hat a_\alpha^2
    \right]
    .
    \label{eq:app_spacetime_connector_final}
\end{equation}

For the transport operator
\begin{equation}
    \mathcal K_{\mathrm{tr}}
    =
    \partial_t
    +
    \left[
        u_\alpha
        +\sqrt\theta
        (\mathcal A_\alpha+\mathcal A_\alpha^\dagger)
    \right]
    \partial_\alpha,
\end{equation}
the pull-back under the realization map takes the form
\begin{equation}
    \widehat{\mathcal K}_{\mathrm{tr}}
    =
    \partial_t+\widehat{\mathcal C}_t
    +
    \left[
        u_\alpha I
        +\sqrt\theta
        (\hat a_\alpha+\hat a_\alpha^\dagger)
    \right]
    (\partial_\alpha+\widehat{\mathcal C}_\alpha).
    \label{eq:app_transport_pullback}
\end{equation}
Equation \eqref{eq:app_spacetime_connector_final} supplies the explicit
Hermite-function realization of all differential connectors appearing in
\eqref{eq:app_transport_pullback}.

\subsection{Independent consistency checks}

\subsubsection{Anti-Hermiticity}

Differentiating the orthonormality relation
\eqref{eq:app_basis_orthonormality} with respect to any realization parameter
$\lambda$ gives
\begin{equation}
    \langle\partial_\lambda\varphi_{\bm m},\varphi_{\bm n}\rangle
    +
    \langle\varphi_{\bm m},\partial_\lambda\varphi_{\bm n}\rangle
    =0.
    \label{eq:app_orthonormality_derivative}
\end{equation}
Thus the connector matrix must be anti-Hermitian,
\begin{equation}
    \widehat{\mathcal C}_\lambda^\dagger
    =-\widehat{\mathcal C}_\lambda.
    \label{eq:app_connector_antihermiticity}
\end{equation}
The derived formulas satisfy this property because
\begin{equation}
    (\hat a_\beta^\dagger-\hat a_\beta)^\dagger
    =
    -(\hat a_\beta^\dagger-\hat a_\beta),
\end{equation}
and
\begin{equation}
    \left[(\hat a_\alpha^\dagger)^2-\hat a_\alpha^2\right]^\dagger
    =
    -\left[(\hat a_\alpha^\dagger)^2-\hat a_\alpha^2\right].
\end{equation}

\subsubsection{Dimensional analysis}

The abstract ladder operators are dimensionless. Since $\sqrt\theta$ has the
dimension of velocity,
\begin{equation}
    [\widehat{\mathcal C}_{u_\beta}]
    =[u]^{-1},
    \qquad
    [\widehat{\mathcal C}_\theta]
    =[\theta]^{-1},
\end{equation}
as required for derivatives with respect to $u_\beta$ and $\theta$.
Furthermore,
\begin{equation}
    [\widehat{\mathcal C}_\mu]=[\partial_\mu],
\end{equation}
so that each sum $\partial_\mu+\widehat{\mathcal C}_\mu$ is dimensionally
homogeneous.

\subsubsection{Low-level matrix elements}

The first few nonzero actions provide simple tests of all occupation-number
factors. In one Cartesian direction,
\begin{align}
    \widehat{\mathcal C}_{u}|0\rangle
    &=
    \frac{1}{2\sqrt\theta}|1\rangle,
    \
    \widehat{\mathcal C}_{u}|1\rangle
    =
    \frac{1}{2\sqrt\theta}
    \left(\sqrt2|2\rangle-|0\rangle\right),
    \
    \widehat{\mathcal C}_{u}|2\rangle
    =
    \frac{1}{2\sqrt\theta}
    \left(\sqrt3|3\rangle-\sqrt2|1\rangle\right),
    \label{eq:app_low_u_states}
\end{align}
and
\begin{align}
    \widehat{\mathcal C}_{\theta}|0\rangle
    &=
    \frac{\sqrt2}{4\theta}|2\rangle,
    \
    \widehat{\mathcal C}_{\theta}|1\rangle
    =
    \frac{\sqrt6}{4\theta}|3\rangle,
    \
    \widehat{\mathcal C}_{\theta}|2\rangle
    =
    \frac{1}{4\theta}
    \left(2\sqrt3|4\rangle-\sqrt2|0\rangle\right).
    \label{eq:app_low_theta_states}
\end{align}
The off-diagonal matrix elements obey
\begin{equation}
    \langle m|\widehat{\mathcal C}_\lambda|n\rangle
    =
    -
    \langle n|\widehat{\mathcal C}_\lambda|m\rangle^*,
\end{equation}
in agreement with anti-Hermiticity.

\subsubsection{Parity selection rules}

The velocity connector changes the total occupation number by one,
\begin{equation}
    \Delta|\bm n|=\pm1,
\end{equation}
whereas the temperature connector changes it by two,
\begin{equation}
    \Delta|\bm n|=\pm2.
\end{equation}
Thus variations of the local mean velocity mix opposite-parity Hermite sectors,
while variations of the local temperature preserve total Hermite parity. These
selection rules follow independently from translation and dilation of the local
Gaussian basis, respectively.

\subsection{Summary}

For the normalized Hermite functions
\begin{equation}
    \varphi_{\bm n}
    =
    \sqrt W\,
    \frac{H_{\bm n}(\bm\xi/\sqrt2)}
    {\sqrt{2^{|\bm n|}\bm n!}},
\end{equation}
the elementary differential connectors are
\begin{align}
    \widehat{\mathcal C}_{u_\beta}
    &=
    \frac{1}{2\sqrt\theta}
    (\hat a_\beta^\dagger-\hat a_\beta),
    \
    \widehat{\mathcal C}_\theta
    =
    \frac{1}{4\theta}
    \sum_{\alpha=1}^{d}
    \left[
        (\hat a_\alpha^\dagger)^2-\hat a_\alpha^2
    \right].
\end{align}
The factors $1/(2\sqrt\theta)$ and $1/(4\theta)$, the relative signs, and the
absence of a scalar term in the temperature connector are fixed by the exact
normalization of the Hermite-function basis. They are confirmed independently
by direct vacuum differentiation, orthonormality, anti-Hermiticity,
dimensional analysis, and low-level matrix elements.




\begin{thebibliography}{99}

\bibitem{LebowitzFrischHelfand1960}
J.~L. Lebowitz, H.~L. Frisch, and E. Helfand,
``Nonequilibrium Distribution Functions in a Fluid,''
\emph{Physics of Fluids} \textbf{3} (1960), 325--338.
\href{https://doi.org/10.1063/1.1706037}{doi:10.1063/1.1706037}.

\bibitem{UhlenbeckOrnstein1930}
G.~E. Uhlenbeck and L.~S. Ornstein,
``On the Theory of the Brownian Motion,''
\emph{Physical Review} \textbf{36} (1930), 823--841.
\href{https://doi.org/10.1103/PhysRev.36.823}{doi:10.1103/PhysRev.36.823}.

\bibitem{Kramers1940}
H.~A. Kramers,
``Brownian Motion in a Field of Force and the Diffusion Model of Chemical Reactions,''
\emph{Physica} \textbf{7} (1940), 284--304.
\href{https://doi.org/10.1016/S0031-8914(40)90098-2}{doi:10.1016/S0031-8914(40)90098-2}.

\bibitem{Risken1989}
H. Risken,
\emph{The Fokker--Planck Equation: Methods of Solution and Applications},
2nd ed., Springer Series in Synergetics, Vol.~18, Springer, Berlin (1989).
\href{https://doi.org/10.1007/978-3-642-61544-3}{doi:10.1007/978-3-642-61544-3}.

\bibitem{GradHermite1949}
H. Grad,
``Note on $N$-Dimensional Hermite Polynomials,''
\emph{Communications on Pure and Applied Mathematics} \textbf{2} (1949), 325--330.
\href{https://doi.org/10.1002/cpa.3160020402}{doi:10.1002/cpa.3160020402}.

\bibitem{GradKinetic1949}
H. Grad,
``On the Kinetic Theory of Rarefied Gases,''
\emph{Communications on Pure and Applied Mathematics} \textbf{2} (1949), 331--407.
\href{https://doi.org/10.1002/cpa.3160020403}{doi:10.1002/cpa.3160020403}.

\bibitem{ChapmanCowling1970}
S. Chapman and T.~G. Cowling,
\emph{The Mathematical Theory of Non-Uniform Gases},
3rd ed., Cambridge University Press, Cambridge (1970).

\bibitem{Liboff2003}
R.~L. Liboff,
\emph{Kinetic Theory: Classical, Quantum, and Relativistic Descriptions},
3rd ed., Springer, New York (2003).
\href{https://doi.org/10.1007/b97467}{doi:10.1007/b97467}.

\bibitem{Fock1932}
V.~A. Fock,
``Konfigurationsraum und zweite Quantelung,''
\emph{Zeitschrift f{\"u}r Physik} \textbf{75} (1932), 622--647.
\href{https://doi.org/10.1007/BF01344458}{doi:10.1007/BF01344458}.

\bibitem{Berezin1966}
F.~A. Berezin,
\emph{The Method of Second Quantization},
Academic Press, New York (1966).

\bibitem{Hall2013}
B.~C. Hall,
\emph{Quantum Theory for Mathematicians},
Graduate Texts in Mathematics, Vol.~267, Springer, New York (2013).
\href{https://doi.org/10.1007/978-1-4614-7116-5}{doi:10.1007/978-1-4614-7116-5}.

\bibitem{Folland1989}
G.~B. Folland,
\emph{Harmonic Analysis in Phase Space},
Annals of Mathematics Studies, Vol.~122, Princeton University Press, Princeton, NJ (1989).

\bibitem{Doi1976}
M. Doi,
``Second Quantization Representation for Classical Many-Particle System,''
\emph{Journal of Physics A: Mathematical and General} \textbf{9} (1976), 1465--1477.
\href{https://doi.org/10.1088/0305-4470/9/9/008}{doi:10.1088/0305-4470/9/9/008}.

\bibitem{Peliti1985}
L. Peliti,
``Path Integral Approach to Birth--Death Processes on a Lattice,''
\emph{Journal de Physique} \textbf{46} (1985), 1469--1483.
\href{https://doi.org/10.1051/jphys:019850046090146900}{doi:10.1051/jphys:019850046090146900}.

\bibitem{JennyTorrilhonHeinz2010}
P. Jenny, M. Torrilhon, and S. Heinz,
``A Solution Algorithm for the Fluid Dynamic Equations Based on a Stochastic Model for Molecular Motion,''
\emph{Journal of Computational Physics} \textbf{229} (2010), 1077--1098.
\href{https://doi.org/10.1016/j.jcp.2009.10.008}{doi:10.1016/j.jcp.2009.10.008}.

\bibitem{GorjiJenny2014}
M.~H. Gorji and P. Jenny,
``An Efficient Particle Fokker--Planck Algorithm for Rarefied Gas Flows,''
\emph{Journal of Computational Physics} \textbf{262} (2014), 325--343.
\href{https://doi.org/10.1016/j.jcp.2013.12.046}{doi:10.1016/j.jcp.2013.12.046}.

\bibitem{GorjiTorrilhon2021}
M.~H. Gorji and M. Torrilhon,
``Entropic Fokker--Planck Kinetic Model,''
\emph{Journal of Computational Physics} \textbf{430} (2021), 110034.
\href{https://doi.org/10.1016/j.jcp.2020.110034}{doi:10.1016/j.jcp.2020.110034}.

\bibitem{LiuEtAl2021}
J.-P. Liu, H.~{\O}. Kolden, H.~K. Krovi, N.~F. Loureiro, K. Trivisa, and A.~M. Childs,
``Efficient Quantum Algorithm for Dissipative Nonlinear Differential Equations,''
\emph{Proceedings of the National Academy of Sciences} \textbf{118} (2021), e2026805118.
\href{https://doi.org/10.1073/pnas.2026805118}{doi:10.1073/pnas.2026805118}.

\bibitem{bravyi2025quantumsimulationnoisyclassical}
S. Bravyi, R. Manson-Sawko, M. Zayats, and S. Zhuk,
``Quantum simulation of a noisy classical nonlinear dynamics,''
arXiv:2507.06198 [quant-ph] (2025).
\url{https://arxiv.org/abs/2507.06198}.

\bibitem{bravyi2026quantumalgorithmsstochasticnonlinear}
S. Bravyi, A. Byrne, M. Zayats, and S. Zhuk,
``Quantum algorithms for stochastic nonlinear differential equations,''
arXiv:2606.08349 [quant-ph] (2026).
\url{https://arxiv.org/abs/2606.08349}.

\end{thebibliography}
\end{document}